\numberwithin{equation}{section}
\newtheorem{theorem}{Theorem}[section]
\newtheorem{lemma}{Lemma}[section]
\newtheorem{definition}{Definition}[section]
\newtheorem{corollary}{Corollary}[section]
\newtheorem{remark}{Remark}[section]
\newtheorem{example}{Example}[section]
\newcommand{\mR}{\mathbb{R}}
\newcommand{\mC}{\mathbb{C}}
\newcommand{\mZ}{\mathbb{Z}}
\newcommand{\mK}{\mathbb{K}}
\newcommand{\cM}{\mathcal{M}}
\newcommand{\cC}{\mathcal{C}}
\newcommand{\cK}{\mathcal{K}}
\newcommand{\cG}{\mathcal{G}}
\newcommand{\cW}{\mathcal{W}}
\newcommand{\cO}{\mathcal{O}}
\newcommand{\cV}{\mathcal{V}}
\newcommand{\cN}{\mathcal{N}}
\newcommand{\cU}{\mathcal{U}}
\newcommand{\cA}{\mathcal{A}}
\newcommand{\cB}{\mathcal{B}}
\newcommand{\nd}{{\rm and }}
\newcommand{\osp}{\mathfrak{osp}(m|2n)}
\newcommand{\End}{{\rm{End}}}
\newcommand{\Hom}{{\rm{Hom}}}
\begin{document}
\title[Invariant integration on supergroups]
{Invariant integration on orthosymplectic and unitary supergroups}

\author{K. Coulembier}

\address{Department of Mathematical Analysis, Ghent University,  Krijgslaan 281, 9000 Gent,
Belgium}
\thanks{The first author is a Ph.D. Fellow of the Research Foundation - Flanders (FWO)}
\email{Coulembier@cage.ugent.be}

\author{R.B.  Zhang}
\address{School of Mathematics and Statistics,
University of Sydney, Sydney, Australia}
\email{ruibin.zhang@sydney.edu.au}

\begin{abstract}
The orthosymplectic supergroup $OSp(m|2n)$ and unitary supergroup $U(p|q)$ are studied following a new approach that starts from Harish-Chandra pairs and links the sheaf-theoretical supermanifold approach of Berezin and others with the differential geometry approach of Rogers and others. The matrix elements of the fundamental representation of the Lie supergroup $\cG$ are expressed in terms of functions on the product supermanifold $\cG_0\otimes\mR^{0|N}$, with $\cG_0$ the underlying Lie group and $N$ the odd dimension of $\cG$. This product supermanifold is isomorphic to the supermanifold of $\cG$. This leads to a new expression for the standard generators of the corresponding Lie superalgebra $\mathfrak{g}$ as invariant derivations on $\cG$. Using these results a new and transparent formula for the invariant integrals on $OSp(m|2n)$ and $U(p|q)$ is obtained.
\end{abstract}
\subjclass[2010]{58C50, 58A50, 17B99, 16T05}
\keywords{Orthosymplectic and unitary supergroup, supermanifolds, invariant integration}

\maketitle


\tableofcontents

\section{Introduction}

Supersymmetry was first introduced into particle physics in the 70s. Since then it has had enormous impact on the development of theoretical high energy physics, particularly on the theories of supergravity and superstrings. Supersymmetry has also permeated many other areas of physics such as random matrix theory and condensed matter physics (see e.g. \cite{MR0708812, MR0820690, MR1421928}).

The algebraic foundation of supersymmetry lies in the theory of Lie supergroups. In this article we study the orthosymplectic supergroup $OSp(m|2n)$ and unitary supergroup $U(p|q)$, in particular, invariant integration on them. Invariant integration on Lie supergroups requires a super version of the unique Haar measure on locally compact Lie groups, which will be developed here within the context of the superalgebras of functions on supergroups with comultiplications (when restricted to the regular functions, they become Hopf superalgebras). More specifically, we wish to construct formulae of the form \eqref{GenexprInt}. An outline of our construction is given in Section \ref{overzicht}, which should be helpful for understanding the aims and main results of the paper without going through all the technical details. 

Our main results are explicit and transparent formulae for the invariant integrals (that is, generalisations of Haar measures) on $OSp(m|2n)$ and $U(p|q)$, which are respectively given in Theorem \ref{InvInt} and Theorem \ref{InvInt2}. The corresponding results, but written in the approach to supergroups of \cite{MR778559, MR1124825, MR0725288, MR574696}, are also presented in section \ref{sect:UOSp}. The new expressions for the invariant integrals enable us to prove their non-degeneracy (see Corollary \ref{Int-nondegeneracy} and Corollary \ref{Int-nondegeneracy2} for the precise statements), which is a question that could not be resolved previously.

The invariant integrals obtained in the present paper can be applied directly to random matrix theory. They can be used to extend results on integrals of monomials on classical Lie groups to supergroups (see e.g. \cite{MR2217291, MR2385262}), and to generalize Itzykson-Zuber or Harish-Chandra integrals to Lie supergroups (see e.g. \cite{GK, MR2081650, MR2492638, MR1421928}).  One simple application to random matrix theory is discussed Section \ref{sect:UOSp}, where we give an explicit construction of the invariant measure on $UOSp(m|2n)$ implicitly used in \cite{MR2081650}. It was demonstrated in \cite{MR1421928} that Harish-Chandra-Itzykson-Zuber integrals on supergroups naturally appeared in the theory of disordered metals within a nonlinear $\sigma$-model approach (see e.g. \cite{MR0708812}) to computing physical quantities at low wave lengths. It was the supersymmetric techniques that led to considerable progress in analytical computations in this area.

From a mathematical point of view, the expressions for the integral over supergroups in the present paper can help to develop Fourier theory (as in \cite{Hilgert}) on Lie supergroups. Note that the invariant integral for the abelian Lie supergroup $\mR^{p|q}$ corresponds to the ordinary Berezin integral \cite{Berezin, MR574696, MR1845224}. Finally the quantum and graded Haar measure is an important tool in the study of quantum and supergroups, see e.g. \cite{MR1462497}.


Our approach to invariant integrals on Lie supergroups is algebraical rather than measure theoretical as, for instance, in \cite{MR2667819}. This approach was started in \cite{MR1845224, MR2172158}. Earlier work on integrals on Hopf algebras can be found in e.g. \cite{MR1462497, MR0207883, MR0242840}. In \cite{MR1845224} the uniqueness of invariant integrals over Lie supergroups was proven. In \cite{MR2172158} the existence of this invariant integral was proven for all complex finite dimensional classical Lie superalgebras. The Lie superalgebra $\mathfrak{osp}(m|2n)$ of the orthosymplectic supergroup and the Lie superalgebra $\mathfrak{u}(p|q)$ of the unitary supergroup are real forms of such classical Lie superalgebras. An explicit formula for the invariant integral was presented in \cite{MR2172158}. However, the resulting expression is very unwieldy in general; only in the cases $OSp(1|2n)$, $OSp(2|2n)$ and $OSp(3|2)$ the formula \cite{MR1845224} becomes readily usable for practical computations. In the present paper a new approach to $OSp(m|2n)$ and $U(p|q)$ will be adopted, which will produce a transparent expression for the unique invariant integrals on these Lie supergroups for all values of $(m,n)$ and $(p,q)$ respectively.

There are different approaches to Lie supergroups. One approach starts from the sheaf-theoretical formulation of supermanifolds (\cite{Berezin, MR0580292, MR0760837}) and defines Lie supergroups as supermanifolds equipped with super multiplications. This is the approach that will be followed in this article, which implies that a supermanifold is by definition an object of this category. Although this approach is elegant, explicit examples of Lie supergroups are more easily obtained in other approaches. In the approach to supermanifolds taken in \cite{MR778559, MR574696}, a Lie supergroup is defined as a matrix group of supermatrices with their entries belonging to a certain Grassmann algebra $\Lambda_Q$ of degree $Q$ with $Q$ `big enough', see e.g. \cite{MR1124825, MR0725288}. These two approaches are known to be equivalent \cite{MR0554332}. Another way to explicitly construct Lie supergroups makes use of the equivalence between the categories of Lie supergroups and of super Harish-Chandra pairs \cite{MR2555976, MR0760837}, where a super Harish-Chandra pair consists of a Lie superalgebra and a Lie group with certain compatibility conditions.

In this paper we define the orthosymplectic supergroup $OSp(m|2n)$ and the unitary supergroup $U(p|q)$ as the unique Lie supergroups corresponding to the Harish-Chandra pairs $(O(m)\times Sp(2n),\osp)$ and $(U(p)\times U(q),\mathfrak{u}(p|q))$. We then construct the functor from Harish-Chandra pairs to Lie supergroups, $(\mathfrak{g},\cG_0)\to\cG$, in a new way. In order to do this, we consider the fundamental representation of the Lie supergroup. This corresponds to the defining representation of the Harish-Chandra pair.  We express the matrix elements of the representation in terms of functions on the underlying Lie group taking values in the algebra of $N$ free Grassmann variables (with $N$ the odd dimension of the Lie supergroup). This links the first approach to Lie supergroups with the second, and in some sense, `solves' the second approach by introducing an expression for the matrix entries in terms of the Grassmann variables. This shows that $Q$ `big enough' does in fact mean $Q\ge N$. We also very briefly discuss the interpretation of our formulas for the second approach to supermanifolds. With a minor adjustment we obtain formulas for the invariant measure discussed in e.g. \cite{MR2081650}.

The link between the matrix elements of the fundamental representation and the functions on the Lie supergroup introduces the multiplication on the supermanifold, leading to the structure of a Lie supergroup. Then, we can use the well-known functor from Lie supergroups to Harish-Chandra pairs to derive an action of the Harish-Chandra pair of a Lie supergroup ${\mathcal G}$ on the underlying supermanifold of ${\mathcal G}$. In particular, the Lie superalgebra $\mathfrak g$ acts by derivations.

Inputting this fact into the characterization of invariant integration used on $OSp(m|2n)$ in \cite{CouOSp}, one obtains a differential equation in the Grassmann variables, the solution of which gives rise to a very explicit and transparent formula for the integral on $OSp(m|2n)$. In the cases $OSp(m|2)$ and $OSp(1|2n)$, the result is particularly elegant. Results for the Lie supergroup $U(p|q)$ immediately lead to an expression for the invariant integral on $U(p|q)$, which is very elegant for all values of $p$ and $q$. The new expressions for the integrals on $OSp(m|2n)$ and $U(p|q)$ also enable us to prove their non-degeneracy. Our formula also leads to 
an explicit construction of the invariant measure on $UOSp(m|2n)$ implicitly used in \cite{MR2081650}, see Section \ref{sect:UOSp}.

The paper is organized as follows. First, a short introduction into the theory of Lie supergroups and Lie superalgebras is given. This is followed by a general outline of how we will construct Lie supergroups corresponding to Harish-Chandra pairs. The orthosymplectic superalgebra $\osp$ and unitary superalgebra $\mathfrak{u}(p|q)$ are introduced using their corresponding fundamental representations. These representations are used to obtain an expression for the multiplication of the corresponding Lie supergroups $OSp(m|2n)$ and $U(p|q)$. Then, an expression for the generators of the Lie superalgebras $\osp$ and $\mathfrak{u}(p|q)$ as derivatives on the Lie supergroups, is calculated. Finally, using the obtained results, the expression for the invariant integral on $OSp(m|2n)$ and $U(p|q)$ is calculated. The case $OSp(m|2n)$ needs some extra attention, since $OSp(m|2n;\mC)$ does not have a representable compact real form, see discussion in \cite{MR2276521, MR0714220}. The simplest case $p=1=q$ is used to demonstrate the calculation of integrals of monomials over $U(1|1)$. Then we extend the results to the physical approach to supergroups. Finally a short conclusion gives an overview of the most important results.

\section{Preliminaries}
\label{preliminaries}

We give a short introduction to superspaces, superalgebras, supermanifolds and Lie supergroups. The material can be found in,  e.g., \cite{Berezin, MR2069561}. For a field $\mK$, we introduce the super vector space (i.e., $\mZ_2$-graded vector space) $\cV=\cV_0\oplus\cV_1=\mK^{k|l}$ with the degree $0$ subspace $\cV_0=\mK^k$ and degree $1$ subspace $\cV_1=\mK^l$.  Call
$\cV_0$ and $\cV_1$ the even and odd subspaces of $\cV$ respectively. The standard basis of $\mK^{k|l}$ consists of the vectors $e_j$
for $1\le j\le k+l$, where
\begin{equation}
\label{basisvec}
e_j=(0,\cdots,0,1,0,\cdots,0)\quad\mbox{with $1$ at the $j$-th position}.
\end{equation}
The elements $e_j$ with $1\le j\le k$ span $\cV_0$, and $e_j$ with $k+1\le j\le k+l$ span $\cV_1$. A vector $u$ belonging to $\cV_0\cup\cV_1$ is called homogeneous, and in this case we define $|u|=\alpha$ for $u\in\cV_\alpha$ where $\alpha\in\mZ_2=\mZ/(2\mZ)$.
We also introduce a function
\begin{eqnarray*}
[\cdot]:\{1,2,\cdots, k+l\}\to\mZ_2, \quad
\text{$[j]=0$ if $j\le k$ and $[j]=1$ otherwise}.
\end{eqnarray*}
Then $|e_j|=[j]$ for all $j$.

A superalgebra $\cA$ over $\mK$ is a super vector space equipped with a multiplication $m:\cA\otimes \cA\to\cA$ that preserves the gradation in the sense that
$
|m(a\otimes b)|=|a|+|b|
$
for $a$ and $b$ homogeneous. The sum $|a|+|b|$ is understood to be (mod $2$) since it is inside $\mZ_2$. Hereafter definitions for various algebraic structures will often be given for homogeneous elements only and assumed to be defined for the entire super vector space through linearity.

A super $\mK$-bialgebra is a superalgebra $\cA$ with a comultiplication $\Delta:\cA\to \cA\otimes\cA$, which is an algebra homomorphism and preserves the gradation. Here the multiplication on $\cA\otimes\cA$ for $\cA$ being a superalgebra is defined by
\begin{eqnarray*}
(a\otimes b)(c\otimes d)&=&(-1)^{|b||c|}(ac\otimes bd).
\end{eqnarray*}
The fact that the comultiplication is an algebra homomorphism can then be expressed as $\Delta(ab)=\Delta(a)\Delta(b)$ for $a,b\in\cA$.

The algebra of regular functions on a Lie group becomes a bialgebra with comultiplication given by the pullback of the multiplication, $\Delta(f)(A\otimes B)=f(A\cdot B)$ for $f\in\cC^\infty(\cG_0)$ and $A,B\in\cG_0$ for a Lie group $\cG_0$. For a matrix Lie group the matrix elements $x_{ij}\in\cC^\infty(\cG_0)$ (defined by $x_{ij}(A)=A_{ij}$ for $A\in\cG_0$) clearly satisfy $\Delta(x_{ij})=\sum_{k}x_{ik}\otimes x_{kj}$.

\begin{definition}
\label{defHopfrep}
A corepresentation $\chi$ of a super bialgebra $\cA$ with comultiplication $\Delta$ on a super vector space $\cV$ is a gradation preserving linear mapping
\begin{eqnarray*}
\chi&:&\cV\to \cA\otimes\cV
\end{eqnarray*}
satisfying $(\epsilon\otimes id_{\cV})\circ\chi=id_{\cV}$ with $\epsilon$ the counit and $(\tau\otimes id_{\cV})\circ(id_\cA\otimes \chi)\circ \chi=(\Delta\otimes id_\cV)\circ\chi$, where the flip operator is defined by $\tau(a\otimes b)=(-1)^{|a||b|}b\otimes a$ for all homogeneous $a,b\in\cA$.
\end{definition}

For any two super vector spaces $\cV=\cV_0\oplus \cV_1$ and $\cW=\cW_0\oplus \cW_1$, we denote by $\Hom(\cV, \cW)$ the vector space of $\mC$-linear maps from $\cV$ to $\cW$. It is naturally $\mZ_2$-graded, thus is a super vector space,  with
\[
\begin{aligned}
\Hom(\cV, \cW)_0=\Hom(\cV_0, \cW_0)\oplus\Hom(\cV_1, \cW_1), \\
\Hom(\cV, \cW)_1=\Hom(\cV_0, \cW_1)\oplus\Hom(\cV_1, \cW_0).
\end{aligned}
\]
We denote $\Hom(\cV, \cV)$ by $\End(\cV)$, which forms an associative superalgebra under the composition of linear maps.

The super vector space $\End(\mK^{k|l})$ can be identified with the super vector space $\mK^{(k+l)\times(k+l)}$ of matrices.
It has a homogeneous basis consisting of elements $E_{ab}$ given by
\begin{eqnarray*}
E_{ab}\cdot e_c&=&\delta_{bc}e_a,
\end{eqnarray*}
with $\{e_c\mid c=1,\cdots,k+l\}$ the standard basis for $\mK^{k|l}$. The ${\mathbb Z}_2$-gradation is given by $|E_{ab}|=[a]+[b]$. 

Recall that a Lie superalgebra is a superalgebra where the multiplication satisfies a graded version of the Jacobi identity and is super anti-symmetric, see e.g. \cite{MR051963, MR0760837}. Now $\End(\mK^{k|l})$ is a Lie superalgebra with Lie superbracket given by the graded commutator,
\begin{eqnarray}
\label{supercomm}
[X,Y]&=&XY-(-1)^{|X||Y|}YX
\end{eqnarray}
for homogeneous elements $X,Y\in$ $\End(\mK^{k|l})$. This is the general linear Lie superalgebra $\mathfrak{gl}(k|l; \mK)$. We denote the natural representation of $\mathfrak{gl}(k|l;\mK)$ on $\mK^{k|l}$ by $\rho^\pi$. For $X\in \mathfrak{gl}(k|l; \mK)$, we have the
$(k+l)\times(k+l)$-matrix $(X_{a b})$ defined by
\begin{eqnarray}
\label{fundrepgl}
\rho^\pi(X)\cdot e_c&=&\sum_{a=1}^{k+l}X_{ac}e_a.
\end{eqnarray}

The standard definition of a supermanifold (see e.g. \cite{MR0554332, Berezin, MR0580292, MR0760837, MR1124825}) is of a sheaf theoretical nature. A sheaf $\cO$ of superalgebras with unity on an $m$-dimensional manifold $\cM_0$ consists of a map from each open subset $U$ of $\cM_0$ into a superalgebra $\cO(U)$ and a collection of superalgebra homomorphisms $\cO(U)\to\cO(V)$ for each pair of open sets $U$ and $V$ of $\cM_0$ such that $V\subset U$.  These superalgebra homomorphisms, called the restriction maps, satisfy a series of axioms, see, e.g., \cite{MR0580292}. In particular, given a sheaf of superalgebras, one automatically has a sheaf of superalgebras $\cO^{V}$ on each open subset $V$ of $\cM_0$ with $\cO^{V}(U)=\cO(U)$ for each open subset $U$ of $V$. The standard example is the sheaf of smooth functions on a manifold $\cM_0$, denoted by $\cC^\infty_{\cM_0}$. We will also use the notation $\cC^\infty(\cM_0)=\cC^\infty_{\cM_0}(\cM_0)$ for the commutative algebra of smooth functions on the manifold. For a continuous map of manifolds $f:\cM_0\to \cN_0$ and a sheaf $\cO$ on $\cM_0$ the push forward of $\cO$ is a sheaf on $\cN_0$ defined by $f_{\ast}\cO (U)=\cO (f^{-1}(U))$, for each open $U\subset\cN_0$.

\begin{definition}
\label{supmandef}
A supermanifold $\cM=(\cM_0,\cO_{\cM})$ of super dimension $D|N$ is a ringed space with $\cM_0$ a smooth $D$-dimensional manifold and $\cO_{\cM}$ a sheaf of $\mR$-superalgebras with unity on $\cM_0$, where $\cO_{\cM}$ is required to satisfy the local triviality condition that there exists an open cover $\{U_i\}_{i\in I}$ of $\cM_0$ and isomorphisms of sheaves of superalgebras $T_i:\cO_{\cM}^{U_i}\to \cC^\infty_{U_i}\otimes\Lambda_{N}$, where $\Lambda_{N}$ is the Grassmann algebra of degree $N$ (that is generated by $N$ anti-commuting variables).
\end{definition}
Here $\cO_{\cM}$ is the structure sheaf of $\cM$. Its sections (the elements of the superalgebra $\cO_{\cM}(\cM_0)$) are referred to as superfunctions on $\cM$.

\begin{remark}
The existence of $T_i:\cO_{\cM}^{U_i}\to \cC^\infty_{U_i}\otimes\Lambda_{N}$ corresponds to the expansion, in the physics literature, of superfunctions in the Grassmannian coordinates with coefficients being usual smooth functions on $U_i$.
\end{remark}

A morphism of supermanifolds $\Phi:\cM\to\cN$ is a morphism of ringed spaces $(\phi,\phi^\sharp)$. The mapping $\phi: \cM_0\to\cN_0$ is a manifold morphism and $\phi^\sharp:\cO_{\cN}\to\phi_{\ast}\cO_{\cM}$ is a morphism of sheaves local on each stalk. The morphism $\Phi$ is entirely determined \cite{MR0580292} by its induced mapping of sections
\begin{eqnarray*}
\phi^\sharp_{\cN_0}:\cO_\cN(\cN_0)\to\cO_{\cM}\left(\phi^{-1}(\cN_0)\right).
\end{eqnarray*}
To keep the notation simple, we will also denote $\phi^\sharp_{\cN_0}$ by $\phi^\sharp$.

Each point $p\in\cM_0$ defines a morphism $\delta_p:(\{\ast\},\mR)\to(\cM_0,\cO_{\cM}$), with $\{\ast\}$ the manifold consisting of one point. This is defined by
\begin{eqnarray}
\label{evalpunt}
\delta_p^{\sharp}(f)&=&[f]_0(p)
\end{eqnarray}
where $[\cdot]_0$ is the canonical projection $\cO_{\cM}(\cM_0)\to\cC^\infty_{\cM_0}(\cM_0)$, also denoted by $\delta^\sharp_\cM$. The identity morphism on a supermanifold $\cM$ is given by
\begin{eqnarray*}
id_\cM=(id_{\cM_0},id^\sharp_{\cM}):&&\cM\to\cM.
\end{eqnarray*}

We define the supermanifold morphisms $\rho:\cM\to\cM\otimes \cM$ by $\rho_0(u)=u\otimes u$ and $\rho^\sharp(f\otimes g)=fg$,  and $C:\cM\to(\ast,\mR)$ by $C^\sharp(\lambda)=\lambda 1_{\cM}$ with $1_{\cM}=1_{\cM_0}$ the unit function on $\cM_0$.
\begin{definition}
\label{defLiegr}
A Lie supergroup is a supermanifold $\cG=(\cG_0,\cO_\cG)$ equipped with the additional structure of a supermanifold morphism $\mu=(\mu_0,\mu^\sharp):\cG\otimes\cG\to\cG$, an involutive diffeomorphism $\nu=(\nu_0,\nu^\sharp):\cG\to\cG$ and a distinguished point $e_\cG\in\cG_0$. These satisfy
\begin{eqnarray*}
\mu\circ\left(id_{\cG}\otimes \mu\circ \left(id_\cG\otimes id_\cG\right)\right)&=&\mu\circ\left(\mu\circ\left( id_\cG\otimes id_\cG\right)\otimes id_{\cG}\right)\\
\mu\circ\left(id_\cG\otimes \delta_{e_\cG}\right)&=&id_\cG=\mu\circ\left(\delta_{e_\cG}\otimes id_\cG\right)\\
\mu\circ(id_\cG\otimes \nu)\circ\rho&=&\delta_{e_{\cG}}\circ C=\mu\circ( \nu\otimes id_\cG)\circ\rho.
\end{eqnarray*}
\end{definition}
With this definition, the underlying manifold $\cG_0$ of a Lie supergroup $\cG$ becomes a Lie group with multiplication $\mu_0$ and inversion $\nu_0$.

The structure sheaf of a Lie supergroup is always globally split, see e.g. \cite{MR0760837}. This means that for any Lie supergroup $\cG$ of dimension $D|N$, the structure sheaf satisfies
\begin{eqnarray}
\label{sheafsplit}
\cO_{\cG}&\cong &\cC^\infty_{\cG_0}\otimes \Lambda_N.
\end{eqnarray}
In other words, the anti-commuting variables correspond to global coordinates and that the properties of the sheaf $\cO_\cG$ correspond exactly to those of $\cC^\infty_{\cG_0}$.

For the definition of the multiplication $\mu$, the product supermanifold $\cG\otimes\cG$ was introduced. The corresponding sheaf $\cO_{\cG\otimes\cG}$ is a completion of the product of sheaves $\cO_{\cG}\otimes \cO_{\cG}$, see \cite{MR0094682}. Since the sheafs are globally split this completion is the same as that for ordinary Lie groups.
Restricting to the superalgebra $\cO_\cG(\cG_0)$ of regular functions on the supergroup (generated by the matrix elements of representations of $\cG$), we obtain a supercommutative super Hopf algebra (\cite{MR0594432, MR0242840}) with multiplication $\rho^\sharp$, comultiplication $\mu^\sharp$, antipode $\nu^\sharp$, unit $1_\cG$ and counit $\delta^\sharp_{e_\cG}$.

Let $Der\cA$ denote the super vector space of derivations of a super ($\mZ_2$-graded) algebra $\cA$. The homogeneous derivations are the homogeneous endomorphisms $X$ on $\cA$ which satisfy the graded Leibniz rule
\begin{eqnarray*}
X (ab)= X(a) b+ (-1)^{|X||a|}a X(b),\quad \mbox{ for }\, a,b\in\cA\quad \text{(with $a$ homogeneous)}.
\end{eqnarray*}
It is easily checked that $Der\cA$ is closed under the graded commutator \eqref{supercomm}. Therefore the super derivations form a Lie superalgebra.

\begin{definition}
\label{defLiealg1}
The Lie superalgebra $\mathfrak{g}$ corresponding to the Lie supergroup $\cG$ is the algebra with elements given by
\begin{eqnarray*}
\{X\in Der\cO_{\cG}(\cG_0)| \mu^\sharp \circ X &=&(X\otimes id_\cG^\sharp)\circ\mu^\sharp\}
\end{eqnarray*}
and with multiplication given by the supercommutator $[X,Y]=X\circ Y-(-1)^{|X||Y|}Y\circ X$.
\end{definition}

The distinguished point $e_{\cG}$ of $\cG_0$ will also be denoted by $0$. The evaluation at that point \eqref{evalpunt} therefore is denoted by $\delta_0^\sharp$. The following lemma is an immediate consequence of Definition \ref{defLiealg1}.
\begin{lemma}
\label{defLiealg2}
An $\mR$-vector space basis for the Lie superalgebra $\mathfrak{g}$ is given by the derivations
\[Z_j=\left(\delta^\sharp_{0}\circ Y_j\otimes id_{\cG}^\sharp\right)\circ\mu^\sharp\]
for $\{Y_j,j=1,\cdots,D\}$ the bosonic derivatives with respect to a set of local coordinates on $\cG_0$ in a neighborhood around the origin and $\{Y_j,j=D+1,\cdots,D+N\}$ the $N$ global Grassmann derivatives.
\end{lemma}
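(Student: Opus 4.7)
The plan is to establish four facts in turn: that each $Z_j$ is a derivation of $\cO_{\cG}(\cG_0)$, that each $Z_j$ satisfies the invariance condition of Definition \ref{defLiealg1} and hence lies in $\mathfrak{g}$, that the family $\{Z_j\}_{j=1}^{D+N}$ is $\mR$-linearly independent, and that it spans $\mathfrak{g}$. The whole argument reduces to the standard manipulations for a super Hopf algebra together with Lemma \ref{defLiealg1}, carried out with Sweedler notation $\mu^\sharp(a)=\sum a_{(1)}\otimes a_{(2)}$ (with appropriate graded signs).

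For the derivation property, I would use that $\mu^\sharp$ is a superalgebra homomorphism and $Y_j$ is a homogeneous derivation on a neighborhood of $0\in\cG_0$. Expanding $Z_j(ab)$, applying the Leibniz rule for $Y_j$ under $\delta^\sharp_0$, and collapsing the resulting terms via the counit identities
\[
\sum \delta^\sharp_0(a_{(1)})\,a_{(2)}=a=\sum a_{(1)}\,\delta^\sharp_0(a_{(2)})
\]
(which follow from $\mu\circ(\delta_{e_\cG}\otimes id_\cG)=id_\cG=\mu\circ(id_\cG\otimes\delta_{e_\cG})$) isolates exactly the two expected summands $Z_j(a)\,b$ and $(-1)^{|Y_j||a|}a\,Z_j(b)$. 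The invariance condition then follows from coassociativity: unfolding the definition of $Z_j$ and applying $(\mu^\sharp\otimes id)\circ\mu^\sharp=(id\otimes\mu^\sharp)\circ\mu^\sharp$ gives
\[
(Z_j\otimes id_\cG^\sharp)\circ\mu^\sharp=(\delta^\sharp_0\circ Y_j\otimes id^\sharp_\cG\otimes id^\sharp_\cG)\circ(id\otimes\mu^\sharp)\circ\mu^\sharp=\mu^\sharp\circ Z_j.
\]

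Linear independence is extracted by composing with $\delta^\sharp_0$ on the left. The counit identity above immediately yields $\delta^\sharp_0\circ Z_j=\delta^\sharp_0\circ Y_j$, and the $D$ bosonic coordinate derivatives at $0$ together with the $N$ global Grassmann derivatives at $0$ are visibly $\mR$-linearly independent functionals on $\cO_\cG(\cG_0)\cong\cC^\infty(\cG_0)\otimes\Lambda_N$, so the $Z_j$ are independent. For spanning, given any $X\in\mathfrak{g}$, the composite $\delta^\sharp_0\circ X$ is a parity-graded $\mR$-derivation of $\cO_\cG(\cG_0)$ at $0$, and using the split form \eqref{sheafsplit} near the origin it is a unique linear combination $\sum_j c_j\,\delta^\sharp_0\circ Y_j$. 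Setting $X':=X-\sum_j c_j Z_j\in\mathfrak{g}$, one has $\delta^\sharp_0\circ X'=0$; applying $\delta^\sharp_0\otimes id$ to the invariance relation $\mu^\sharp\circ X'=(X'\otimes id_\cG^\sharp)\circ\mu^\sharp$ and using the counit identity once more gives $X'=0$.

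The main bookkeeping obstacle is the careful handling of graded signs when interchanging $\mu^\sharp$ with operators supported only on one tensor factor; everything else is formal. Identifying the space of $\mR$-linear derivations at $0$ (after projection by $\delta^\sharp_0$) with $T_{e_\cG}\cG_0\oplus\mR^{0|N}$ uses the globally split structure sheaf \eqref{sheafsplit}, so the classical tangent-space argument for Lie groups carries over verbatim to the odd coordinates.
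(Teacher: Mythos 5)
Your proof is correct, and it is precisely the argument the paper intends: the paper states this lemma without proof (``an immediate consequence of Definition \ref{defLiealg1}''), and your four steps --- the graded Leibniz/counit computation for the derivation property, coassociativity for invariance, $\delta^\sharp_0\circ Z_j=\delta^\sharp_0\circ Y_j$ for independence, and the spanning argument via $X'=(\delta^\sharp_0\otimes id_\cG^\sharp)\circ\mu^\sharp\circ X'=0$ --- flesh out exactly that standard claim. The only delicate point is the one you already flag: the graded signs, which indeed work out because $\delta^\sharp_0$ annihilates odd elements, so in the Leibniz expansion only the terms carrying the expected factor $(-1)^{|Y_j||a|}$ survive.
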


This lemma in particular enables us to obtain explicit realisations of the Lie superalgebra in terms of derivations on the superalgebra of functions on the supergroup, see, e.g., equations \eqref{defK} and \eqref{KonX} for the 
orthosymplectic supergroup. Also we immediately see from the lemma that the superdimensions of a Lie supergroup and its Lie superalgebra coincide.

In view of Definition \ref{defLiealg1}, we can define a mapping from Lie supergroups to pairs of Lie groups and Lie superalgebras, $\cG\to(\cG_0,\mathfrak{g})$. The Lie group and Lie superalgebra satisfy the compatibility conditions of a super Harish-Chandra pair:

\begin{definition}
\label{HaCh}
A Harish-Chandra pair is a pair $(\cG_0,\mathfrak{g})$, consisting of a  Lie group $\cG_0$ and a Lie superalgebra $\mathfrak{g}=\mathfrak{g}_0\oplus \mathfrak{g}_1$ with $\mathfrak{g}_0$ the Lie algebra of $\cG_0$ and a representation Ad of $\cG_0$ on $\mathfrak{g}$ such that
\begin{itemize}
\item Ad on $\mathfrak{g}_0$ is the usual adjoint action,
\item the differential of the action at the identity is equal to the Lie superbracket, restricted to $\mathfrak{g}_0\times\mathfrak{g}$.
\end{itemize}
\end{definition}

In order to complete the Harish-Chandra pair corresponding to $\cG$, the adjoint representation of $\cG_0$ on $\mathfrak{g}$ is defined as
\begin{eqnarray}
\label{Adj}
Ad(g)X&=&(\delta_{g^{-1}}^\sharp\otimes id_\cG^\sharp)\circ\mu^\sharp\circ X\circ (\delta_{g}^\sharp\times id_{\cG}^\sharp)\circ\mu^\sharp,
\end{eqnarray}
for $g\in\cG_0$ and $X\in\mathfrak{g}$. This action of $\cG_0$ on $\mathfrak{g}$ is canonically extended to the universal enveloping algebra $\cU(\mathfrak{g})$.  The functor $\cG\mapsto(\cG_0,\mathfrak{g})$ from the category of Lie supergroups to the category of Harish-Chandra pairs described above is an equivalence (see \cite{MR0580292}). The inverse functor is constructed in \cite{MR2555976, MR0760837}.

We denote by $Aut(V)$ the automorphism group on a classical vector space $V$.

\begin{definition}
\label{defrep2}
A representation of a super Harish-Chandra pair $(\cG_0,\mathfrak{g})$ on a graded vector space $\cV$ is a pair $\Pi=(\pi_0,\rho^\pi)$, with
\begin{itemize}
\item $\pi_0$: $\cG_0\to Aut(\cV_0)\times Aut(\cV_1)$ a group morphism
\item $\rho^\pi$: $\mathfrak{g}\to End(\cV)$ a Lie superalgebra morphism
\end{itemize}
such that $d\pi_0=\rho^\pi$ on $\mathfrak{g_0}$ and $\rho^\pi\left(Ad(g)X\right)=Ad(\pi_0(g))\rho^\pi(X)$ for $X\in\mathfrak{g}$ and $g\in\cG_0$.
\end{definition}
The adjoint representation of $Aut(\cV_0)\oplus Aut(\cV_1)$ on $End(\cV)$ is naturally defined.

A linear functional $T$ on $\cO_\cG(\cG_0)$ is called left-invariant if
\begin{eqnarray*}
(id_\cG^\sharp\otimes T)\circ \mu^\sharp&=&1_{\cG}T
\end{eqnarray*}
holds. This corresponds to the definition of invariant integration for Lie groups and to the notion of invariant integration on Hopf algebras.

The invariance of integration on a Lie supergroup can be expressed in terms of the Harish-Chandra pair, see \cite{CouOSp}:
\begin{lemma}
\label{invHopf}
Consider a Lie supergroup $\cG$ with multiplication $\mu$ and corresponding Lie superalgebra $\mathfrak{g}$. A linear functional $\int_{\cG}:\cO_\cG(\cG_0)\to\mR$ is left-invariant if and only if
\begin{itemize}
\item $(\delta_{g}^\sharp\times \int_{\cG})\circ \mu^\sharp =\int_{\cG}\quad$ for all $g\in\cG_0$,
\item $\int_{\cG} \circ X=0\qquad\quad$ for all $X\in \mathfrak{g}$.
\end{itemize}
\end{lemma}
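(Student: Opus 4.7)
The forward direction is immediate. Writing $G := (id_\cG^\sharp \otimes \int_\cG)\circ \mu^\sharp$, the left-invariance assumption reads $G(f) = \int_\cG(f)\,1_\cG$. Composing with $\delta_g^\sharp$ on the left yields the first bulleted condition because $\delta_g^\sharp(1_\cG) = 1$. For the second, the defining relation $\mu^\sharp \circ X = (X \otimes id_\cG^\sharp)\circ \mu^\sharp$ of $X \in \mathfrak{g}$, together with the $\mR$-valued nature of $\int_\cG$, gives $X \circ G = G \circ X$. Hence
\[
\int_\cG(Xf)\,1_\cG \;=\; G(Xf) \;=\; X(G(f)) \;=\; \int_\cG(f)\,X(1_\cG) \;=\; 0,
\]
using that every derivation annihilates the constant $1_\cG$.

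For the converse, assume the two bulleted conditions and set $H := G(f) - \int_\cG(f)\,1_\cG \in \cO_\cG(\cG_0)$; the goal is $H = 0$. The identity $X \circ G = G \circ X$ extends by induction along products in $\cU(\mathfrak{g})$ to $u \circ G = G \circ u$ for every $u \in \cU(\mathfrak{g})$. Iterating the second bulleted condition gives $\int_\cG \circ u = 0$ whenever the counit $\epsilon(u) = 0$, while $u(1_\cG) = \epsilon(u)\,1_\cG$ since derivations kill constants. Combining these with the first bulleted condition applied to $uf$,
\[
\delta_g^\sharp(u(H)) \;=\; \delta_g^\sharp(G(uf)) - \int_\cG(f)\,\epsilon(u)\,\delta_g^\sharp(1_\cG) \;=\; \int_\cG(uf) - \int_\cG(f)\,\epsilon(u) \;=\; 0
\]
for every $u \in \cU(\mathfrak{g})$ and $g \in \cG_0$.

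It remains to conclude $H = 0$ from this jet-vanishing. The defining relation of $\mathfrak{g}$ says that every $X \in \mathfrak{g}$ is right-invariant, and therefore commutes with the pull-back $R_g^\sharp := (id_\cG^\sharp \otimes \delta_g^\sharp)\circ \mu^\sharp$ of right translation; consequently $\delta_g^\sharp(u\,H) = \delta_0^\sharp(u(R_g^\sharp H))$ for all $u$ and $g$. By Lemma \ref{defLiealg2} the basis $\{Z_j\}$ of $\mathfrak{g}$ reduces at the identity to the coordinate derivations $\{Y_j\}$, which cover both the bosonic partials and the Grassmann derivatives. A PBW-type argument then shows that $\{\delta_0^\sharp \circ u : u \in \cU(\mathfrak{g})\}$ separates the full jet at $0$ of any section of $\cO_\cG$, including all bosonic Taylor coefficients of every Grassmann component. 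Since $\cO_\cG(\cG_0)$ consists of the regular functions generated by matrix coefficients of representations and is therefore real-analytic in the $\cG_0$-variables, vanishing of the full jet of $R_g^\sharp H$ at $0$ forces $R_g^\sharp H = 0$ on the connected component of $0$; through the diffeomorphism $R_g$ this gives $H = 0$ on the connected component of $g$. Letting $g$ vary over $\cG_0$ yields $H \equiv 0$.

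The main obstacle I foresee is this final step: combining PBW for the universal enveloping superalgebra $\cU(\mathfrak{g})$ with the real-analyticity of the generating matrix coefficients in order to pass from vanishing of all $\mathfrak{g}$-jets at every point to vanishing of the section $H$. Once these two inputs are in place, the Hopf-algebraic manipulations of the first two paragraphs are essentially routine, all driven by the single identity $G \circ X = X \circ G$.
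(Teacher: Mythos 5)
The paper never proves Lemma \ref{invHopf}; it is quoted from the companion work \cite{CouOSp}, so there is no in-paper argument to compare yours against, and I can only assess your proposal on its own terms. Your forward direction and the Hopf-algebraic core of the converse are correct: the commutation $G\circ X=X\circ G$ does follow from Definition \ref{defLiealg1} exactly as you compute, and together with $u(1_\cG)=\epsilon(u)1_\cG$ and $\int_\cG\circ u=0$ on the augmentation ideal it reduces the converse to showing that $\delta_g^\sharp\bigl(u(H)\bigr)=0$ for all $u\in\cU(\mathfrak{g})$ and all $g\in\cG_0$ forces $H=0$.

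The weak point is your last step, and the weakness is twofold. First, it silently requires reading $\cO_\cG(\cG_0)$ as the Hopf algebra of regular functions: for smooth sections (which the paper also feeds into this lemma, e.g.\ the compactly supported functions in the remark after Theorem \ref{InvInt}) the Grassmann coefficients need not be real-analytic, and vanishing of the full jet of $R_g^\sharp H$ at the identity proves nothing. Second, even in the regular setting you rest the conclusion on the asserted, unproved claim that $\{\delta_0^\sharp\circ u: u\in\cU(\mathfrak{g})\}$ exhausts all jets at the identity, which you yourself flag as the main obstacle. Both problems are avoidable because you have not used the full strength of what you already established: the vanishing $\delta_g^\sharp(u(H))=0$ holds at \emph{every} point $g$, so no propagation away from the identity is needed. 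Indeed, $u=1$ gives that the body of $H$ vanishes everywhere; then take $u$ a product of $d$ odd elements of $\mathfrak{g}$ and induct on the lowest Grassmann degree $d$ of $H$. Each odd invariant derivation is a combination of Grassmann derivatives whose even coefficients have invertible body, plus vector fields on $\cG_0$ with odd coefficients which strictly raise Grassmann degree (this is visible in Theorem \ref{exposp} and Lemma \ref{expressionY1}, and in general follows from Lemma \ref{defLiealg2} by translation); evaluating at $g$ therefore extracts an invertible linear combination of the degree-$d$ coefficients of $H$ at $g$, forcing them to vanish. This pointwise argument needs neither analyticity, nor bosonic jets, nor the full PBW surjectivity claim, and it applies verbatim to smooth and compactly supported superfunctions; I would replace your final paragraph by it.
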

A (left-)invariant functional on $\cO_{\cG}(\cG_0)$ is called an invariant integral on $\cG$.

In the present paper we aim to calculate explicit expressions for this integral which are useful to do explicit calculations. In \cite{MR1845224, MR2172158} an explicit construction was obtained which we briefly revue here. Denote by $\int_{\cG_0}$ the unique left-invariant integral on the Lie group $\cG_0$ and by
\begin{eqnarray}
\label{projbos}
\delta_G^\sharp&:&\cO_{\cG}\to\cC^\infty_{\cG_0}
\end{eqnarray}
the projection onto the underlying function by dropping the nilpotent part (setting the Grassmann variables equal to zero), also defined by $\delta_p^\sharp(f)=(\delta^\sharp_G(f))(p)$ using equation \eqref{evalpunt}. Lemma \ref{invHopf} then implies the following result:

\begin{corollary}
\label{ResSZ}
Consider a Lie supergroup $\cG$ with Lie superalgebra $\mathfrak{g}$. Assume there is some $Y\in \cU(\mathfrak{g})$ such that
\begin{itemize}
\item for all $X\in \mathfrak{g}$, $YX$ is an element of the right ideal $\mathfrak{g}_0\cU(\mathfrak{g})$,
\item for all $g\in\cG_0$, $Ad(g)(Y)-Y$ is an element of the right ideal $\mathfrak{g}_0\cU(\mathfrak{g})$.
\end{itemize}
Then the integral $\int_{\cG}=\int_{\cG_0}\circ\delta^\sharp_{\cG}\circ Y$ is left invariant.
\end{corollary}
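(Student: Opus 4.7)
The plan is to verify directly the two characterizations of left-invariance in Lemma \ref{invHopf} for the candidate functional $\int_{\cG} = \int_{\cG_0}\circ\delta^\sharp_{\cG}\circ Y$, applying the two hypotheses on $Y$ in turn.

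The common ingredient in both verifications is the auxiliary fact that $\int_{\cG_0}\circ\delta^\sharp_{\cG}\circ Z\circ U = 0$ for every $Z\in\mathfrak{g}_0$ and every $U\in\cU(\mathfrak{g})$. I would prove this by first observing that the defining identity of $\mathfrak{g}$ in Definition \ref{defLiealg1}, when projected through $\delta^\sharp_{\cG}$, forces $\delta^\sharp_{\cG}\circ Z = \widetilde Z\circ \delta^\sharp_{\cG}$ for a left-invariant vector field $\widetilde Z$ on $\cG_0$; the classical left-invariance of $\int_{\cG_0}$ then annihilates $\widetilde Z$ applied to any smooth function, and in particular to $\delta^\sharp_{\cG}(U(f))$.

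With this in hand, the second condition of Lemma \ref{invHopf}, $\int_{\cG}\circ X = 0$ for $X\in\mathfrak{g}$, is immediate from the first hypothesis: expanding $YX = \sum_i Z_iU_i \in \mathfrak{g}_0\cU(\mathfrak{g})$ and applying the auxiliary fact summand by summand gives $\int_{\cG}(X(f)) = \int_{\cG_0}\delta^\sharp_{\cG}(YX(f)) = 0$.

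For the remaining condition I would set $L_g := (\delta_g^\sharp\otimes id_{\cG}^\sharp)\circ\mu^\sharp$ and extract two structural identities: from \eqref{Adj} one reads off $Y\circ L_g = L_g\circ Ad(g)(Y)$, and a direct coproduct check shows that $\delta^\sharp_{\cG}\circ L_g$ equals $\delta^\sharp_{\cG}$ composed on the left with classical left translation on $\cC^\infty(\cG_0)$, which is preserved by $\int_{\cG_0}$. Chaining these yields
\[
\int_{\cG}(L_g f) \;=\; \int_{\cG_0}\delta^\sharp_{\cG}\bigl(Ad(g)(Y)\,f\bigr) \;=\; \int_{\cG}(f) \;+\; \int_{\cG_0}\delta^\sharp_{\cG}\bigl((Ad(g)(Y)-Y)\,f\bigr),
\]
and the second hypothesis, combined with the auxiliary fact, forces the remainder to vanish. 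The main obstacle I anticipate is bookkeeping: being careful with the handedness conventions inside $\mu^\sharp$ and verifying cleanly that $\delta^\sharp_{\cG}$ sends elements of $\mathfrak{g}_0$ to Haar-annihilated vector fields on $\cG_0$; once this is in place, both conditions of Lemma \ref{invHopf} reduce directly to the two stated properties of $Y$.
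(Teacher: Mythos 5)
Your proposal is correct and follows exactly the route the paper intends: the paper states this corollary as an immediate consequence of Lemma \ref{invHopf} (following Scheunert--Zhang), and your verification of its two conditions---reducing $\mathfrak{g}_0\cU(\mathfrak{g})$-terms via $\delta^\sharp_{\cG}$ to vector fields on $\cG_0$ annihilated by the Haar integral, and using $Y\circ L_g=L_g\circ Ad(g)(Y)$ together with the second hypothesis---is precisely the intended fleshing-out. The only point needing care is the handedness bookkeeping you already flag: with the paper's conventions the reduced fields generate translations that the left Haar integral annihilates, so the argument goes through as you outline.
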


In \cite{MR2172158} a method to to find the invariant element $Y\in\cU(\mathfrak{g})$ was given. But since it is too complicated for practical calculations we will develop a new expression for the integral on $\cG=OSp(m|2n)$ and $\cG=U(p|q)$. We will find explicit expressions for integrals of functions of the matrix elements $X_{ij}$. The formulas will be of the form
\begin{eqnarray*}
\int_{\cG}f(X)=\int_{\cG_0}d\nu(U)\int_{B,\theta}\alpha(U,\theta)f(X(U,\theta)),
\end{eqnarray*}
where $\nu$ represents the Haar measure on $\cG_0$, $U$ coordinates on $\cG_0$, $\int_B$ the Berezin integral on the Grassmann algebra, $\theta$ the Grassmann variables of the supermanifold $\cG$ and $X_{ij}(U,\theta)$ a mapping from coordinates on $\cG$ to the matrix elements.

From comparison with Corollary \ref{ResSZ} we find that our approach leads to a different way of calculating the operator $\delta^\sharp_{\cG}\circ Y$, as $\int_B\alpha$.

\section{Outline of the construction}
\label{overzicht}
The most transparent way to introduce a Lie supergroup is by making use of the equivalence of categories with Harish-Chandra pairs. The functor from the category of Lie supergroups (Definition \ref{defLiegr}) to that of super Harish-Chandra pairs (Definition \ref{HaCh}) was constructed in Section \ref{preliminaries}. The explicit construction of the inverse functor was introduced in \cite{MR0760837} and further studied in \cite{MR2555976}. However, in order to obtain the formula for invariant integration, in this paper we will introduce a different, more analytical construction of the Lie supergroup corresponding to a Harish-Chandra pair. An overview of that construction will be presented in this section. First we briefly sketch the construction of \cite{MR2555976, MR0760837}.

Starting from the Harish-Chandra pair $\hat{\cG}=(\cG_0,\mathfrak{g})$, the first step in the construction of the Lie supergroup $\cG$ is the definition of the supermanifold $\underline{\cG}=(\cG_0,\cO_\cG)$. The sheaf $\cO_\cG$ is defined, for each open $U\subset \cG_0$, by
\begin{eqnarray*}
\cO_\cG(U)&=&\mbox{Hom}_{\cU(\mathfrak{g}_0)}\left(\cU(\mathfrak{g}),\cC^\infty_{\cG_0}(U)\right),
\end{eqnarray*}
where the right hand side is the super vector space of $\cU(\mathfrak{g}_0)$-linear morphisms from $\cU(\mathfrak{g})$ to $\cC^\infty_{\cG_0}(U)$, which has a natural superalgebra structure induced by the co-superalgebra structure of $\cU(\mathfrak{g})$.
Here the $\mathfrak{g}_0$-action on $\cC^\infty_{\cG_0}(U)$ is the standard one, i.e., the differential of left translation. It is clear that
\begin{eqnarray*}
\cO_\cG(U)&\cong&\cC^\infty_{\cG_0}(U)\otimes \Lambda(\mathfrak{g}^\ast_1),
\end{eqnarray*}
as equation \eqref{sheafsplit} demands.

The multiplication $\mu$ and involution $\nu$ can then be naturally defined for the sheaf when expressed in the form  $\mbox{Hom}_{\cU(\mathfrak{g_0})}\left(\cU(\mathfrak{g}),\cC^\infty_{\cG_0}(U)\right)$, see \cite{MR2555976} for the explicit formulas. The supermanifold $\underline{\cG}$ equipped with the morphisms $\mu$ and $\nu$ then is a Lie supergroup $\cG$. The Harish-Chandra pair corresponding to this Lie supergroup (following the functor between categories introduced in Section \ref{preliminaries}) is the original $\hat{\cG}$. So this functor from Harish-Chandra pairs to Lie supergroups is in fact the inverse functor.

However, this does not define an expression for the comultiplication with respect to the $\cC^\infty_{\cG_0}\otimes \Lambda(\mathfrak{g}^\ast_1)$-expression of the sheaf. We will find an expression for (one choice of) the comultiplication on functions in $\cC^\infty(\cG_0)\otimes \Lambda(\mathfrak{g}^\ast_1)$, starting from $\hat{\cG}$, for $\cG$ equal to $OSp(m|2n)$ and $U(p|q)$. This will be approached in the following way.

The Harish-Chandra pairs to be considered can be defined through their natural representation $(\Pi,\cV)$ (see Definition \ref{defrep2}). This representation can be extended to $T(\cV)=\oplus_{k\ge 0} T(\cV)_k$, where $T(\cV)_k= \cV\otimes\cV\otimes\dots\otimes\cV$ ($k$ copies). The pairs we will consider correspond to sub-pairs $(\cG_0,\mathfrak{g})$ of $GL(k|l;\mK)=(GL(k;\mK)\times GL(l;\mK),\mathfrak{gl}(k|l;\mK))$ for certain values of $k$ and $l$ and $\mK$ equal to $\mR$ or $\mC$. This means the representation space is $\cV=\mK^{k|l}$. The groups of the pairs consist of the elements of $GL(k;\mK)\times GL(l;\mK)$ which leave a certain even linear functional $T: T(\cV)_2\to\mK$ invariant.
Here $T$ is required to be non-degenerate when regarded as a bilinear form on $\cV$. The superalgebras of the pairs consist of the elements of $Y\in\mathfrak{gl}(k|l;\mK)$ such that $T\circ Y=0$ on $T(\cV)_2$.

Then the representation space $\cV$ and the linear functional $T$ can be used to define a super bialgebra $\cA$. We start from an abstract super bialgebra $\cA$ with a corepresentation $\chi$ as in Definition \ref{defHopfrep}. Such a corepresentation defines elements $X_{ij}\in \cA$ (not necessarily independent) for $1\le i,j\le k+l$ by considering $\chi$ on the basis elements $e_j\in\cV$ in equation \eqref{basisvec},
\begin{eqnarray}
\label{corepOSp}
\chi(e_i)&=&\sum_{j=1}^{k+l}X_{ji}\otimes e_j.
\end{eqnarray}
For a vector $v=\sum_{i=1}^{k+l}v_ie_i$ we find $\chi(v)=\sum_{j=1}^{k+l}v'_j\otimes e_j$ with $v'_j=\sum_{j=1}^{k+l}X_{ji}v_i$ given by ordinary matrix multiplication. The gradation of $\cA$ on $X_{ij}$ is given by $|X_{ij}|=[i]+[j] $, since $\chi$ has to be even.

The corepresentation is canonically extended to the tensor space $T(\cV)$ by taking into account the gradation, e.g. $\chi: T(\cV)_2\to \cA\otimes T(\cV)_2$ is given by
\begin{eqnarray}
\label{extcorep}
\chi(e_i\otimes e_j)&=&\sum_{t,s=1}^{k+l}X_{ti}X_{sj}(-1)^{[t]([j]+[s])}\otimes e_t\otimes e_s.
\end{eqnarray}
Imposing the condition that the corepresentation fixes $T$,
\begin{eqnarray*}
(id_{\cA}\otimes T)\circ \chi&=&1_{\cA}\otimes T\qquad\mbox{on}\quad \mK^{k|l}\otimes\mK^{k|l},
\end{eqnarray*}
we arrive at relations among the generators $X_{ij}$. Therefore we can define the super bialgebra $\cA$ as the supercommutative super bialgebra generated by the elements $X_{ij}$, subject to these relations. The comultiplication needs to be defined as
\begin{eqnarray}
\label{defcomult}
\Delta (X_{ij})&=&\sum_{t=1}^{k+l}(-1)^{([i]+[t])([t]+[j])}X_{it}\otimes X_{tj}
\end{eqnarray}
in order for $\chi$ in equation \eqref{corepOSp} to be a proper corepresentation (Definition \ref{defHopfrep}). Because of the correspondence of this corepresentation of $\cA$ with the defining representation for the Harish-Chandra pair $\hat{\cG}$, the super bialgebra $\cA$ can be embedded into $\cC^\infty(\cG_0)\otimes \Lambda(\mathfrak{g}^\ast_1)$. By calculating such an embedding $\cA\subset \cC^\infty(\cG_0)\otimes \Lambda(\mathfrak{g}^\ast_1)$ we can extend the comultiplication $\Delta$ on the super bialgebra $\cA$ uniquely to a multiplication $\mu$ on the supermanifold $\underline{\cG}=(\cG_0,\cC^\infty_{\cG_0}\otimes\Lambda(\mathfrak{g}^\ast_1))$. This defines a Lie supergroup $\cG$ which has as Harish-Chandra pair, the original $\hat{\cG}=(\cG_0,\mathfrak{g})$. We have therefore constructed an alternative formulation of the inverse functor described above for $\cG$ equal to $OSp(m|2n)$ and $U(p|q)$.

Using these results we can calculate the invariant derivations on $\cC^\infty({\cG_0})\otimes\Lambda(\mathfrak{g}_1^\ast)$, which generate $\mathfrak{g}$ and by restriction of the multiplication an action of $\cG_0$ on $\cC^\infty({\cG_0})\otimes\Lambda(\mathfrak{g}_1^\ast)$.

\begin{remark}
Although not essential for what follows we remark that the action of $\mathfrak{g}$ and $\cG_0$ on $\cG$ as described above corresponds to an action of the Harish-Chandra pair $(\cG_0,\mathfrak{g})$ on the supermanifold $\cG$. The definition of such an action is given in Proposition 3.3 in \cite{MR2555976}.
\end{remark}

Schematically, the construction is given by
\[
\hat{\cG}=(\cG_0,\mathfrak{g}),\,\, (\Pi,\cV)\quad \to\quad \underline{\cG},\,\,\cA\subset\cC^\infty(\cG_0)\otimes \Lambda(\mathfrak{g}^\ast_1)\quad\to\quad \cG\quad \to\quad \mbox{action} \,\, \mbox{of}\,\, \hat\cG\,\, \mbox{on}\,\,\underline{\cG}.
\]
This action of $\hat{\cG}$ on $\underline{\cG}$ contains all the information and is a very elegant way to describe the Lie supergroup. In particular we will use it to construct the invariant integral on the Lie supergroup $\cG$ for $OSp(m|2n)$ and $U(p|q)$.

We should note that there are different possibilities for the embedding $\cA\subset\cC^\infty(\cG_0)\otimes \Lambda(\mathfrak{g}^\ast_1)$, which all lead to different multiplications $\mu$. The resulting supergroups are of course all isomorphic. Also the expression for the integral will differ for these choices, but the integral of matrix elements of representations of $\cG$ will still give the same result. In fact the resulting expressions for the integral will be of the form
\begin{eqnarray*}
\int_{\cG}\cdot=\int_{\cG_0}\int_{B}\alpha\cdot&:&\cC^\infty(\cG_0)\otimes \Lambda(\mathfrak{g}^\ast_1)\to\mR,
\end{eqnarray*}
with $\alpha\in \cC^\infty(\cG_0)\otimes\Lambda(\mathfrak{g}^\ast_1)$ and $\int_B$ the Berezin integral on $\Lambda(\mathfrak{g}^\ast_1)$, see \cite{Berezin}. The relevant integrals will be those of functions $f(X)$ where $X$ represent the elements of $\cA$ introduced above, which correspond to the matrix elements of the fundamental representation. The integral should therefore be calculated as
\begin{eqnarray}
\label{GenexprInt}
\int_{\cG}f(X)=\int_{\cG_0}d\nu(U)\int_{B,\theta}\alpha(U,\theta)f(X(U,\theta)),
\end{eqnarray}
where $\nu$ represents the Haar measure on $\cG_0$, $U$ coordinates on $\cG_0$, $\theta$ the Grassmann variables corresponding to $\Lambda(\mathfrak{g}^\ast_1)$ and $X_{ij}(U,\theta)\in\cC^\infty(\cG_0)\otimes\Lambda(\mathfrak{g}^\ast_1)$ the embedding $\cA\subset\cC^\infty(\cG_0)\otimes\Lambda(\mathfrak{g}^\ast_1)$. Each different choice of embedding $X(U,\theta)$ will lead to a different expression for the integral (a different $\alpha(U,\theta)$) such that the resulting expression $\int_{\cG}f(X)$ remains the same.

\section{The Lie superalgebras $\mathfrak{osp}(m|2n)$ and $\mathfrak{u}(p|q)$}
\label{defalg}

In this section we introduce the defining representations of the real Lie superalgebras $\mathfrak{osp}(m|2n)$ and $\mathfrak{u}(p|q)$.

The natural representation of $\mathfrak{gl}(k|l; \mK)$ on $\mK^{k|l}$ extends to a representation on the tensor space $T(\mK^{k|l})$ by the graded Leibniz rule
\begin{eqnarray*}
X\cdot \left(v_1\otimes v_2\otimes \cdots \otimes v_j\right)&=&\left(X\cdot v_1\right)\otimes v_2\otimes \cdots\otimes v_j\\
&+&(-1)^{|X||v_1|}v_{1}\otimes \left( X\cdot v_{2}\right)\otimes \cdots \otimes v_{j}\\
&+&\cdots+(-1)^{|X|(|v_1|+\cdots+|v_{j-1}|)}v_{1}\otimes\cdots\otimes v_{j-1}\otimes \left( X\cdot  v_{j}\right)
\end{eqnarray*}
for $X\in\mathfrak{gl}(k|l;\mK)$ and $v_i\in\mK^{k|l}$ homogeneous.

\begin{definition}
\label{defalgosp}
The Lie superalgebra $\osp$ is the subalgebra of $\mathfrak{gl}(m|2n;\mR)$ consisting of
\begin{eqnarray*}
\{Y\in \mathfrak{gl}(m|2n;\mR)| \, T(Y \cdot (v\otimes w)) =0\,\,\forall \,v,w\in\mR^{m|2n}\},
\end{eqnarray*}
with $T:\mR^{m|2n}\otimes\mR^{m|2n}\to\mR$ bilinear and satisfying $T(e_i\otimes e_j)=g_{ji}$ for $\{e_j\}$ the standard basis on $\mR^{m|2n}$. The orthosymplectic metric $g\in\mR^{(m+2n)\times (m+2n)}$ is given here by
\begin{eqnarray*}
g=\left( \begin{array}{cc} I_m&0\\ \vspace{-3.5mm} \\0&J
\end{array}
 \right)&\mbox{with}&J=\left( \begin{array}{cc} 0&I_{n}\\  \vspace{-3.5mm} \\-I_n&0
\end{array}
 \right).
 \end{eqnarray*}
\end{definition}
The algebra $\osp$ is generated by $\cK_{ij}\in \mathfrak{gl}(m|2n;\mR)$ for $1\le i\le j\le m+2n$,
\begin{eqnarray}
\label{defgenosp}
\cK_{ij}e_\alpha&=&g_{\alpha j}e_i-(-1)^{[i][j]}g_{\alpha i} e_j,
\end{eqnarray}
with $\{e_i\}$ defined in equation \eqref{basisvec}. These generators satisfy
\begin{eqnarray}
\nonumber
[\cK_{ij},\cK_{kl}]&=&\cK_{ij}\cK_{kl}-(-1)^{([i]+[j])([k]+[l])}\cK_{kl}\cK_{ij}\\
\label{stanosp}
&=&g_{kj}\cK_{il}+(-1)^{[i]([j]+[k])}g_{li}\cK_{jk}-(-1)^{[k][l]}g_{lj}\cK_{ik}-(-1)^{[i][j]}g_{ki}\cK_{jl}.
\end{eqnarray}
Then $\osp$ consists of matrices in $\mR^{(m+2n)\times (m+2n)}$ of the form
\begin{eqnarray*}
\left( \begin{array}{cc} A&C\\ \vspace{-3.5mm} \\JC^T&B
\end{array} \right)
\end{eqnarray*}
with $A^T=-A$ and $B^T=JBJ$. The super dimension of the Lie superalgebra $\osp$ as a super $\mR$-vector space is therefore $\frac{1}{2}m(m-1)+n(2n+1)|2mn$.

There is a well established notion of unitary representations of $\mC$-Lie superalgebras, see e.g. \cite{MR0424886, MR1075732}.
Here we reformulate it in terms of super Hermitian forms defined in the following way \cite{MR2207328}.
Consider the complex vector space $\mC^{p+q}$ with the standard hermitian inner product $\langle\cdot|\cdot\rangle$. Then using the gradation the vector space becomes the super vector space $\mC^{p|q}$ on which we define the super hermitian form $(\cdot|\cdot)$,
\begin{eqnarray*}
(u|v)&=& i^{|u||v|}\langle u|v\rangle
\end{eqnarray*}
for $u$ and $v$ homogeneous and extended to general $u$ and $v$ by linearity. This implies that $(u|v)=(-1)^{|u||v|}\overline{(v|u)}$ holds, which justifies the term super hermitian form.
\begin{definition}
For $T\in$ $End(\mC^{p|q})$, the super adjoint $T^\ast\in$ $\End(\mC^{p|q})$ is defined by the relation
\begin{eqnarray*}
(Tu|v)&=&(-1)^{|T||u|}(u|T^\ast v).
\end{eqnarray*}
\end{definition}
A straightforward calculation shows that $T^\ast=i^{|T|}T^\dagger$ holds with $T^\dagger$ the standard adjoint with respect to $\langle\cdot|\cdot\rangle$. The behavior with respect to the supercommutator \eqref{supercomm} is given by
\begin{eqnarray*}
[T,S]^\ast&=&-[T^\ast,S^\ast].
\end{eqnarray*}

A representation of a Lie superalgebra $\mathfrak{g}$ on $\mC^{p|q}$, $\lambda:\mathfrak{g}\to$ $\End(\mC^{p|q})$, is unitary if for each $X\in\mathfrak{g}$, $\lambda(X)^\ast=-\lambda(X)$ holds. This naturally leads to the definition of the unitary Lie  superalgebra $\mathfrak{u}(p|q)$.

\begin{definition}
The Lie superalgebra $\mathfrak{u}(p|q)$ is the $\mR$-Lie superalgebra consisting of
\begin{eqnarray*}
\{X\in \mathfrak{gl}(p|q;\mC)| \, X^\ast=-X\}
\end{eqnarray*}
and with product inherited from $\mathfrak{gl}(p|q;\mC)$.
\end{definition}

Thus $\mathfrak{u}(p|q)$ consists of matrices in $\mC^{(p+q)\times (p+q)}$ of the form
\begin{eqnarray*}
\left( \begin{array}{cc} A&C\\  \vspace{-3.5mm} \\-iC^\dagger&B
\end{array} \right)
\end{eqnarray*}
with $A^\dagger=-A$ and $B^\dagger=-B$. The super dimension of the Lie superalgebra $\mathfrak{u}(p|q)$ as a super $\mR$-vector space is hence $p^2+q^2|2pq$.

There is a slightly different characterization of the unitary Lie superalgebra, which will be useful later. To describe it, we introduce the complex conjugate $\mathfrak{gl}(p|q;\mC)$-representation $\overline{\rho^\pi}$ on $\mC^{p|q}$ of $\rho^\pi$ in equation \eqref{fundrepgl},
\begin{eqnarray*}
\overline{\rho^\pi}(Y)\cdot e_j&=&\sum_{k=1}^{p+q}\overline{Y}_{kj}e_k.
\end{eqnarray*}

Using this definition, the following lemma can be proved by a straightforward calculation.
\begin{lemma}
\label{metru}
The Lie superalgebra $\mathfrak{u}(p|q)$ consists of
\begin{eqnarray*}
\{Y\in \mathfrak{gl}(p|q;\mC)| \, L((\overline{\rho^\pi}\otimes \rho^\pi)(Y)\cdot (v\otimes w)) =0\},
\end{eqnarray*}
with $L:\mC^{p|q}\otimes \mC^{p|q}\to\mC$ given by $L(e_i\otimes e_j)=h_{ij}$ and bilinear. The matrix $h$ is given by $h=\left( \begin{array}{cc} I_p&0\\  \vspace{-3.5mm} \\0&i I_q
\end{array}
 \right)$.
\end{lemma}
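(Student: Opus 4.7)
The plan is to verify directly that the proposed tensor condition is equivalent to the defining condition $Y^\ast = -Y$ of $\mathfrak{u}(p|q)$, by unpacking both sides on basis vectors. There is nothing deep here: it is a book-keeping exercise with signs, but the signs must be tracked carefully because the graded Leibniz rule gives a factor of $(-1)^{|Y||v|}$ when $Y$ hops across the first tensor factor.

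First I would compute $(\overline{\rho^\pi}\otimes\rho^\pi)(Y)$ on $e_i\otimes e_j$. Using the graded Leibniz rule, together with equation \eqref{fundrepgl} and the definition of $\overline{\rho^\pi}$, this gives
\begin{eqnarray*}
(\overline{\rho^\pi}\otimes\rho^\pi)(Y)(e_i\otimes e_j)&=&\sum_k\overline{Y_{ki}}\,e_k\otimes e_j+(-1)^{|Y|[i]}\sum_k Y_{kj}\,e_i\otimes e_k.
\end{eqnarray*}
Applying $L$ and using that $h$ is diagonal with $h_{ii}=i^{[i]}$, the condition $L((\overline{\rho^\pi}\otimes\rho^\pi)(Y)\cdot(e_i\otimes e_j))=0$ becomes
\begin{eqnarray*}
i^{[j]}\,\overline{Y_{ji}}+(-1)^{|Y|[i]}\,i^{[i]}\,Y_{ij}&=&0,
\end{eqnarray*}
where $|Y|=[i]+[j]$ is the parity of the entry $Y_{ij}$.

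Next I split into four cases according to $([i],[j])\in\mZ_2\times\mZ_2$. For $([i],[j])=(0,0)$ and $([i],[j])=(1,1)$ the factor $i^{[j]}=i^{[i]}$ and the sign is $+1$, so both reduce to $\overline{Y_{ji}}+Y_{ij}=0$; these are exactly the conditions $A^\dagger=-A$ and $B^\dagger=-B$ on the diagonal blocks. For $([i],[j])=(0,1)$ the identity reads $i\overline{Y_{ji}}+Y_{ij}=0$, and for $([i],[j])=(1,0)$, where the Leibniz sign flips, it reads $\overline{Y_{ji}}-iY_{ij}=0$; writing the off-diagonal blocks as $C$ (with $[i]=0,[j]=1$) and $D$ (with $[i]=1,[j]=0$), both equalities are equivalent to $D=-iC^\dagger$. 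Assembling the four cases reproduces exactly the matrix form
$Y=\left(\begin{array}{cc}A&C\\-iC^\dagger&B\end{array}\right)$
with $A^\dagger=-A$ and $B^\dagger=-B$ displayed after the definition of $\mathfrak{u}(p|q)$, which completes the characterization.

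The only real obstacle is sign discipline: the asymmetric choice of $\overline{\rho^\pi}\otimes\rho^\pi$ (rather than $\rho^\pi\otimes\rho^\pi$) and the purely imaginary block of $h$ exist precisely to absorb the factors $(-1)^{|Y|[i]}$ and $i^{|Y|}$ coming from the graded Leibniz rule and from $T^\ast=i^{|T|}T^\dagger$. I would make the identity $T^\ast=i^{|T|}T^\dagger$ explicit at the outset so that the four-case comparison above is transparent, and I would present the argument only on homogeneous $Y$, extending to arbitrary $Y\in\mathfrak{gl}(p|q;\mC)$ by $\mR$-linearity.
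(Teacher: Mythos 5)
Your verification is correct, and it is exactly the ``straightforward calculation'' that the paper invokes without writing out: the signs $(-1)^{|Y|[i]}$ from the graded Leibniz rule and the factors $i^{[i]}$ from $h$ are handled properly, and the four parity cases reproduce $A^\dagger=-A$, $B^\dagger=-B$ and $D=-iC^\dagger$, i.e.\ the condition $Y^\ast=-Y$. Nothing further is needed.
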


\begin{remark}
There exists another compact real form of $\mathfrak{gl}(p;q|\mC)$ which corresponds to the elements of $\mathfrak{gl}(p;q|\mC)$ which satisfy $Y^\dagger=-i^{|Y|}Y$ instead of the choice $Y^\dagger=-(-i)^{|Y|}Y$ made here. It is straightfoward to extend all subsequent results on $U(p|q)$ from our choice to the other definition.
\end{remark}

\section{The Lie supergroup $OSp(m|2n)$}
\subsection{Definition}
\label{Definition1}
The Lie supergroup $OSp(m|2n)$ corresponds to the Harish-Chandra pair $(O(m)\times Sp(2n), \osp)$. The adjoint representation of $O(m)\times Sp(2n)$ on $\mathfrak{osp}(m|2n)$ is implied by considering the representation of the Harish-Chandra pair on $\mR^{m|2n}$, which embeds the pair into $(GL(m;\mR)\times GL(2m;\mR),\mathfrak{gl}(m|2n;\mR))$. We always use the notation $Sp(2n)$ for $Sp(2n;\mR)$, which is noncompact. The compact real form of $Sp(2n;\mC)$ will be denoted by $USp(2n)=Sp(2n;\mC)\cap U(2n)$.

The pair $(O(m)\times Sp(2n), \osp)$ corresponds to
\[(\cG_0,\mathfrak{g})\subset(GL(m;\mR)\times GL(2m;\mR),\mathfrak{gl}(m|2n;\mR))=(\mbox{Aut}(\mR^{m})\times\mbox{Aut}(\mR^{2n}),\,\mbox{End}(\mR^{m|2n}))\]
acting on $\mR^{m|2n}$ such that
\begin{eqnarray*}
T\circ S =T \quad \mbox{on} \quad \mR^{m|2n}\otimes \mR^{m|2n} & & \forall \,S\in \cG_0,\\
T\circ Y=0\quad \mbox{on} \quad \mR^{m|2n}\otimes \mR^{m|2n} & &\forall\, Y\in \mathfrak{g}.
\end{eqnarray*}
with $T$ given in Definition \ref{defalgosp}. From this definition it is clear that the adjoint action is defined as $Ad(S)Y=S^{-1}\circ Y\circ S$.

Now as described in Section \ref{overzicht}, we consider the supercommutative super bialgebra $\cA$, generated by abstract graded elements $X_{ij}$, $i,j=1,\cdots,m+2n$ with comultiplication \eqref{defcomult} and with corepresentation \eqref{corepOSp} on $\mR^{m|2n}$. For this corepresentation to be compatible with the representation of $(O(m)\times Sp(2n),\osp)$ described above, the corepresentation $\chi$ on $\mR^{m|2n}\otimes\mR^{m|2n}$ should satisfy
\begin{eqnarray*}
(id_{\cA}\otimes T)\circ \chi&=&T.
\end{eqnarray*}
Equation \eqref{extcorep} therefore implies that the relation
\begin{eqnarray}
\label{relforX}
\sum_{i,j=1}^{m+2n}(-1)^{[l]([l]+[j])}X_{ik}g_{ij}X_{jl}&=&g_{kl}
\end{eqnarray}
must hold for the abstract generators $X_{ij}$ of $\cA$.

\begin{remark}
This corresponds to a defining relation of a Lie supergroup within the approach of \cite{MR1124825, MR0725288}. There the matrix entries of the elements of the supergroup satisfy similar equations. Theorem \ref{Xifo} below describes the matrix elements in terms of free Grassman variables.
\end{remark}

Using the well-known super transpose of a super matrix $(X^T)_{ki}=(-1)^{[i]([i]+[k])}X_{ik}$ and noting that the metric $g$
only has a $(I, I)$ block (which is symmetric) and $(II, II)$ block (which skew symmetric), we can rewrite equation \eqref{relforX} as $\sum_{i,j}X_{jl}(X^T)_{ki}g_{ij}=g_{kl}$. In block matrix form this relation is given by
\begin{eqnarray*}
\left( \begin{array}{cc} X_{I,I}^T&X_{II,I}^T\\  \vspace{-3.5mm} \\X_{I,II}^T&X_{II,II}^T
\end{array}
 \right)
g\left( \begin{array}{cc} X_{I,I}&-X_{I,II}\\  \vspace{-3.5mm} \\X_{II,I}&X_{II,II}
\end{array}
 \right)
&=&g,
\end{eqnarray*}
where $X_{\alpha,\beta}^T$ is the classical transpose $(X_{\alpha,\beta})^T$ for $\alpha,\beta=I,II$.

This leads to 3 independent relations for the submatrices:
\begin{eqnarray}
\nonumber
X_{I,I}^TX_{I,I}+X_{II,I}^TJX_{II,I}&=&I_m,\\
\label{defrelations}
-X_{I,I}^TX_{I,II}+X_{II,I}^TJX_{II,II}&=&0,\\
\nonumber
-X_{I,II}^TX_{I,II}+X_{II,II}^TJX_{II,II}&=&J.
\end{eqnarray}

According to Section \ref{overzicht} we need to calculate an embedding for the generators $X_{ij}$ into the algebra of functions on the supergroup $OSp(m|2n)$. Equation \eqref{sheafsplit} implies that the supermanifold of $OSp(m|2n)$ is given by
\begin{eqnarray*}
\underline{OSp(m|2n)}=(O(m)\times Sp(2n),\cC^\infty_{O(m)\times Sp(2n)}\otimes \Lambda_{2mn}),
\end{eqnarray*}
since $\mbox{dim}_\mR(\osp_1)=2mn$. Therefore we must look for an embedding $X_{ij}\in\cC^\infty(O(m)\times Sp(2n))\otimes \Lambda_{2mn}$.

We introduce the $2mn$ independent Grassmann variables of  $\Lambda_{2mn}$ by labeling them as $\{\theta_{jk};j=1,\cdots,2n;k=1,\cdots,m\}$ which leads to the $2n\times m$ matrix $\theta$ and define $\hat{\theta}=\theta^TJ$. Define the $m\times m$ matrix $A$ and the $2n\times 2n$ matrix $B$ with entries in the even part of $\Lambda_{2mn}$, $\Lambda_{2mn}^{(ev)}$ as (finite) Taylor expansions,
\begin{eqnarray}
\label{defAB}
A=\sqrt{I_m-\hat{\theta}\theta}&\mbox{and}&B=\sqrt{I_{2n}-\theta\hat{\theta}}.
\end{eqnarray}
These matrices satisfy the following straightforward properties
\begin{eqnarray}
\label{eigB1}
B^TJ&=&JB\quad\mbox{and}\\
\label{eigB2}
\hat\theta B&=&A\hat\theta.
\end{eqnarray}

Using these definitions we can construct an embedding of the matrix elements, which will be needed to calculate equation \eqref{GenexprInt} for $OSp(m|2n)$.
\begin{theorem}
\label{Xifo}
Consider the matrices $x\in \left[\cC^\infty(O(m))\right]^{m\times m}$ and $y \in \left[\cC^\infty(Sp(2n))\right]^{2n\times 2n}$ of matrix elements of the fundamental representation of $O(m)$ and $Sp(2n)$. The matrix
\begin{eqnarray*}
X&=&\left( \begin{array}{cc} X_{I,I}&X_{I,II}\\  \vspace{-3.5mm} \\X_{II,I}&X_{II,II}
\end{array}
 \right)\in\left[\cC^\infty\left( O(m) \right)\otimes \cC^\infty(Sp(2n))\otimes \Lambda_{2mn}\right]^{(m+2n)\times (m+2n)}
\end{eqnarray*}
defined by
\begin{eqnarray*}
X_{I,I}=xA & &X_{I,II}=x\hat{\theta}y\\
X_{II,I}=\theta & &X_{II,II}=By
\end{eqnarray*}
satisfies the orthogonality relations of $OSp(m|2n)$ in equation \eqref{defrelations}.
\end{theorem}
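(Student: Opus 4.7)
The plan is to verify the three matrix identities in \eqref{defrelations} directly, substituting the proposed block expressions for $X$ and invoking the classical relations $x^T x = I_m$ and $y^T J y = J$, together with the nilpotent identities \eqref{eigB1}--\eqref{eigB2}. Before starting, I would first collect the relevant symmetry and sign properties of the Grassmann-valued matrices $\theta$, $\hat\theta$, $A$, and $B$.

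A short book-keeping calculation using $J^T=-J$ and the anticommutation of the $\theta_{jk}$ shows that $\hat\theta\theta = \theta^T J\theta$ is a symmetric $m\times m$ matrix, while $\theta\theta^T$ is antisymmetric as a $2n\times 2n$ matrix. As a power series in $\hat\theta\theta$ the matrix $A$ is therefore symmetric ($A^T=A$), and likewise $\hat\theta^T = -J\theta$. Writing $\theta\hat\theta = (\theta\theta^T)J$ and again using $J^T=-J$, one checks $(\theta\hat\theta)^T J = J(\theta\hat\theta)$, which propagates to $B$ through its power-series definition and reproduces the identity \eqref{eigB1}.

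With these preparations, each of the three relations reduces to a one-line computation. The first gives $X_{I,I}^T X_{I,I} + X_{II,I}^T J X_{II,I} = A^T x^T x A + \theta^T J \theta = A^2 + \hat\theta\theta = I_m$, using $A^T=A$ and the definition $A^2 = I_m - \hat\theta\theta$. The second gives $-X_{I,I}^T X_{I,II} + X_{II,I}^T J X_{II,II} = -A\hat\theta y + \hat\theta B y = 0$ by $x^T x = I_m$ together with the identity $\hat\theta B = A\hat\theta$ from \eqref{eigB2}. For the third relation one obtains $-X_{I,II}^T X_{I,II} + X_{II,II}^T J X_{II,II} = y^T\bigl(-\hat\theta^T\hat\theta + B^T J B\bigr)y$; substituting $\hat\theta^T = -J\theta$ gives $-\hat\theta^T\hat\theta = J\theta\hat\theta$, while \eqref{eigB1} yields $B^T J B = J B^2 = J(I_{2n}-\theta\hat\theta)$, so the bracket collapses to $J$ and the symplecticity $y^T J y = J$ produces the required right-hand side.

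The main obstacle, such as it is, lies entirely in the preparatory step: correctly tracking signs when transposing products of Grassmann-valued matrices, since $J^T=-J$ interacts non-trivially with the anticommutation of the $\theta_{jk}$ in the $\hat\theta\theta$ and $\theta\hat\theta$ blocks. Once the symmetry of $A$, the intertwining relation for $B$, and the identity $\hat\theta^T=-J\theta$ are in hand, the three orthogonality relations are purely formal substitutions, and no further input beyond $x\in O(m)$ and $y\in Sp(2n)$ is needed.
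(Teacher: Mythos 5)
Your proof is correct and follows essentially the same route as the paper: the same three one-line substitutions using $x^Tx=I_m$, $y^TJy=J$, the symmetry of $A$, and the identities \eqref{eigB1}--\eqref{eigB2}, with your preparatory sign-tracking (symmetry of $\hat\theta\theta$, antisymmetry of $\theta\theta^T$, $\hat\theta^T=-J\theta$) merely spelling out what the paper asserts as straightforward properties. No gaps.
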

\begin{proof}
First we remark that the matrix $\hat{\theta}\theta$ is symmetric, which implies that $A$ is symmetric. The first relation of \eqref{defrelations} follows from the following calculation:
\begin{eqnarray*}
(xA)^TxA+\theta^TJ\theta&=&A^2+\hat{\theta}\theta=I_m.
\end{eqnarray*}
To prove the second relation we use equation \eqref{eigB2}, which yields
\begin{eqnarray*}
-(xA)^Tx\hat{\theta}y+\theta^T JBy&=&-A\hat{\theta}y+A\hat{\theta}y=0.
\end{eqnarray*}
For the third relation we apply property \eqref{eigB1}, this leads to
\begin{eqnarray*}
-(x\hat{\theta}y)^Tx\hat{\theta}y+(By)^TJBy&=&y^TJ\theta\hat{\theta}y+y^TJB^2y=y^TJy=J,
\end{eqnarray*}
thus proving the theorem.
\end{proof}

Since the algebra generated by the $X_{ij}$ as defined in Theorem \ref{Xifo} includes the $x_{ij}$, $y_{kl}$ and $\theta_{ki}$ for $i,j=1,\cdots,m$, $k,l=1,\cdots,2n$, the comultiplication $\Delta$ on $\cA$ as in equation \eqref{defcomult} uniquely extends to a comultiplication on $\cC^\infty(O(m)\times Sp(2n))\otimes \Lambda_{2mn}$.

\begin{theorem}
\label{defsgr}
The Lie supergroup with supermanifold $\underline{OSp(m|2n)}$ equipped with the multiplication $\mu:\underline{OSp(m|2n)}\otimes \underline{OSp(m|2n)} \to \underline{OSp(m|2n)}$ and the involutive superdiffeomorphism $\nu:\underline{OSp(m|2n)} \to \underline{OSp(m|2n)}$ given below is the orthosymplectic supergroup $OSp(m|2n)$. The multiplication $\mu=(\mu_0,\mu^\sharp)$ is given by
\begin{eqnarray*}
\mu^\sharp(X_{ij})&=&\sum_{k=1}^{m+2n}(-1)^{([i]+[k])([k]+[j])}X_{ik}\otimes X_{kj},
\end{eqnarray*}
for $X_{ij}$ defined in Theorem \ref{Xifo}. The involutive superdiffeormorphism $\nu=(\nu_0,\nu^\sharp)$ is defined by
\begin{eqnarray*}
\nu^\sharp(X_{I,I})= X_{I,I}^T & & \nu^\sharp (X_{I,II})=-X_{II,I}^TJ\\
\nu^\sharp(X_{II,I})=-JX^T_{I,II} & &\nu^\sharp (X_{II,II})=-JX_{II,II}^TJ.
\end{eqnarray*}
\end{theorem}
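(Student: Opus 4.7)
The plan is to (i) define $\mu^\sharp$ unambiguously as a super algebra homomorphism on the structure sheaf and verify it satisfies the block formulas of the theorem, (ii) check the three Lie supergroup axioms of Definition \ref{defLiegr}, and (iii) identify the associated Harish--Chandra pair as $(O(m)\times Sp(2n),\osp)$.

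For step (i), I would specify $\mu^\sharp$ on the algebraically independent generators of $\cC^\infty(O(m)\times Sp(2n))\otimes\Lambda_{2mn}$: on bosonic functions $\mu^\sharp=\mu_0^\sharp$ is the ordinary pullback by the product group multiplication on $O(m)\times Sp(2n)$, while on the Grassmann generators $\theta_{rs}$ I would take the prescribed formula for $\mu^\sharp(X_{II,I})$, namely $\mu^\sharp(\theta)=\theta\otimes xA+By\otimes\theta$ (which is already consistent, since evaluating at $\theta=0$ in the theorem's formulas recovers the ordinary matrix comultiplication on the underlying Lie group). The real consistency check is then that the resulting algebra homomorphism reproduces the remaining three block formulas $\mu^\sharp(xA)$, $\mu^\sharp(x\hat\theta y)$, $\mu^\sharp(By)$; this reduces to matrix identities in the nilpotent expansions $A=\sqrt{I_m-\hat\theta\theta}$ and $B=\sqrt{I_{2n}-\theta\hat\theta}$, which follow from the orthogonality relations of Theorem \ref{Xifo} together with \eqref{eigB1}--\eqref{eigB2}.

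For step (ii), both sides of the associativity axiom are super algebra homomorphisms, so it suffices to check equality on the generators $X_{ij}$; there it reduces to coassociativity of the comultiplication \eqref{defcomult} on the super bialgebra $\cA$, which is a direct sign computation. The unit axiom reduces to $\delta_{e_\cG}^\sharp(X_{ij})=\delta_{ij}$, which is clear since at the identity $x,y$ become identity matrices and $\theta\mapsto 0$. For the inverse axiom, I would verify $\sum_k(-1)^{([i]+[k])([k]+[j])}X_{ik}\nu^\sharp(X_{kj})=\delta_{ij}1_\cG$; this is the statement that $\nu^\sharp$ assigns to $X$ the block form of the orthosymplectic inverse $g^{-1}X^{ST}g$ and follows directly from \eqref{defrelations}. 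For step (iii), the underlying Lie group is $O(m)\times Sp(2n)$ by construction, and applying Lemma \ref{defLiealg2} one computes the invariant derivations; their action on the $X_{ij}$, obtained by differentiating the corepresentation $\chi$ of \eqref{corepOSp} at the identity, coincides with the fundamental representation of $\osp$ from Definition \ref{defalgosp}, so uniqueness of the Lie supergroup associated with a Harish--Chandra pair completes the identification.

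The main obstacle I expect is the consistency check in step (i): once $\mu^\sharp$ is fixed on $x_{ij}$, $y_{kl}$, $\theta_{rs}$, it takes genuine work to verify that the induced values on the composite expressions $xA$, $x\hat\theta y$, $By$ match the prescribed block formulas. This is guaranteed in principle because the orthogonality relations \eqref{relforX} hold for the $X_{ij}$ in $\cA$ and are preserved by $\Delta$; however, the explicit manipulations involving the nilpotent square roots $A$ and $B$, and the non-trivial Grassmann sign bookkeeping, form the bulk of the technical content.
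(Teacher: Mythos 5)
Your steps (ii) and (iii) are in the right spirit and in fact mirror what the paper does: its proof of Theorem \ref{defsgr} is a one-liner resting on Theorem \ref{liealg1} (the invariant derivations generate $\osp$) and the equivalence of categories between Lie supergroups and Harish--Chandra pairs, with the group axioms implicit in the Hopf-algebra structure of $\cA$. The genuine problem is your step (i). You propose to define $\mu^\sharp$ on the bosonic generators as the \emph{ordinary} pullback, $\mu^\sharp(x_{ij})=\sum_k x_{ik}\otimes x_{kj}$ and likewise for $y$, together with the prescribed formula on $\theta$, and then to "check consistency" with the block formulas. That premise is false, and the consistency check fails. Concretely, compute the component of Grassmann bidegree $(1,1)$ of $\mu^\sharp\bigl((xA)_{ij}\bigr)$ under your definition: using $A=I_m-\tfrac12\hat\theta\theta+\cdots$ and $\mu^\sharp(\theta_{tk})=\sum_l\theta_{tl}\otimes(xA)_{lk}+\sum_u(By)_{tu}\otimes\theta_{uk}$, one finds
\[
-\tfrac12\sum_{u}(x\hat\theta y)_{iu}\otimes\theta_{uj}\;-\;\tfrac12\sum_{a,l,u}x_{ia}(\hat\theta y)_{lu}\otimes(x\theta^T)_{au}\,x_{lj},
\]
whereas the theorem's formula forces this component to be exactly $-\sum_{u}(x\hat\theta y)_{iu}\otimes\theta_{uj}$ (the term $\sum_k(xA)_{ik}\otimes(xA)_{kj}$ has only even--even bidegrees). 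The coefficient $\tfrac12$ is wrong and the second term, whose right-hand tensor factor involves the nonconstant coordinates $x_{ab}$, cannot match a pure $\theta_{uj}$. The parenthetical justification you give (setting $\theta=0$ recovers the group comultiplication) only shows that $\mu_0$ is the group multiplication; it does not imply that $\mu^\sharp$ restricted to bosonic functions is the group pullback, and indeed it is not: the true $\mu^\sharp(x_{ij})$ is equation \eqref{coacx}, which carries nontrivial Grassmann corrections, and only the projection $(\delta^\sharp\otimes id^\sharp)\circ\mu^\sharp$ gives the ordinary coaction (Lemma \ref{Liegract}).

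The repair is to run the logic in the opposite direction, as the paper does. Define $\mu^\sharp$ on the matrix elements $X_{ij}$ by the theorem's formula (this is automatically coassociative and counital, being the comultiplication \eqref{defcomult} of the bialgebra $\cA$ transported through the embedding of Theorem \ref{Xifo}); then observe that the subalgebra generated by the $X_{ij}$ already contains $x_{ij}$, $y_{kl}$ and $\theta_{ki}$, because $\theta=X_{II,I}$ and $A^{-1}$, $B^{-1}$ are polynomials in $\theta$, so $x=X_{I,I}A^{-1}$ and $y=B^{-1}X_{II,II}$. Solving gives \eqref{coacGrass}--\eqref{coacy}, which extend $\mu^\sharp$ uniquely to all of $\cC^\infty(O(m)\times Sp(2n))\otimes\Lambda_{2mn}$; no separate consistency check between a prescribed bosonic comultiplication and the block formulas is needed or possible. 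With that fixed, your verification of the group axioms on the generators $X_{ij}$, the computation of the invariant derivations via Lemma \ref{defLiealg2} (this is Theorem \ref{exposp}/\ref{liealg1} in the paper), and the appeal to the uniqueness part of the Harish--Chandra correspondence complete the proof exactly as the paper intends.
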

\begin{proof}
This theorem follows essentially from the fact that the invariant derivations generate $\osp$ (see Theorem \ref{liealg1}) and the equivalence of categories between Harish-Chandra pairs and Lie supergroups.
\end{proof}

The corepresentation given in equation \eqref{corepOSp} now corresponds to a corepresentation of $OSp(m|2n)$ (more precisely to a corepresentation of the Hopf algebra of functions on $OSp(m|2n)$),
\begin{eqnarray*}
\chi:\,\mR^{m|2n}&\to&\cO_{OSp(m|2n)}(O(m)\times Sp(2n))\otimes \mR^{m|2n}.
\end{eqnarray*}
This justifies the terminology matrix elements for $X_{ij}$ defined in Theorem \ref{Xifo}.

The action of $\mu^\sharp$ on functions in $\Lambda_{2mn}$ or $\cC^\infty(O(m)\times Sp(2n))$ can be calculated from the following relations:
\begin{eqnarray}
\label{coacGrass}
\mu^\sharp (\theta_{ij})&=&\sum_{l=1}^m\theta_{il}\otimes(xA)_{lj}+\sum_{k=1}^{2n}(By)_{ik}\otimes \theta_{kj}\\
\label{coacx}
\mu^\sharp (x_{ij})&=&\sum_{l=1}^m\left(\sum_{t=1}^m(xA)_{it}\otimes (xA)_{tl}-\sum_{s=1}^{2n}(x\hat{\theta}y)_{is}\otimes \theta_{sl}\right)\mu^\sharp (A^{-1}_{lj})\\
\label{coacy}
\mu^\sharp (y_{ij})&=&\sum_{l=1}^m\mu^\sharp (B^{-1}_{il})\left(-\sum_{t=1}^m\theta_{lt}\otimes (x\hat{\theta}y)_{tj}+\sum_{s=1}^{2n}(By)_{ls}\otimes (By)_{sj}\right),
\end{eqnarray}
with $A$ and $B$ defined in equation \eqref{defAB}. Equation \eqref{coacGrass} gives the comultiplication on all the elements of the Grassmann algebra since $\mu^\sharp(\theta_F\theta_G)=\mu^\sharp(\theta_F)\mu^\sharp(\theta_G)$ for $\theta_F,\theta_G\in\Lambda_{2mn}$. In particular $(A^{-1})_{lj}=\left((I_m-\hat{\theta}\theta)^{-1/2} \right)_{lj}\in\Lambda_{2mn}$, appearing in equation \eqref{coacx} could in principle be calculated in that way. Equation \eqref{coacx} follows immediately from $\mu^\sharp(X_{il})=\sum_{j=1}^m\mu^\sharp(x_{ij})\mu^\sharp(A_{jl})$, so $\mu^\sharp(x_{ij})=\sum_{l=1}^m\mu^\sharp(X_{il})\mu^\sharp((A^{-1})_{lj})$.

\subsection{Action of the Harish-Chandra pair $(O(m)\times Sp(2n),\mathfrak{osp}(m|2n))$}
Now that the comultiplication on elements of $\cC^\infty(O(m)\times Sp(2n))\otimes \Lambda_{2mn}$ is known, we can use this to calculate the action of $O(m)\times Sp(2n)$ on these functions and to find an expression for the invariant derivations on $\cC^\infty(O(m)\times Sp(2n))\otimes \Lambda_{2mn}$. This corresponds to an action of the Harish-Chandra pair $(O(m)\times Sp(2n), \osp)$ on the supermanifold $(O(m)\times Sp(2n),\cC^\infty_{O(m)\times Sp(2n)}\otimes \Lambda_{2mn})$.

The multiplication $\mu$ leads to a left or right corepresentation of the Hopf algebra $\cC^\infty({O(m)\times Sp(2n)})$ by restricting $\delta^\sharp:\cO_{OSp(m|2n)}\to\cC^\infty_{O(m)\times Sp(2n)}$, see equation \eqref{projbos}.
\begin{lemma}
\label{Liegract}
The left co-action of $O(m)\times Sp(2n)$ on $\cO_{OSp(m|2n)}$, $\varphi^\sharp=(\delta^\sharp\otimes id^\sharp)\circ\mu^\sharp$, satisfies
\begin{eqnarray*}
\varphi^\sharp(\theta_{ij})=\sum_{k=1}^{2n} y_{ik}\otimes \theta_{kj},\quad\varphi^\sharp(x_{ij})=\sum_{k=1}^mx_{ik}\otimes x_{kj}\quad\mbox{and}\quad \varphi^\sharp(y_{ij})=\sum_{k=1}^{2n} y_{ik}\otimes y_{kj}.
\end{eqnarray*}
\end{lemma}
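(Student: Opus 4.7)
The plan is to apply $(\delta^\sharp \otimes id^\sharp)$ directly to equations \eqref{coacGrass}, \eqref{coacx} and \eqref{coacy}. Since $\delta^\sharp$ kills the Grassmann variables by \eqref{projbos}, the definitions \eqref{defAB} give $\delta^\sharp(A_{ij}) = \delta_{ij} = \delta^\sharp(B_{ij})$, hence $\delta^\sharp((xA)_{ij}) = x_{ij}$, $\delta^\sharp((By)_{ij}) = y_{ij}$ and $\delta^\sharp((x\hat\theta y)_{ij}) = 0$. Applied to \eqref{coacGrass} this produces the first identity at once: the $\theta$-term vanishes and the $By$-term collapses to $\sum_k y_{ik} \otimes \theta_{kj}$. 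For the other two identities the remaining issue is to compute $\varphi^\sharp(A^{-1})$ and $\varphi^\sharp(B^{-1})$, and since $A^2 = I_m - \hat\theta\theta$ and $B^2 = I_{2n} - \theta\hat\theta$, this reduces via the finite Taylor expansions of the square roots to computing $\varphi^\sharp$ on $\hat\theta\theta$ and $\theta\hat\theta$.

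From the first identity together with $\hat\theta = \theta^T J$ one obtains $\varphi^\sharp(\hat\theta_{ij}) = \sum_k (y^T J)_{kj} \otimes \theta_{ki}$. Multiplying this against $\varphi^\sharp(\theta)$ inside $\cC^\infty(\cG_0)\otimes \cO_\cG(\cG_0)$ produces a coefficient $(y^T J y)_{pq}$ in front of each monomial $\theta_{pi}\theta_{qj}$, and the symplectic relation $y^T J y = J$ for $y \in Sp(2n)$ collapses this to
\[
\varphi^\sharp(\hat\theta\theta)_{ij} \;=\; \sum_{p,q} J_{pq} \otimes \theta_{pi}\theta_{qj} \;=\; 1 \otimes (\hat\theta\theta)_{ij}.
\]
Therefore $\varphi^\sharp(A^{-1}) = 1 \otimes A^{-1}$, and substituting into the image of \eqref{coacx} under $\delta^\sharp \otimes id^\sharp$, together with $\sum_l (xA)_{tl} A^{-1}_{lj} = x_{tj}$, yields the second identity.

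For the third identity the analogous computation of $\varphi^\sharp(\theta\hat\theta)$ produces coefficients $y_{ip}(y^T J)_{qj}$ in front of $(\theta\theta^T)_{pq}$; rewriting the symplectic identity in the equivalent form $y^T J = J y^{-1}$ and reorganizing indices turns this into
\[
\varphi^\sharp(\theta\hat\theta) \;=\; (y \otimes 1)(1 \otimes \theta\hat\theta)(y^{-1}\otimes 1),
\]
viewed as a matrix equation in $\cC^\infty(\cG_0)\otimes \cO_\cG(\cG_0)$. The Taylor expansion of $B^{-1}$ together with the elementary identity $(1\otimes\theta\hat\theta)^n = 1\otimes(\theta\hat\theta)^n$ then gives $\varphi^\sharp(B^{-1}) = (y\otimes 1)(1\otimes B^{-1})(y^{-1}\otimes 1)$. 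Plugging into the image of \eqref{coacy} under $\delta^\sharp \otimes id^\sharp$, the factor $(y^{-1}\otimes 1)(y\otimes 1) = I_{2n}\otimes 1$ cancels in the middle of the matrix product, leaving $(y\otimes 1)(1\otimes B^{-1}\cdot By) = (y\otimes 1)(1\otimes y)$, whose $(i,j)$ entry is precisely $\sum_k y_{ik}\otimes y_{kj}$. The main obstacle is exactly this last step: unlike the clean invariance $\varphi^\sharp(\hat\theta\theta) = 1\otimes \hat\theta\theta$, the matrix $\varphi^\sharp(\theta\hat\theta)$ carries a non-trivial $y$-conjugation that must be tracked through two tensor factors, and only the appearance of the matching conjugation in $\varphi^\sharp(B^{-1})$ forces everything to collapse to the expected diagonal coaction.
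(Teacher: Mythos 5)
Your proposal is correct and follows essentially the same route as the paper: apply $(\delta^\sharp\otimes id^\sharp)$ to \eqref{coacGrass}--\eqref{coacy}, deduce the invariance $\varphi^\sharp(\hat\theta\theta)=1\otimes\hat\theta\theta$ (hence of $A^{-1}$) from the first identity and $y^TJy=J$, and handle $B^{-1}$ via the conjugation formula for $\varphi^\sharp(\theta\hat\theta)$, which is exactly the paper's expression $-y_{is}(Jy^TJ)_{tj}\otimes(\theta\hat\theta)_{st}$ rewritten using $-Jy^TJ=y^{-1}$. The only difference is cosmetic: you package the $y$-dependence as a matrix conjugation $(y\otimes 1)(1\otimes\cdot)(y^{-1}\otimes 1)$, which makes the final cancellation in \eqref{coacy} slightly more transparent than the index form used in the paper.
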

\begin{proof}
The first equation is a direct consequence of formula \eqref{coacGrass}. This first equation also shows immediately that $\hat{\theta}\theta$ is $O(m)\times Sp(2n)$-invariant:
\begin{eqnarray}
\label{invthetatheta}
(\delta^\sharp\otimes id^\sharp)\circ\mu^\sharp\left((\hat{\theta}\theta)_{ij}\right)&=&1\otimes (\hat{\theta}\theta)_{ij}.
\end{eqnarray}
This implies that $A^{-1}=(I_m-\hat{\theta}\theta)^{-\frac{1}{2}}$ is $O(m)\times Sp(2n)$-invariant. Equation \eqref{coacx} therefore yields
\begin{eqnarray*}
(\delta^\sharp\otimes id^\sharp)\circ\mu^\sharp(x_{ij})&=&\sum_{l=1}^m\sum_{k=1}^m(x_{ik}\otimes (xA)_{kl})(1\otimes A^{-1}_{lj})=\sum_{k=1}^mx_{ik}\otimes x_{kj}.
\end{eqnarray*}
The first equation in the lemma also yields
\begin{eqnarray*}
(\delta^\sharp\otimes id^\sharp)\circ\mu^\sharp\left((\theta\hat{\theta})_{ij}\right)&=&-y_{is}(Jy^TJ)_{tj}\otimes (\theta\hat{\theta})_{st}
\end{eqnarray*}
from which the formula
\begin{eqnarray*}
(\delta^\sharp\otimes id^\sharp)\circ\mu^\sharp(B^{-1})_{ij}&=&-y_{is}(Jy^TJ)_{tj}\otimes (B^{-1})_{st}
\end{eqnarray*}
follows. The last formula in the lemma is a direct consequence of this result and equation \eqref{coacy}.
\end{proof}

The algebra $\mathfrak{o}(m)$ is generated by the derivations $L_{ij}\in$ Der$\cC^\infty(O(m))$, $1\le i,j\le m$, defined by
\begin{eqnarray*}
L_{ij}(x_{kl})&=&\delta_{jk}x_{il}-\delta_{ik}x_{jl}.
\end{eqnarray*}
The algebra $\mathfrak{sp}(2n)$ is generated by the derivations $L_{i+m,j+m}\in$ Der$\cC^\infty(Sp(2n))$, $1\le i, j\le 2n$, defined by
\begin{eqnarray*}
L_{i+m,j+m}(y_{kl})&=&J_{kj}y_{il}+J_{ki}y_{jl}.
\end{eqnarray*}

Lemma \ref{defLiealg2} for this case means that the Lie superalgebra of invariant superderivations of $OSp(m|2n)$ is generated by the derivations $K_{ij}$, $1\le i\le j\le m+2n$ defined as
\begin{equation}
\label{defK}
K_{ij}=\begin{cases}
\left(\delta^\sharp_0\circ L_{ij}\otimes id^\sharp\right)\circ\mu^\sharp & 1\le i\le j\le m\\
\sum_{l=1}^{2n}J_{j-m,l}\left(\delta^\sharp_0\circ \partial_{\theta_{li}}\otimes id^\sharp\right)\circ\mu^\sharp & 1\le i\le m, \quad m+1\le j\le m+2n\\
\left(\delta^\sharp_0\circ L_{ij}\otimes id^\sharp\right)\circ\mu^\sharp & m+1\le i\le j\le m+2n.
\end{cases}
\end{equation}
Straightforward but tedious calculations (which have to be performed independently for the three different types of derivations and the four different types of matrix elements) show that these derivations satisfy
\begin{eqnarray}
\label{KonX}
K_{\alpha\beta}(X_{\gamma\delta})&=&(-1)^{(1+[\delta])([\alpha]+[\beta])}\left(g_{\gamma\beta}X_{\alpha\delta}-(-1)^{[\alpha][\beta]}g_{\gamma\alpha}X_{\beta\delta}\right).
\end{eqnarray}
As an example we calculate the case $\alpha,\beta,\gamma,\delta\le m$.
\begin{eqnarray*}
K_{ij}(X_{kl})&=&\sum_{s=1}^m\left(\delta^\sharp_0\circ L_{ij}(xA)_{ks}\right)X_{sl}-\sum_{t=1}^{2n}\left(\delta^\sharp_0\circ L_{ij}(x\hat{\theta}y)_{kt}\right)X_{t+m,l}\\
&=&\delta_{jk}X_{il}-\delta_{ik}X_{jl}.
\end{eqnarray*}
We also define the derivations $K_{\alpha\beta}$ for $\alpha> \beta$ by formula \eqref{KonX}. This immediately implies that $K_{\alpha\beta}=-(-1)^{[\alpha][\beta]}K_{\beta\alpha}$.

\begin{theorem}
\label{liealg1}
The super-derivations $K_{ij}\in Der\cO_{OSp(m|2n)}(O(m)\times Sp(2n))$ defined in equation \eqref{defK} or \eqref{KonX} satisfy the commutation relations of the standard generators in equation \eqref{stanosp} and therefore generate the Lie superalgebra $\mathfrak{osp}(m|2n)$.
\end{theorem}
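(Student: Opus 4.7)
The plan is to reduce the identity \eqref{stanosp} for the operators $K_{ij}$ to a verification on algebra generators. First I note that both sides are super-derivations: the left hand side $[K_{ij},K_{kl}]$ because it is the super-commutator of derivations, and the right hand side because it is a linear combination of the $K_{\alpha\beta}$. Since two super-derivations of a supercommutative superalgebra agree as soon as they agree on a generating set, and since by the observation following Theorem \ref{Xifo} the matrix entries $X_{\gamma\delta}$ generate the algebra $\cO_{OSp(m|2n)}(O(m)\times Sp(2n))\cong\cC^\infty(O(m)\times Sp(2n))\otimes\Lambda_{2mn}$, it suffices to evaluate both sides of \eqref{stanosp} on each $X_{\gamma\delta}$, after which the identity reduces to a purely algebraic one involving the $X$'s and the entries of the metric $g$.

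Once restricted to the generators the computation is driven entirely by equation \eqref{KonX}. A single application of \eqref{KonX} produces a two-term expression $c\,X_{i\delta}+c'X_{j\delta}$, so iterating gives four terms, and both $K_{ij}K_{kl}(X_{\gamma\delta})$ and $K_{kl}K_{ij}(X_{\gamma\delta})$ are four-term sums whose graded difference must match the eight terms produced by applying \eqref{KonX} to the four $K$'s on the right of \eqref{stanosp}. The identity is then a matching, coefficient by coefficient, of the monomials $X_{i\delta},X_{j\delta},X_{k\delta},X_{l\delta}$, each coefficient being a quadratic expression in entries of $g$ times a sign factor.

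The hard part will be the sign bookkeeping. The prefactor $(-1)^{(1+[\delta])([\alpha]+[\beta])}$ in \eqref{KonX} couples the column parity $[\delta]$ to the parities of the $K$-indices, and the second application of \eqref{KonX} replaces the original row parity $[\gamma]$ by $[i]$ or $[j]$ inside the newly produced prefactor; additionally one must reconcile the antisupersymmetry $K_{\alpha\beta}=-(-1)^{[\alpha][\beta]}K_{\beta\alpha}$ with the fixed ordering of indices on the right of \eqref{stanosp}. To keep this tractable I would exploit that $g_{\alpha\beta}$ is block-diagonal and vanishes unless $[\alpha]=[\beta]$; this pins down parities along each surviving term and collapses the combinatorics to a small number of representative cases, which I would organize by whether each of $i,j,k,l$ lies in the $I$- or $II$-block.

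Finally, once \eqref{stanosp} is established, the assignment $\cK_{ij}\mapsto K_{ij}$ extends to a Lie superalgebra homomorphism $\osp\to Der\,\cO_{OSp(m|2n)}(O(m)\times Sp(2n))$. Its image lies inside the invariant derivations by Lemma \ref{defLiealg2}, and \eqref{KonX} shows that distinct $K_{\alpha\beta}$ act distinctly on the $X_{\gamma\delta}$, so the homomorphism is injective. Since the superdimension of the invariant derivations equals that of the supergroup (Lemma \ref{defLiealg2}), and coincides with the superdimension $\tfrac{1}{2}m(m-1)+n(2n+1)\,|\,2mn$ of $\osp$, the image exhausts the invariant derivations; hence the $K_{ij}$ generate precisely $\osp$.
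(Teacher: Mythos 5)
Your reduction is the same as the paper's: both sides of \eqref{stanosp} are super-derivations, so it suffices to compare them on $x$, $y$, $\theta$, equivalently on the matrix entries $X_{\gamma\delta}$, and then everything is driven by \eqref{KonX}. (One small imprecision: the $X_{ij}$ generate only the subalgebra of regular functions, not all of $\cC^\infty(O(m)\times Sp(2n))\otimes\Lambda_{2mn}$; the correct statement, which the paper uses implicitly, is that a derivation of this superalgebra is determined by its values on $x$, $y$, $\theta$ because these supply local coordinates and the Grassmann generators.) Where you diverge is in how the identity on generators is actually established: you propose to expand $[K_{ij},K_{kl}](X_{\gamma\delta})$ by iterating \eqref{KonX} and to match the resulting four-term sums against the eight terms on the right, organizing the sign bookkeeping by blocks. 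That computation would indeed close the argument, but as written it is only a plan --- the sign verification, which is the entire nontrivial content of the theorem, is deferred rather than carried out, so the proposal is not yet a complete proof.

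The paper avoids this bookkeeping altogether by a representation-theoretic observation you missed: for a fixed column $\delta$ with $[\delta]=1$ the prefactor in \eqref{KonX} is $+1$, so $K_{\alpha\beta}(X_{\gamma\delta})=g_{\gamma\beta}X_{\alpha\delta}-(-1)^{[\alpha][\beta]}g_{\gamma\alpha}X_{\beta\delta}$ is \emph{literally} the matrix action \eqref{defgenosp} of $\cK_{\alpha\beta}$ on $e_\gamma$ under the identification $e_\gamma\mapsto X_{\gamma\delta}$. Hence on those generators the supercommutators of the $K$'s reproduce the already-known relations \eqref{stanosp} of the matrices $\cK_{ij}$, with no new sign analysis. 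For columns with $[\delta]=0$ the prefactor becomes $(-1)^{[\alpha]+[\beta]}$, i.e.\ the action is twisted by a sign on the odd generators only; since the bracket preserves parity, this twist leaves \eqref{stanosp} intact. This is exactly the case distinction your block-by-block scheme would eventually rediscover, but packaged so that nothing needs to be computed. Your closing paragraph (injectivity plus the dimension count via Lemma \ref{defLiealg2}, $\dim\osp=\tfrac12 m(m-1)+n(2n+1)|2mn$) is a reasonable way to pass from ``satisfy the relations'' to ``generate $\osp$''; note only that what you need there is linear independence of the $K_{ij}$, not merely that they act distinctly, though this too can be read off from \eqref{KonX} by evaluating at the identity point.
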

\begin{proof}
derivations are completely determined by their action on $x$, $y$ and $\theta$, which is equivalent to knowing their action on $X_{\gamma\delta}$ for $1\le \gamma,\delta\le m+2n$. It is clear from equation \eqref{KonX} that the derivations do not mix up $X_{\gamma\delta}$ with $\delta\le m$ and $\delta > m$. The case $\delta> m$ is given by
\begin{eqnarray*}
K_{\alpha\beta}(X_{\gamma \delta})&=&\left(g_{\gamma\beta}X_{\alpha \delta}-(-1)^{[\alpha][\beta]}g_{\gamma\alpha}X_{\beta \delta}\right).
\end{eqnarray*}
Comparison with equation \eqref{defgenosp} shows that the supercommutator evaluated on the $X_{\gamma\delta}$ for $\delta> m$ gives the correct formula. The case $\delta\le m$ is the same except for the fact that the odd generators of the algebra obtain a minus sign, which is unimportant due to the gradation preserving property of the Lie superbracket.
\end{proof}

The $\cO_{OSp(m|2n)}(O(m)\times Sp(2n))$-module
\begin{eqnarray*}
Der\cO_{OSp(m|2n)}(O(m)\times Sp(2n))&=& Der\left( \cC^\infty(O(m)\times Sp(2n))\otimes \Lambda_{2mn}\right)
\end{eqnarray*}
is generated by $L_{ij}$ for $i,j=1,\cdots,m$ and $i,j=m+1,\cdots,m+2n$ and $\partial_{\theta_{ij}}$ for $i=1,\cdots,2n$ and $j=1,\cdots,2n$.  The invariant derivations on $OSp(m|2n)$ defined in equation \eqref{defK} can therefore be expressed in terms of the invariant derivations on $O(m)\times Sp(2n)$ and the Grassmann derivations. This leads to the following theorem.
\begin{theorem}
\label{exposp}
The even elements of $Der\cO_{OSp(m|2n)}(O(m)\times Sp(2n))$ defined in equation \eqref{defK} satisfy the following relations:
\begin{eqnarray*}
K_{ij}&=&L_{ij}\qquad\mbox{for}\quad 1\le i <j\le m \\
K_{i+m,j+m}&=&L_{i+m,j+m}+\sum_{l=1}^m\sum_{p=1}^{2n}\left(\theta_{il}J_{jp}+\theta_{jl}J_{ip}\right)\partial_{\theta_{pl}}\qquad \mbox{for} \quad 1 \le i\le j\le 2n.
\end{eqnarray*}
The odd elements satisfy for $1\le i\le m$ and $1\le j\le 2n$,
\begin{eqnarray*}
K_{i,j+m}&=&\sum_{t=1}^m\sum_{p=1}^{2n}(xA)_{it}J_{jp}\partial_{\theta_{pt}}+\frac{1}{2}\sum_{t=1}^m(xA^{-1}\theta^T)_{tj}L_{ti}\\
&+&\frac{1}{2}\sum_{s,t=1}^{2n}\left((JB^{-1}J)_{tj}(x\hat{\theta})_{is}-\sum_{u,p=1}^{2n}\sum_{r=1}^m(xA)_{ir} (JB^{-1})_{tu}J_{jp}\left(\partial_{\theta_{pr}} B_{us}\right)\right)L_{s+m,t+m}\\
&-&\frac{1}{2}\sum_{s,t,r,u=1}^m\sum_{p=1}^{2n}(xA)_{it}(xA^{-1})_{us}J_{jp}\partial_{\theta_{pt}}\left((xA)_{rs}\right)L_{ur}
\end{eqnarray*}
with $A$ and $B$ as defined in equation \eqref{defAB}.
\end{theorem}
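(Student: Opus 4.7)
The plan is to verify the three stated identities by evaluating both sides on the generating set $\{x_{ab},\,y_{cd},\,\theta_{pq}\}$ of $\cC^\infty(O(m)\times Sp(2n))\otimes\Lambda_{2mn}$; two super derivations on this superalgebra coincide if and only if they agree on generators. The values of the left-hand sides on these generators are controlled by the master formula \eqref{KonX}, together with Theorem~\ref{Xifo}, which expresses each matrix entry $X_{\gamma\delta}$ as a product involving $x$, $y$, $\theta$, and the nilpotent square-root matrices $A$, $B$ of \eqref{defAB}. A crucial auxiliary observation, already visible behind \eqref{invthetatheta}, is that any even derivation $K$ acting on $\theta$ either trivially or by a symplectic rotation of the form $K(\theta_{kl})=J_{kj}\theta_{il}+J_{ki}\theta_{jl}$ kills the quadratic combination $\hat\theta\theta$, and hence annihilates $A$; the analogous statement gives $K(B)=0$.

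For the first even case ($1\le i<j\le m$), the relevant metric entries in \eqref{KonX} vanish, forcing $K_{ij}(\theta_{pq})=0$, so $K_{ij}$ annihilates $A$ and $B$. Then $K_{ij}(X_{I,I})=K_{ij}(x)\,A$ and $K_{ij}(X_{II,II})=B\,K_{ij}(y)$, and reading off \eqref{KonX} yields $K_{ij}(x_{kl})=L_{ij}(x_{kl})$ together with $K_{ij}(y_{kl})=0$, proving $K_{ij}=L_{ij}$. For the second even case the same argument applies: the metric entries in \eqref{KonX} that would couple $x$ to $K_{i+m,j+m}$ vanish, while the induced symplectic rotation on $\theta$ gives $K_{i+m,j+m}(A)=0$ through the auxiliary observation, hence $K_{i+m,j+m}(x)=0$. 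Reading \eqref{KonX} on $X_{II,II}=By$ produces $L_{i+m,j+m}$, and rewriting the action on $\theta$ as a linear combination of the $\partial_{\theta_{pl}}$ (using the antisymmetry of $J$) recovers the stated Grassmann-derivative correction.

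The odd case $K_{i,j+m}$ proceeds by the same scheme but is appreciably more intricate. The direct evaluation of $K_{i,j+m}(\theta_{kl})$ via \eqref{KonX} yields the clean expression $J_{jk}(xA)_{il}$, which singles out the pure-$\partial_\theta$ term $\sum_{t,p}(xA)_{it}J_{jp}\partial_{\theta_{pt}}$. The remaining contributions are then forced by matching the values of $K_{i,j+m}(x_{kl})$ and $K_{i,j+m}(y_{kl})$ against their counterparts computed from $X_{I,I}=xA$ and $X_{II,II}=By$. This step requires $K_{i,j+m}(A)$ and $K_{i,j+m}(B)$, which are nonzero but determined uniquely by the Sylvester-type relations $K(A)A+AK(A)=-K(\hat\theta\theta)$ and $K(B)B+BK(B)=-K(\theta\hat\theta)$, solvable in closed form through the finite expansion of $A$ and $B$. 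The $1/2$ prefactors in front of the $L_{ti}$, $L_{s+m,t+m}$ and $L_{ur}$ terms reflect the symmetrization built into the generators $L_{ab}$ of $\mathfrak{o}(m)$ and $\mathfrak{sp}(2n)$.

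I expect the odd case to be the principal technical hurdle. The existence of such an identity is guaranteed in advance by Lemma~\ref{defLiealg2} and Theorem~\ref{liealg1}, since $K_{i,j+m}$ is already known to lie in $Der\cO_{OSp(m|2n)}(O(m)\times Sp(2n))$ and the latter is freely generated as a module by the $L_{ab}$ and the $\partial_{\theta_{pq}}$. What requires work is the repackaging of $K_{i,j+m}(A^{-1})$ and $K_{i,j+m}(B^{-1})$ into the compact matrix expressions involving $xA^{-1}\theta^T$, $JB^{-1}J$, $x\hat\theta$ and $\partial_{\theta}B$ that appear in the statement; this is a careful, sign-sensitive exercise in Grassmann-valued matrix algebra, made tractable only by the fact that nilpotency of $\theta$ keeps every power series finite.
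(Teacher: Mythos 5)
Your overall strategy is the same as the paper's: both sides are super derivations of $\cC^\infty(O(m)\times Sp(2n))\otimes\Lambda_{2mn}$, so it suffices to compare them on a generating set, with the left-hand side values supplied by \eqref{KonX}. However, your ``crucial auxiliary observation'' is only half true, and the false half is exactly what your argument for the second even case rests on. The contraction $\hat\theta\theta$ is indeed annihilated by the symplectic rotation $K(\theta_{kl})=J_{kj}\theta_{il}+J_{ki}\theta_{jl}$ (this is the infinitesimal version of \eqref{invthetatheta}), so $K_{i+m,j+m}(A)=0$ and your deduction $K_{i+m,j+m}(x)=0$ is fine. But the other contraction $\theta\hat\theta$ is \emph{not} invariant: as the proof of Lemma \ref{Liegract} in the paper shows, it transforms by conjugation under $Sp(2n)$, so infinitesimally $K_{i+m,j+m}(\theta\hat\theta)=[E,\theta\hat\theta]$ for the corresponding $\mathfrak{sp}(2n)$-matrix $E$, which is nonzero for $n\ge 2$ (it happens to vanish when $n=1$ because then $\theta\hat\theta$ is a multiple of $I_2$). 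Consequently $K_{i+m,j+m}(B)\ne 0$ in general, and the step ``reading \eqref{KonX} on $X_{II,II}=By$ produces $L_{i+m,j+m}$'' does not follow as stated: from $K(By)=K(B)y+BK(y)$ one only recovers $K(y)=L_{i+m,j+m}(y)$ after verifying that the Grassmann correction term $\sum_{l,p}(\theta_{il}J_{jp}+\theta_{jl}J_{ip})\partial_{\theta_{pl}}$ applied to $B$ reproduces exactly the commutator-type term $[E,B]$, so that it cancels against $K(B)y$. That verification is a genuine (if short) computation of the same nature as the odd case, and your sketch skips it on the strength of a claim that is false.

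Apart from this, your plan is viable but organized in the heavier direction compared with the paper. The paper never needs $K(A)$, $K(B)$, $K(x)$ or $K(y)$ at all: it evaluates the \emph{proposed right-hand side} on the functions $xA$, $By$ and $\theta$ (which generate together with invertibility of $A$, $B$) and checks agreement with \eqref{KonX}, using only the algebraic identities \eqref{eigB1}--\eqref{eigB2}. Your route---computing the left-hand side on $x$ and $y$ via Sylvester-type relations $K(A)A+AK(A)=-K(\hat\theta\theta)$, then matching module coefficients of $L_{ab}$ (with $a<b$, since $L_{ab}=-L_{ba}$) and $\partial_{\theta_{pq}}$---is legitimate in principle, but it adds the inversion of the Sylvester operator and chain-rule manipulations with $A^{-1}$, $B^{-1}$ on top of the Grassmann matrix algebra, and in any case the substantive content of the theorem, namely the closed-form coefficients in the odd generators $K_{i,j+m}$, is deferred in your proposal rather than established. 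As it stands the proposal therefore has a concrete gap in the even symplectic case and an unexecuted core computation in the odd case.
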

\begin{proof}
The expression for the even generators follows from a straightforward calculation. It also corresponds to the derivative of the $O(m)\times Sp(2n)$-action in Lemma \ref{Liegract}.

In order to prove that the expression for the odd generators holds, it is sufficient to show that left-hand and right-hand side have the same value when evaluated on the functions $x$, $y$ and $\theta$. This is equivalent to having the same value on the functions $xA$, $By$ and $\theta$ since $A$ and $B$ (given in equation \eqref{defAB}) are invertible. The results for the left-hand sides are given in equation \eqref{KonX}. The theorem is therefore proven if the proposed expression for $K_{i,j+m}$ on the right-hand side gives the same results.

We evaluate the proposed expression for $K_{ij}$ on these functions. The case $\theta$ is straightforward,
\begin{eqnarray*}
K_{i,j+m}(\theta_{kl})&=&\sum_{t=1}^m\sum_{p=1}^{2n}(xA)_{it}J_{jp}\partial_{\theta_{pt}}(\theta_{kl})=J_{jk}(xA)_{il}=-g_{k+m,j+m}X_{il}.
\end{eqnarray*}
The case $xA$ is calculated as
\begin{eqnarray*}
& &K_{ij+m}((xA)_{kl})\\
&=&\sum_{t=1}^m\sum_{p=1}^{2n}(xA)_{it}J_{jp}\partial_{\theta_{pt}}((xA)_{kl})+\frac{1}{2}\theta_{jl}\delta_{ik}-\frac{1}{2}(xA^{-1}\theta^T)_{kj}(xA)_{il}\\
&+&\frac{1}{2}\sum_{s,t,r=1}^m\sum_{p=1}^{2n}(xA)_{it}(xA^{-1})_{ks}J_{jp}\partial_{\theta_{pt}}\left(A_{rs}\right)A_{rl}-\frac{1}{2}\sum_{t=1}^m\sum_{p=1}^{2n}(xA)_{it}J_{jp}\partial_{\theta_{pt}}\left((xA)_{kl}\right).
\end{eqnarray*}
Subtract the last term from the first and replace $(xA)_{kl}$ by $\sum_{r,s=1}^m(A_{rl})(A_{rs}(xA^{-1})_{ks}))$. We obtain 
\begin{eqnarray*}
&&\frac{1}{2}\sum_{t=1}^m(xA)_{it}\sum_{r,s=1}^m\sum_{p=1}^{2n}J_{jp}\partial_{\theta_{pt}}(A_{rl})(A_{rs}(xA^{-1})_{ks}))+\frac{1}{2}\theta_{jl}\delta_{ik}-\frac{1}{2}(xA^{-1}\theta^T)_{kj}(xA)_{il}\\
&+&\frac{1}{2}\sum_{s,t,r=1}^m\sum_{p=1}^{2n}(xA)_{it}(xA^{-1})_{ks}J_{jp}\partial_{\theta_{pt}}\left(A_{rs}\right)A_{rl}\\
&=&\frac{1}{2}\sum_{t,s=1}^m\sum_{p=1}^{2n}(xA)_{it}J_{jp}\partial_{\theta_{pt}}(A^2_{ls})(xA^{-1})_{ks}+\frac{1}{2}\theta_{jl}\delta_{ik}-\frac{1}{2}(xA^{-1}\theta^T)_{kj}(xA)_{il}\\
&=&\delta_{ik}\theta_{jl}=g_{ki}X_{j+m,l}.
\end{eqnarray*}


The case $By$ follows from a similar but more complicated calculation,
\begin{eqnarray*}
& &K_{ij+m}((By)_{kl})\\
& =&\sum_{t=1}^m\sum_{p=1}^{2n}(xA)_{it}J_{jp}\partial_{\theta_{pt}}((By)_{kl})-\frac{1}{2}\sum_{s,t=1}^{2n}(JB^{-1}J)_{tj}(x\hat{\theta})_{is}\left((BJ)_{kt}y_{sl}+(BJ)_{ks}y_{tl}\right)\\
&+&\frac{1}{2}\sum_{s,t=1}^{2n}\sum_{u,p=1}^{2n}\sum_{r=1}^m(xA)_{ir} (JB^{-1})_{tu}J_{jp}\left(\partial_{\theta_{pr}} B_{us}\right)\left((BJ)_{kt}y_{sl}+(BJ)_{ks}y_{tl}\right)\\
&=&\sum_{t=1}^m\sum_{p=1}^{2n}(xA)_{it}J_{jp}\partial_{\theta_{pt}}((By)_{kl})+\frac{1}{2}\sum_{s,t=1}^{2n}(JB^{-1}J)_{tj}(x\hat{\theta})_{is}(BJ)_{ks}y_{tl}\\
&+&\frac{1}{2}J_{kj}(x\hat{\theta}y)_{il}-\frac{1}{2}\sum_{p=1}^{2n}\sum_{r=1}^m(xA)_{ir} J_{jp}\left(\partial_{\theta_{pr}} (By)_{kl}\right)\\
&+&\frac{1}{2}\sum_{s,t=1}^{2n}\sum_{u,p=1}^{2n}\sum_{r=1}^m(xA)_{ir} (JB^{-1})_{tu}J_{jp}\left(\partial_{\theta_{pr}} B_{us}\right)(BJ)_{ks}y_{tl}.
\end{eqnarray*}
Subtract the fourth term from the first and replace $(By)_{kl}$ by $\sum_{s,r,u=1}^{2n}(BJ)_{ks} (JB^{-1})_{ru}B_{us}y_{rl}$. We obatin
\begin{eqnarray*}
&&\frac{1}{2}\sum_{t=1}^m\sum_{p,s,r,u=1}^{2n}(xA)_{it}J_{jp}\partial_{\theta_{pt}}((BJ)_{ks})(JB^{-1})_{ru}B_{us}y_{rl}+\frac{1}{2}\sum_{s,t=1}^{2n}(JB^{-1}J)_{tj}(x\hat{\theta})_{is}(BJ)_{ks}y_{tl}\\
&+&\frac{1}{2}J_{kj}(x\hat{\theta}y)_{il}+\frac{1}{2}\sum_{s,t=1}^{2n}\sum_{u,p=1}^{2n}\sum_{r=1}^m(xA)_{ir} (JB^{-1})_{tu}J_{jp}\left(\partial_{\theta_{pr}} B_{us}\right)(BJ)_{ks}y_{tl}\\
&=&\frac{1}{2}\sum_{t=1}^m\sum_{p,r,u=1}^{2n}(xA)_{it}J_{jp}\partial_{\theta_{pt}}((BJB^T)_{ku})(JB^{-1})_{ru}y_{rl}+\frac{1}{2}J_{kj}(x\hat{\theta}y)_{il}\\
&+&\frac{1}{2}\sum_{s,t=1}^{2n}(JB^{-1}J)_{tj}(x\hat{\theta})_{is}(BJ)_{ks}y_{tl}.
\end{eqnarray*}
By using equations \eqref{eigB1} and \eqref{eigB2}, we can further simplify the obtained expression to
\begin{eqnarray*}
&&-\frac{1}{2}(xA\hat{\theta}B^{-1}y)_{il}J_{jk}+\frac{1}{2}\sum_{r=1}^{2n}(xA\theta^T)_{ik}(JB^{-1}J)_{rj}y_{rl}+\frac{1}{2}J_{kj}(x\hat{\theta}y)_{il}\\
&&+\frac{1}{2}\sum_{t=1}^{2n}(JB^{-1}J)_{tj}(x\theta^TB^T)_{ik}y_{tl}=-J_{jk}(x\hat{\theta}y)_{il}=g_{k+m,j+m}X_{i,l+m}.
\end{eqnarray*}
This completes the proof of the theorem.
\end{proof}

\begin{remark}
A choice of embedding, different from Theorem \ref{Xifo}, is given by
\begin{eqnarray*}
X&=&\left( \begin{array}{cc} xA&x\hat{\theta}\\  \vspace{-3.5mm} \\y\theta&yB\end{array}\right)
\end{eqnarray*}
where $A$ and $B$ are still given by equation \eqref{defAB}. This choice would have the advantage that both $K_{ij}=L_{ij}$ and $K_{i+m,j+m}=L_{i+m,j+m}$ would hold in Theorem \ref{exposp}. The expression for $K_{i,j+m}$ would however be harder to obtain. Another choice could be
\begin{eqnarray*}
X&=&\left( \begin{array}{cc} Ax&\hat{\theta}\\  \vspace{-3.5mm} \\y\theta x&yB\end{array}\right),
\end{eqnarray*}
which would yield results very similar to Theorem \ref{exposp}.
\end{remark}

\subsection{Invariant integration on $OSp(m|2n)$}

The action of $(O(m)\times Sp(2n),\osp)$ on $\underline{OSp(m|2n)}$ as calculated in the previous section can now be used to construct the invariant integral on $OSp(m|2n)$. This integral can also be constructed iteratively from the supersphere integral (see e.g. \cite{MR2539324, DBE1}) according to the formula before Corollary $3$ in \cite{CouOSp}. That approach could also be useful to calculate integrals of monomials by extending the approach to $O(m)$ in \cite{MR2385262}.

Before we derive the explicit expression for the invariant integral on $OSp(m|2n)$ we need the following technical lemma.
\begin{lemma}
\label{diffeqdet}
The differential equation
\begin{eqnarray*}
\partial_{\theta_{ij}}\det(I-\hat{\theta}\theta)&=&2\left((I-\hat{\theta}\theta)^{-1}\hat{\theta}\right)_{ji}\det(I-\hat{\theta}\theta)
\end{eqnarray*}
holds for all $i, j$ satisfying $1\le i\le 2n$ and $1\le j\le m$.
\end{lemma}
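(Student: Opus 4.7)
The plan is to reduce the claim to the Jacobi / trace formula for the derivative of a determinant, applied to the $m\times m$ matrix $M=I-\hat\theta\theta$. The crucial point is that every entry $(\hat\theta\theta)_{kl}$ is a quadratic expression in odd variables, hence even; so $M$ commutes with itself and with its minors, and the ordinary identity $\partial\det M=\det M\cdot\mathrm{tr}(M^{-1}\partial M)$ is valid. Since $\partial_{\theta_{ij}}$ is an odd derivation applied to the even object $\det M$, and since all of $((I-\hat\theta\theta)^{-1})_{lk}$ and $\det(I-\hat\theta\theta)$ are even, there is no sign ambiguity in writing
\[
\partial_{\theta_{ij}}\det(I-\hat\theta\theta)=-\det(I-\hat\theta\theta)\sum_{k,l}\bigl((I-\hat\theta\theta)^{-1}\bigr)_{lk}\,\partial_{\theta_{ij}}(\hat\theta\theta)_{kl}.
\]

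Next I would compute the ``inner'' derivative. Writing $(\hat\theta\theta)_{kl}=\sum_{p,q}\theta_{pk}J_{pq}\theta_{ql}$ and applying the graded Leibniz rule (the minus sign arising from moving $\partial_{\theta_{ij}}$ past the odd factor $\theta_{pk}$), one obtains
\[
\partial_{\theta_{ij}}(\hat\theta\theta)_{kl}=\delta_{kj}(J\theta)_{il}-\delta_{lj}(\theta^{T}J)_{ki}=-\delta_{kj}\hat\theta_{li}-\delta_{lj}\hat\theta_{ki},
\]
where in the last equality I use $J^{T}=-J$ to identify $(J\theta)_{il}=-\hat\theta_{li}$. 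Notice this expression is manifestly symmetric in $(k,l)$, which already reflects the fact that $\hat\theta\theta$, hence $I-\hat\theta\theta$ and its inverse, is a symmetric matrix (a quick independent check using $\theta_{pk}\theta_{ql}=-\theta_{ql}\theta_{pk}$ and $J^{T}=-J$ confirms the symmetry of $\hat\theta\theta$).

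Substituting back, the trace splits into two pieces,
\[
\sum_{k,l}\bigl((I-\hat\theta\theta)^{-1}\bigr)_{lk}\,\partial_{\theta_{ij}}(\hat\theta\theta)_{kl}=-\sum_{l}\bigl((I-\hat\theta\theta)^{-1}\bigr)_{lj}\hat\theta_{li}-\sum_{k}\bigl((I-\hat\theta\theta)^{-1}\bigr)_{jk}\hat\theta_{ki}.
\]
The second sum is immediately $\bigl((I-\hat\theta\theta)^{-1}\hat\theta\bigr)_{ji}$. For the first sum I use the symmetry of $(I-\hat\theta\theta)^{-1}$ to swap the two indices $l,j$, producing the same quantity. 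Adding them yields $-2\bigl((I-\hat\theta\theta)^{-1}\hat\theta\bigr)_{ji}$, and the two minus signs cancel to give the claimed identity.

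The only place that requires genuine care is the Grassmann bookkeeping in the two steps above: first, checking that the odd derivation $\partial_{\theta_{ij}}$ commutes past the even factors $((I-\hat\theta\theta)^{-1})_{lk}$ and $\det(I-\hat\theta\theta)$ without spurious signs, and second, the symmetry argument that allows the two $\hat\theta$-terms to be combined. Everything else is a straightforward application of the classical determinant derivative formula to an even matrix of superfunctions.
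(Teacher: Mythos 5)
Your proof is correct, and it takes a slightly different route from the paper's. The paper proves the lemma by writing $\det(I-\hat{\theta}\theta)=\exp\bigl(-\sum_{k\ge1}\mathrm{tr}(\hat{\theta}\theta)^{k}/k\bigr)$ (a finite sum by nilpotency of the Grassmann variables) and differentiating the exponent term by term, the factor $2$ emerging when the two Leibniz terms in $\partial_{\theta_{ij}}\mathrm{tr}(\hat{\theta}\theta)^{k}$ are combined; you instead apply Jacobi's formula $\partial\det M=\det M\,\mathrm{tr}(M^{-1}\partial M)$ directly to $M=I-\hat{\theta}\theta$, compute $\partial_{\theta_{ij}}(\hat{\theta}\theta)_{kl}=-\delta_{kj}\hat{\theta}_{li}-\delta_{lj}\hat{\theta}_{ki}$, and extract the factor $2$ from the symmetry of $\hat{\theta}\theta$, hence of $(I-\hat{\theta}\theta)^{-1}$. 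The two routes are close cousins (Jacobi's formula is the derivative of the $\exp\circ\mathrm{tr}\circ\log$ identity), but yours avoids the series manipulation at the price of having to justify Jacobi's formula over the Grassmann algebra; you address this adequately, since the entries of $M$ are even and therefore central, an odd derivation obeys the ordinary chain rule on polynomials in even elements, and $M$ is invertible because $\det M$ is $1$ plus a nilpotent element (a point worth stating explicitly). Your index bookkeeping, including $(J\theta)_{il}=-\hat{\theta}_{li}$ via $J^{T}=-J$ and the final symmetry step, checks out, and in fact the paper's factor $2$, hidden in its ``straightforward calculation,'' rests on exactly the same symmetry that your argument makes explicit.
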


\begin{proof}
We use the well-known formula $\det N=\exp(-\mbox{tr} M)$ with $M=-\ln(N)$. This implies
\begin{eqnarray*}
\det(I-\hat{\theta}\theta)&=&\exp\left(-\sum_{k=1}^\infty \frac{\mbox{tr}(\hat{\theta}\theta)^{k}}{k}\right)
\end{eqnarray*}
where the summations are actually finite because of the nilpotency of the Grassmann variables. A straightforward calculation then shows
\begin{eqnarray*}
\partial_{\theta_{ij}}\det(I-\hat{\theta}\theta)&=&\exp\left(-\sum_{k=1}^\infty \frac{\mbox{tr}(\hat{\theta}\theta)^{k}}{k}\right)\left(2\sum_{k=1}^\infty \sum_{l=1}^m (\hat{\theta}\theta)^{k-1}_{jl}\hat{\theta}_{li}\right),
\end{eqnarray*}
which proves the lemma.
\end{proof}
Lemma \ref{diffeqdet} implies that for $A$ defined in equation \eqref{defAB},
\begin{eqnarray}
\nonumber
\sum_{t=1}^mA^2_{lt}\partial_{\theta_{jt}}(\det A)^{-1}&=&\sum_{t=1}^mA^2_{lt}\partial_{\theta_{jt}}(\det I-\hat{\theta}\theta)^{-\frac{1}{2}}\\
\label{diffdetinv}
&=&-\frac{1}{2}\sum_{t=1}^mA^2_{lt}(\det I-\hat{\theta}\theta)^{-\frac{3}{2}}2\left((I-\hat{\theta}\theta)^{-1}\hat{\theta}\right)_{tj}\det(I-\hat{\theta}\theta)\\
\nonumber
&=&-\hat{\theta}_{lj}(\det A)^{-1}.
\end{eqnarray}

The obtained results can now be used to prove the main result of integration on $OSp(m|2n)$. We introduce the ordinary Berezin integral on $\Lambda_{2mn}$, see e.g. \cite{Berezin, MR778559, MR574696} as
\begin{eqnarray*}
\int_{B_{2mn}}&=&\partial_{\theta_{2n,m}}\partial_{\theta_{2n,m-1}}\cdots\partial_{\theta_{2n,1}}\partial_{\theta_{2n-1,m}}\cdots\partial_{\theta_{1,1}}.
\end{eqnarray*}

Since $Sp(2n)$ is not compact we need to make a distinction between two types of functions on $OSp(m|2n)$. We denote the functions with compact support as
\[\cO_{OSp(m|2n)}(O(m)\times Sp(2n))_c=\cC^\infty_c(O(m)\times Sp(2n))\otimes\Lambda_{2mn}\]
and the Hopf algebra of regular functions as
\[
\begin{aligned}
\cO_{OSp(m|2n)}(O(m)\times Sp(2n))_0&=\cC^\infty(O(m)\times Sp(2n))_0\otimes\Lambda_{2mn}\\
&=Alg(x_{ij},y_{kl},\theta_{ki})=Alg(X_{ij}).
\end{aligned}
\]

There is a unique Haar measure on $Sp(2n)$,
\begin{eqnarray*}
 \int_{Sp(2n)} &:& \cC^\infty_c(Sp(2n))\to\mR.
\end{eqnarray*}
There is also a unique invariant Hopf-algebraical integral on $Sp(2n)$, which leads to the invariant linear functional
\begin{eqnarray*}
 \int_{Sp(2n)_0} &:& \cC^\infty(Sp(2n))_0\to\mR.
\end{eqnarray*}
This integral can be identified with the Haar measure on $USp(2n)$,
\begin{eqnarray*}
\int_{Sp(2n)_0}f(y_{kl})&=&\int_{USp(2n)}f(z_{kl}),
\end{eqnarray*}
where we make a substitution to the monomials on $USp(2n)$ determined by $z^TJz=J$ and $z^\dagger z=I_{2n}$.

\begin{theorem}\label{InvInt}
The unique invariant integral on $OSp(m|2n)$,
\begin{eqnarray*}
\int_{OSp(m|2n)_0}:\cO_{OSp(m|2n)}(O(m)\times Sp(2n))_0=\cC^\infty(O(m)\times Sp(2n))_0\otimes \Lambda_{2mn}\to\mR
\end{eqnarray*}
is given by
\begin{eqnarray*}
\int_{OSp(m|2n)_0}\cdot&=&\int_{O(m)\times Sp(2n)_0}\int_{B_{2mn}}\left(\det A\right)^{-1}\cdot,
\end{eqnarray*}
where $A$ is defined in equation \eqref{defAB}.
\end{theorem}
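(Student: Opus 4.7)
The strategy is to invoke Lemma \ref{invHopf}, which characterizes a left-invariant integral via two conditions: invariance under the left coaction $(\delta_g^\sharp\otimes\int)\circ\mu^\sharp$ of $O(m)\times Sp(2n)$, and annihilation by every $X\in\osp$. Uniqueness of such a functional on $\cO_{OSp(m|2n)}(O(m)\times Sp(2n))_0$ is known from \cite{MR1845224}, so it is enough to exhibit that the candidate $\int_{O(m)\times Sp(2n)_0}\int_{B_{2mn}}(\det A)^{-1}\cdot$ satisfies both properties.

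The first (group-level) condition reduces to a single observation. Lemma \ref{Liegract}, specifically equation \eqref{invthetatheta}, shows that the matrix $\hat{\theta}\theta$ is fixed pointwise by the left coaction of $O(m)\times Sp(2n)$. Hence every polynomial in its entries is fixed too; in particular $A^2=I_m-\hat\theta\theta$, $\det A$, and $(\det A)^{-1}$ are invariant. Combined with left invariance of the Haar measure on $O(m)\times Sp(2n)_0$ and the fact that the Berezin integral commutes with the coaction on the Grassmann variables (again by Lemma \ref{Liegract}), this yields the required invariance. Differentiating this group invariance at the identity automatically gives annihilation by the even part $\mathfrak{o}(m)\oplus\mathfrak{sp}(2n)$ of $\osp$.

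The substance of the proof lies in verifying $\int_{OSp(m|2n)_0}\circ K_{i,j+m}=0$ for $1\le i\le m$ and $1\le j\le 2n$. My plan is to plug the explicit expression for $K_{i,j+m}$ from Theorem \ref{exposp} into this integral and apply two kinds of integration by parts. On the Grassmann side, $\int_{B_{2mn}}\partial_{\theta_{pt}}(\cdot)=0$, since the Berezin integral already contains $\partial_{\theta_{pt}}$ and a second application gives $\partial_{\theta_{pt}}^{\,2}=0$. On the classical side, left-invariance of the Haar measure ensures $\int L_{\alpha\beta}(\cdot)=0$. Transferring the Grassmann derivative in the first summand of $K_{i,j+m}$ onto $(\det A)^{-1}(xA)_{it}$, and transferring $L_{ti}$, $L_{s+m,t+m}$, and $L_{ur}$ in the remaining three summands onto their $x$- and $y$-dependent coefficients, produces residual terms that must be shown to cancel against one another.

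The main obstacle is assembling these residuals into zero. The decisive algebraic input is the identity
\[
\partial_{\theta_{pt}}(\det A)^{-1}=-(A^{-2}\hat{\theta})_{tp}(\det A)^{-1},
\]
which follows from Lemma \ref{diffeqdet} and is essentially equation \eqref{diffdetinv}; it controls the residual obtained by differentiating $(\det A)^{-1}$. Alongside this, the matrix identities \eqref{eigB1} and \eqref{eigB2} control how the $\theta$-, $A$-, and $B$-dependent coefficients of the $L_{\alpha\beta}$-terms transform. The required bookkeeping parallels the matrix manipulations used to prove Theorem \ref{exposp}: each piece of the complicated formula for $K_{i,j+m}$ is precisely the one needed to balance the residuals generated by the two integrations by parts. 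Once this single careful cancellation is verified, $\int_{OSp(m|2n)_0}$ annihilates all of $\osp$, and Lemma \ref{invHopf} together with the known uniqueness delivers the theorem.
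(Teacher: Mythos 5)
Your proposal follows the paper's own proof essentially step for step: uniqueness is quoted from \cite{MR1845224}, the $O(m)\times Sp(2n)$-invariance (and hence annihilation by the even generators) comes from the invariance of $\hat{\theta}\theta$ in Lemma \ref{Liegract}, and annihilation by the odd generators of Theorem \ref{exposp} is obtained by Berezin and Haar integration by parts, with the residual condition reducing exactly to the differential equation \eqref{diffdetinv} for $(\det A)^{-1}$ supplied by Lemma \ref{diffeqdet}. The only cosmetic difference is that the paper notes the $L_{s+m,t+m}$-term produces no residual at all because its coefficient is independent of $y$, so the identities \eqref{eigB1}--\eqref{eigB2} are not needed in this step; otherwise the two arguments coincide.
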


\begin{remark}
\label{remarkintint}
Before we prove this formula we show how this integration should be interpreted. Most relevant integrands will be functions expressed in terms of the matrix elements $X_{ij}$ of the fundamental representation of $OSp(m|2n)$ on $\mR^{m|2n}$, see formula \eqref{corepOSp}. The formula in Theorem \ref{InvInt} then states that for a function $f(X)=f(X_{11},\cdots,X_{m+2n,m+2n})$
\begin{eqnarray*}
\int_{OSp(m|2n)_0}f&=&\int_{O(m)\times Sp(2n)_0,x\times y}\int_{B_{2mn},\theta}\left(\det A(\theta)\right)^{-1}f\left(X(x,y,\theta)\right),
\end{eqnarray*}
with $X(x,y,\theta)$ given in Theorem \ref{Xifo}.
\end{remark}

\begin{proof}[Proof of Theorem \ref{InvInt}]
The result in Theorem 1 in \cite{MR1845224} states that there is at most one linear functional $\int$ which satisfies $(id^\sharp\otimes\int)\circ\mu^\sharp=1_{OSp(m|2n)}\int$, so only the invariance of $\int_{OSp(m|2n)_0}$ needs to be proven.
By Lemma \ref{invHopf}, it suffices to show that the proposed expression is
\begin{itemize}
\item $O(m)\times Sp(2n)$-invariant for the action given in Lemma \ref{Liegract} and
\item satisfies $\int_{OSp(m|2n)_0}\circ K_{ij}=0$ for all $1\le i\le j\le m+2n$ with $K_{ij}$ given in Theorem \ref{exposp}.
\end{itemize}

The integral $\int_{O(m)\times Sp(2n)_0}$ is $O(m)\times Sp(2n)$-invariant and so is $\int_{B_{2mn}}$. The last statement can be seen immediately from the evaluation of $\int_{B_{2mn}}$ on monomials in $\Lambda_{2mn}$, which only gives a non-zero result on the unique highest order monomial. The $O(m)\times Sp(2n)$-invariance of $\int_{OSp(m|2n)_0}$ then follows from the invariance of $\hat{\theta}\theta$ in equation \eqref{invthetatheta},
\begin{eqnarray*}
\left(\delta^\sharp\otimes \int_{OSp(m|2n)_0}\right)\circ\mu^\sharp(f)&=&\left(\delta^\sharp\otimes \int_{O(m)\times Sp(2n)_0}\int_{B_{2mn}}\right)\circ\mu^\sharp(\det(A^{-1})f)\\
&=&\int_{O(m)\times Sp(2n)_0}\int_{B_{2mn}}\det(A^{-1})f= \int_{OSp(m|2n)_0}f.
\end{eqnarray*}
This also implies that the relation $\int_{OSp(m|2n)_0}\circ K_{ij}=0$ holds for the even generators.

The condition $\int_{OSp(m|2n)_0}\circ K_{ij}=0$ for the odd elements of $\mathfrak{osp}(m|2n)$ (given in Theorem \ref{exposp}) yields a differential equation for $(\det A)^{-1}$ by partial integration,
\begin{eqnarray*}
& &\sum_{t=1}^m\sum_{p=1}^{2n}\partial_{\theta_{pt}}(xA)_{it}J_{jp}\det A^{-1}+\frac{1}{2}\sum_{t=1}^mL_{ti}(xA^{-1}\theta^T)_{tj}\det A^{-1}\\
&=&\frac{1}{2}\left(\sum_{s,t,r,u=1}^m\sum_{p=1}^{2n}L_{ur}(xA)_{it}(xA^{-1})_{us}J_{jp}\partial_{\theta_{pt}}\left((xA)_{rs}\right)\right) \det A^{-1}.
\end{eqnarray*}

The invariance of the proposed expression for the integral is therefore satisfied if this differential equation holds. Straightforward calculations show that this differential equation can be simplified to
\begin{eqnarray*}
& &\left(\sum_{t=1}^m\sum_{p=1}^{2n}\frac{1}{2}\left(\partial_{\theta_{pt}}(xA)_{it}\right)J_{jp}\right) \det A^{-1}+\sum_{t=1}^m\sum_{p=1}^{2n}(xA)_{it}J_{jp}\partial_{\theta_{pt}}\det A^{-1}\\
&=&\frac{1}{2}\left((m-1)\left(xA^{-1}\theta^T\right)_{ij}-\sum_{s,t,r=1}^m\sum_{p=1}^{2n}(xA)_{rt}(xA^{-1})_{is}J_{jp}\left(\partial_{\theta_{pt}}(xA)_{rs}\right)\right) \det A^{-1}.
\end{eqnarray*}

We obtain a $m\times 2n$ matrix of differential equations. This set of equations is equivalent to the set obtained by multiplying the equations with the invertible matrices $(xA)_{il}$ and $J_{jk}$ and summing over $i$ and $j$,
\begin{eqnarray*}
& &\left(\frac{1}{2}\sum_{t,i=1}^mA_{il}\left(\partial_{\theta_{kt}}A_{it}\right)+\sum_{t=1}^mA^2_{lt}\partial_{\theta_{kt}}-\frac{m-1}{2}\hat{\theta}_{lk}+\frac{1}{2}\sum_{t,r=1}^mA_{rt}\left(\partial_{\theta_{kt}}A_{rl}\right)\right)\det A^{-1}\\
&=&\left(\sum_{t=1}^mA^2_{lt}\partial_{\theta_{kt}}-\frac{m-1}{2}\hat{\theta}_{lk}+\frac{1}{2}\sum_{t=1}^m\left(\partial_{\theta_{kt}}A^2_{lt}\right)\right)\det A^{-1}\\
&=&\left(\sum_{t=1}^mA^2_{lt}\partial_{\theta_{kt}}+\hat{\theta}_{lk}\right)\det A^{-1}=0
\end{eqnarray*}
for every $1\le k \le 2n$ and $1\le l\le m$. This is exactly differential equation \eqref{diffdetinv} that $\det A^{-1}$ satisfies.
\end{proof}

\begin{remark}
We have expressed the integral in terms of the matrix elements of the fundamental representation for $OSp(m|2n)$. Another choice for the fundamental representation would be where $Sp(2n)$ acts on the even part and $O(m)$ on the odd part. This is the natural $OSp(m|2n)$-action on $\mR^{2n|m}$. The matrix elements are denoted by $Y_{ij}$ and can again be written in block form as $Y=\left( \begin{array}{cc} Y_{I,I}&Y_{I,II}\\  \vspace{-3.5mm} \\Y_{II,I}&Y_{II,II}\end{array} \right)$, where now $Y_{II}$ has dimension $2n\times 2n$ and so on. The following relations are then equivalent to \eqref{defrelations}:
\begin{eqnarray*}
Y_{I,I}^TJY_{I,I}+Y_{II,I}^TY_{II,I}&=&J,\\
-Y_{I,I}^TJY_{I,II}+Y_{II,I}^TY_{II,II}&=&0,\\
-Y_{I,II}^TJY_{I,II}+Y_{II,II}^TY_{II,II}&=&I_m.
\end{eqnarray*}
We can construct a new embedding
\begin{eqnarray*}
\left( \begin{array}{cc} Y_{I,I}&Y_{I,II}\\  \vspace{-3.5mm} \\Y_{II,I}&Y_{II,II}\end{array} \right)&=&\left( \begin{array}{cc} \sqrt{I_{2n}+\theta\hat{\theta}}\,y& \theta\\  \vspace{-3.5mm} \\-x\hat{\theta}y&x\sqrt{I_m+\hat{\theta}\theta}\end{array} \right).
\end{eqnarray*}
Since the calculations are so similar we can immediately give the result. For a function $f$ of the matrix elements $Y_{ij}$, the invariant integral is given by
\begin{eqnarray*}
\int_{OSp(m|2n)_0}f&=&\int_{O(m)\times Sp(2n)_0,x\times y}\int_{B_{2mn},\theta}\left(\det (I_m+\hat{\theta}\theta)\right)^{-1/2}f\left(Y(x,y,\theta)\right).
\end{eqnarray*}
\end{remark}

The expression obtained in Theorem \ref{InvInt} immediately yields the following result.
\begin{corollary}\label{Int-nondegeneracy}
The invariant integral on $OSp(m|2n)$ is non-degenerate in the sense that the supersymmetric bilinear form
\begin{eqnarray*}
\left(f,g\right)&=&\int_{OSp(m|2n)_0}fg
\end{eqnarray*}
on $\cO_{OSp(m|2n)}(O(m)\times Sp(2n))_0$ is non-degenerate.
\end{corollary}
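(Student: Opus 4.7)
The plan is to factor the proposed bilinear form into its Berezin part and its classical Haar-type part, and to exploit non-degeneracy of each factor separately. The crucial observation will be that the weight $\Delta := (\det A)^{-1}$ is a unit in $\Lambda_{2mn}^{(ev)}$: because $\hat\theta\theta$ is nilpotent, $\Delta = 1 + \mbox{(nilpotent)}$, with inverse $\det A \in \Lambda_{2mn}$. Consequently, multiplication by $\Delta$ is a bijection of $\cO_{OSp(m|2n)}(O(m)\times Sp(2n))_0 = \cC^\infty(O(m)\times Sp(2n))_0 \otimes \Lambda_{2mn}$ onto itself, which is what will let me transfer the non-degeneracy question across this weight.

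Given a nonzero $f$, I will exhibit a $g$ with $(f,g)\neq 0$. First expand $\Delta f = \sum_I (\Delta f)_I(x,y)\,\theta^I$, where $I$ ranges over subsets of the Grassmann index set $\{(i,j):1\le i\le 2n,\,1\le j\le m\}$ and $\theta^I$ is the corresponding product in a fixed order. Invertibility of $\Delta$ ensures that some coefficient $(\Delta f)_I$ is a nonzero regular function on $O(m)\times Sp(2n)$. For the complementary subset $I^c$ one has $\theta^I \theta^{I^c} = \pm \theta_{\mathrm{top}}$ and hence $\int_{B_{2mn}} \theta^I \theta^J = \pm \delta_{J,I^c}$. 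Inserting the test element $g = h(x,y)\,\theta^{I^c}$, with $h$ any regular function on $O(m)\times Sp(2n)$, therefore gives
\[
(f,g) \;=\; \int_{O(m)\times Sp(2n)_0}\int_{B_{2mn}} \Delta f \cdot h\,\theta^{I^c} \;=\; \pm\int_{O(m)\times Sp(2n)_0} (\Delta f)_I(x,y)\, h(x,y).
\]
Equivalently (and this is the form in which I would carry out the argument), if $(f,g)=0$ for every $g$, then varying $h$ and $I$ in the displayed identity forces $(\Delta f)_I = 0$ for every $I$, so $\Delta f = 0$, and since $\Delta$ is a unit, $f=0$.

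The step that actually needs justification is non-degeneracy of the classical invariant pairing on $\cC^\infty(O(m)\times Sp(2n))_0$. For the compact factor $O(m)$ it is immediate. For $Sp(2n)_0$ I would invoke the identification, recalled just before Theorem \ref{InvInt}, of the Hopf-algebraic invariant functional with Haar measure on the compact real form $USp(2n)$: the algebra of regular functions (polynomials in the matrix entries $z_{kl}$) is closed under complex conjugation, because unitarity expresses $\overline{z_{kl}}$ as a polynomial in the $z_{ij}$, so for any nonzero regular $f$ the conjugate $\bar f$ is again regular and $\int_{USp(2n)} f\bar f = \int|f|^2 > 0$. I do not expect a serious obstacle beyond this routine check; the single conceptual point is that the Grassmann weight $(\det A)^{-1}$ is invertible and therefore cannot introduce any degeneracy into the pairing, after which the result is just non-degeneracy of the tensor product of a Haar pairing and the Berezin pairing.
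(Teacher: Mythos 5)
Your argument is correct and takes essentially the same route as the paper's proof: both reduce the statement to non-degeneracy of the classical invariant integral on $\cC^\infty(O(m)\times Sp(2n))_0$ by pairing a Grassmann monomial with its complementary monomial under the Berezin integral, the only cosmetic difference being that the paper absorbs the unit weight by choosing the test function $g=\det(A)\,h\,\theta_B$, whereas you absorb it by replacing $f$ with $(\det A)^{-1}f$ and using plain test functions $h\,\theta^{I^c}$. Your additional sketch of why the classical pairing on the regular functions of $Sp(2n)_0$ is non-degenerate (via the identification with Haar measure on $USp(2n)$) fills in a point the paper simply asserts.
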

\begin{proof}
Assume there is a function $f\in\cC^\infty(O(m)\times Sp(2n))_0\otimes \Lambda_{2mn}$ such that $\int_{OSp(m|2n)_0}fg=0$ for every $g\in\cC^\infty(O(m)\times Sp(2n))_0\otimes \Lambda_{2mn}$. The function $f$ is of the form
\begin{eqnarray*}
f&=&\sum_A f_A \theta_A
\end{eqnarray*}
with $\theta_A$ a basis of monomials for $\Lambda_{2mn}$ and $f_A\in\cC^\infty(O(m)\times Sp(2n))_0$. Choose one $A$ such that $f_A$ is nonzero and $\theta_A$ is of the lowest degree. There is a monomial $\theta_B$ in $\Lambda_{2mn} $ such that $\int_{B_{2mn}}\theta_A\theta_B$ is not zero. By taking $g=\det(A) h \theta_B$ with $h\in \cC^\infty(O(m)\times Sp(2n))_0$ we obtain
\begin{eqnarray*}
\int_{O(m)\times Sp(2n)_0}f_A h&=&0
\end{eqnarray*}
for every $h\in \cC^\infty(O(m)\times Sp(2n))_0$. This is a contradiction since $\int_{O(m)\times Sp(2n)_0}$ is non-degenerate on $\cC^\infty(O(m)\times Sp(2n))_0$ and $f_A$ was assumed to be non-zero.
\end{proof}

Similar arguments yield the following property of $\det A$.
\begin{corollary}
\label{detunique}
The differential equations \eqref{diffdetinv} for $f\in\Lambda_{2mn}$
\begin{eqnarray}
\label{diffdetinv2}
\sum_{t=1}^mA^2_{lt}\partial_{\theta_{jt}}f&=&-\hat{\theta}_{lj}f \qquad\mbox{ for all } 1\le l\le m,\quad 1\le j\le 2n
\end{eqnarray}
uniquely determine (up to a multiplicative constant) $f$ to be $\det A^{-1}$.
\end{corollary}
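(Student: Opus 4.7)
The plan is to reduce the uniqueness statement to showing that any solution divided by $\det A^{-1}$ must be a scalar constant, i.e.\ independent of every Grassmann variable $\theta_{jt}$.

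First I would observe that $A^2 = I_m - \hat{\theta}\theta$ has the form ``identity plus nilpotent'' in the even part of $\Lambda_{2mn}$, so it is invertible, and consequently $\det A$ and $\det A^{-1} = (\det(I_m-\hat\theta\theta))^{-1/2}$ are well-defined elements of the even part of $\Lambda_{2mn}$ that are invertible. Given any solution $f \in \Lambda_{2mn}$ of \eqref{diffdetinv2}, I would define $g := f\cdot \det A$, so that $f = g\cdot \det A^{-1}$. Since $\det A^{-1}$ is even, the Leibniz rule gives, for each $j,t$,
\[
\partial_{\theta_{jt}} f \;=\; (\partial_{\theta_{jt}} g)\cdot \det A^{-1} \,+\, g\cdot \partial_{\theta_{jt}}\det A^{-1}.
\]

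Substituting this into \eqref{diffdetinv2} and using that $\det A^{-1}$ itself satisfies \eqref{diffdetinv} (i.e.\ the identity $\sum_t A^2_{lt}\partial_{\theta_{jt}}\det A^{-1} = -\hat{\theta}_{lj}\det A^{-1}$ derived just before the theorem), the terms proportional to $g$ cancel, leaving
\[
\sum_{t=1}^m A^2_{lt}\,(\partial_{\theta_{jt}} g)\cdot \det A^{-1} \;=\; 0
\]
for all $1\le l\le m$ and $1\le j\le 2n$. Multiplying by $\det A$ (invertible) and then by the matrix $(A^{-2})_{sl}$ and summing over $l$ collapses the left-hand side to $\partial_{\theta_{js}} g$, so $\partial_{\theta_{js}} g = 0$ for all admissible indices $j,s$.

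Since the $\theta_{js}$ exhaust the generators of $\Lambda_{2mn}$, vanishing of all partial derivatives forces $g$ to be a scalar constant $c\in\mR$, giving $f = c\cdot \det A^{-1}$ as claimed. I do not anticipate a serious obstacle; the only point requiring care is the invertibility of $A^2$ (and hence of the matrix acting on $\partial_{\theta_{jt}} g$) in the Grassmann algebra, which follows from the standard expansion of $(I+\text{nilpotent})^{-1}$.
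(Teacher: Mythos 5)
Your overall strategy is sound and, modulo one sign point, gives a complete proof. Note that the paper itself does not write out a proof of this corollary: it only remarks that ``similar arguments'' to those for Corollary \ref{Int-nondegeneracy} (expansion of $f$ in Grassmann monomials and an analysis of the lowest nonvanishing degree, after subtracting $f_\emptyset\det A^{-1}$) yield the statement. Your route is different in presentation but equivalent in substance: you divide out the known solution by setting $g=f\det A$, use that $\det A^{-1}$ satisfies \eqref{diffdetinv}, and conclude from the invertibility of the matrix $A^2$ and of $\det A$ in $\Lambda_{2mn}^{(ev)}$ that $\partial_{\theta_{js}}g=0$ for all $j,s$, hence $g\in\mR$. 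This is a clean reduction and arguably more transparent than the degree-by-degree argument the paper alludes to; both ultimately rest on the same facts (invertibility of $I_m-\hat\theta\theta$ and the fact that an element of $\Lambda_{2mn}$ annihilated by all Grassmann derivatives is a constant).

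The one step that is not correct as written is the Leibniz rule. For the odd derivation $\partial_{\theta_{jt}}$ one has $\partial_{\theta_{jt}}(gh)=(\partial_{\theta_{jt}}g)\,h+(-1)^{|g|}g\,\partial_{\theta_{jt}}h$ for homogeneous $g$; the sign is governed by the parity of the \emph{first} factor, not by the evenness of $h=\det A^{-1}$, so your displayed identity fails on the odd part of $g$. The repair is immediate: since $\det A^{-1}$ is even, hence central in the supercommutative algebra $\Lambda_{2mn}$, write $f=\det A^{-1}\,g$ with the even factor on the left, so that $\partial_{\theta_{jt}}f=(\partial_{\theta_{jt}}\det A^{-1})\,g+\det A^{-1}\,\partial_{\theta_{jt}}g$ holds with no extra sign. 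Substituting this and using \eqref{diffdetinv} the terms involving $\hat\theta_{lj}$ cancel exactly, giving $\det A^{-1}\sum_{t=1}^mA^2_{lt}\partial_{\theta_{jt}}g=0$, and the remainder of your argument (invertibility of $\det A^{-1}$ and of $A^{2}$, then $\partial_{\theta_{js}}g=0$ for all $j,s$, so $g$ is a real constant) goes through verbatim.
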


\begin{example}
In case $m=1$, we use the notation $\theta_j=\theta_{j1}$ and $\theta^2=\sum_{j,k=1}^{2n}\theta_j J_{jk}\theta_k$, the invariant integral on $OSp(1|2n)$ is then given by
\begin{eqnarray*}
\int_{OSp(1|2n)_0}&=&\int_{Sp(2n)_0}\int_{B_{2n}}(1-\theta^2)^{-\frac{1}{2}}.
\end{eqnarray*}
\end{example}
In this case we obtain
\[\int_{OSp(1|2n)_0}1=\int_{B_{2n}}\frac{\theta^{2n}}{n!}\frac{\Gamma\left(\frac{1}{2}+n\right)}{\Gamma\left(\frac{1}{2}\right)}=\frac{\Gamma\left(\frac{1}{2}+n\right)}{\Gamma\left(\frac{1}{2}\right)}\not=0.\]
In Proposition $2$ in \cite{MR1845224} it was proven that a Hopf superalgebra $\cA$ admits a left-invariant integral $\int$ with $\int 1_\cA\not=0$ if and only if all right $\cA$-comodules are completely reducible. This agrees with the fact that all $OSp(1|2n)$-representations are completely reducible, see \cite{MR0387363}.

Another case that can be easily simplified is $OSp(m|2)$.
\begin{theorem}
If $n=1$ the relation
\begin{eqnarray*}
\det(I-\hat{\theta}\theta)^{-\frac{1}{2}}&=&1+\frac{1}{2}\mbox{tr}\hat{\theta}\theta=1+\sum_{j=1}^m\theta_{1j}\theta_{2j}
\end{eqnarray*}
holds.
\end{theorem}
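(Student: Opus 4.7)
The plan is to exploit the fact that for $n=1$ the $2\times 2$ product $\theta\hat{\theta}$ collapses to a scalar multiple of the identity, combined with the exponential--trace formula for the determinant established in the proof of Lemma \ref{diffeqdet}.

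When $n=1$ the symplectic form $J$ reduces to the $2\times 2$ antisymmetric matrix $\bigl(\begin{smallmatrix}0&1\\-1&0\end{smallmatrix}\bigr)$, and one reads off $\hat{\theta}_{i,1}=-\theta_{2i}$, $\hat{\theta}_{i,2}=\theta_{1i}$. A direct computation of the $2\times 2$ matrix $\theta\hat{\theta}$ then shows
\[
\theta\hat{\theta}=-\alpha\,I_2,\qquad \alpha:=\sum_{j=1}^{m}\theta_{1j}\theta_{2j},
\]
where the off-diagonal entries vanish because each summand carries a factor $\theta_{ak}^{2}=0$. Since $\alpha$ is even in the $\mathbb{Z}_2$-grading, it is central in $\Lambda_{2m}$.

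Iterating this identity gives $(\hat{\theta}\theta)^{k}=\hat{\theta}(\theta\hat{\theta})^{k-1}\theta=(-\alpha)^{k-1}\hat{\theta}\theta$, which combined with the elementary calculation $\mbox{tr}(\hat{\theta}\theta)=\sum_{i}(-\theta_{2i}\theta_{1i}+\theta_{1i}\theta_{2i})=2\alpha$ yields $\mbox{tr}\bigl((\hat{\theta}\theta)^{k}\bigr)=2(-1)^{k-1}\alpha^{k}$. Substituting into the identity
\[
\det(I-\hat{\theta}\theta)=\exp\Bigl(-\sum_{k\geq 1}\tfrac{\mbox{tr}(\hat{\theta}\theta)^{k}}{k}\Bigr)
\]
used in the proof of Lemma \ref{diffeqdet}, and noting that the logarithmic series terminates by nilpotency of $\alpha$, the exponent collapses to $-2\ln(1+\alpha)$. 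Hence $\det(I-\hat{\theta}\theta)=(1+\alpha)^{-2}$, so $\det(I-\hat{\theta}\theta)^{-1/2}=1+\alpha$. Recognising $\alpha=\tfrac{1}{2}\mbox{tr}(\hat{\theta}\theta)=\sum_{j}\theta_{1j}\theta_{2j}$ finishes the argument.

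No real obstacle is expected: the whole argument is driven by the collapse $\theta\hat{\theta}=-\alpha I_2$, which is specific to $n=1$ and turns what is normally an $m$-dependent power-series expansion into a one-variable simplification. An equivalent strategy would be to check directly that $f=1+\alpha$ satisfies the differential equation \eqref{diffdetinv2} of Corollary \ref{detunique} together with $f|_{\theta=0}=1$, and invoke its uniqueness statement.
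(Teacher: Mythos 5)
Your proof is correct, but it takes a different route from the paper. The paper proves the identity by checking that $1+\tfrac{1}{2}\mbox{tr}\,\hat{\theta}\theta$ satisfies the differential equation \eqref{diffdetinv2}, namely $\sum_{j=1}^mA^2_{kj}\partial_{\theta_{ij}}(1+\tfrac{1}{2}\mbox{tr}\,\hat{\theta}\theta)=-\hat{\theta}_{ki}(1+\tfrac{1}{2}\mbox{tr}\,\hat{\theta}\theta)$, and then invokes the uniqueness statement of Corollary \ref{detunique} -- i.e.\ exactly the ``equivalent strategy'' you mention only in passing at the end. Your main argument instead computes the determinant directly: the $n=1$ collapse $\theta\hat{\theta}=-\alpha I_2$ with $\alpha=\sum_j\theta_{1j}\theta_{2j}$, the resulting identities $\mbox{tr}\bigl((\hat{\theta}\theta)^k\bigr)=2(-1)^{k-1}\alpha^k$, and the $\exp$--$\mbox{tr}$--$\ln$ formula already used in the proof of Lemma \ref{diffeqdet} give $\det(I-\hat{\theta}\theta)=(1+\alpha)^{-2}$, hence $\det(I-\hat{\theta}\theta)^{-1/2}=1+\alpha$; all steps check out, including the vanishing of the off-diagonal entries of $\theta\hat{\theta}$ and the centrality of the even element $\alpha$. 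Your route is more elementary and self-contained (it does not need Corollary \ref{detunique}) and in fact yields the sharper closed form $\det(I-\hat{\theta}\theta)=(1+\alpha)^{-2}$, while the paper's route is shorter on computation because it reuses the uniqueness machinery already set up for the integral; if you do fall back on that alternative, remember that the corollary fixes $f$ only up to a multiplicative constant, so the normalisation $f|_{\theta=0}=1$ (which you noted) is genuinely needed.
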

\begin{proof}
In order to prove this equality we show that the left-hand side above satisfies the same differential equation \eqref{diffdetinv2} that uniquely characterizes $(\det A)^{-1}$ according to Corollary \ref{detunique}:
\begin{eqnarray*}
\sum_{j=1}^mA^2_{kj}\partial_{\theta_{ij}}(1+\frac{1}{2}\mbox{tr}\hat{\theta}\theta)&=&-\hat{\theta}_{ki}+(\hat{\theta}\theta\hat{\theta})_{ki}\\
&=&-\hat{\theta}_{ki}+\sum_{j=1}^m\theta_{1k}\theta_{2j}\theta_{1j}J_{1i}-\sum_{j=1}^m\theta_{2k}\theta_{1j}\theta_{2j}J_{2i}\\
&=&-\hat{\theta}_{ki}\left(1+\frac{1}{2}\mbox{tr}\hat{\theta}\theta\right),
\end{eqnarray*}
which proves the lemma.
\end{proof}

\begin{example}
In case $n=1$ the invariant integral on $OSp(m|2)$ is given by
\begin{eqnarray*}
\int_{OSp(m|2)_0}\cdot&=&\int_{O(m)\times Sp(2)_0}\int_{B_{2m}}\left(1+\sum_{j=1}^m\theta_{1j}\theta_{2j}\right)\cdot.
\end{eqnarray*}
\end{example}
This implies $\int_{OSp(m|2)_0}1=0$ if $m>1$, which is required since, in that case, not all the $OSp(m|2)$-representations are completely reducible, see \cite{MR051963}.

\begin{remark}
The same calculations as in Theorem \ref{InvInt} lead to the conclusion that the invariant integration on $OSp(m|2n)$ for functions with compact support
\begin{eqnarray*}
\int_{OSp(m|2n)}:\cO_{OSp(m|2n)}(O(m)\times Sp(2n))_c=\cC^\infty_c(O(m)\times Sp(2n))\otimes \Lambda_{2mn}\to\mR
\end{eqnarray*}
is given by
\begin{eqnarray*}
\int_{OSp(m|2n)}\cdot&=&\int_{O(m)\times Sp(2n)}\int_{B_{2mn}}\left(\det A\right)^{-1}\cdot.
\end{eqnarray*}

\end{remark}

\section{The Lie supergroup $U(p|q)$}

\subsection{Definition}

The Lie supergroup $U(p|q)$ corresponds to the pair $(U(p)\times U(q),\mathfrak{u}(p|q))$, which is the real Harish-Chandra pair
\[(\cG_0,\mathfrak{g})\subset(GL(p;\mC)\times GL(q;\mC),\mathfrak{gl}(p|q;\mC))=(\mbox{Aut}(\mC^{p})\times\mbox{Aut}(\mC^q),\,\mbox{End}(\mC^{p|q}))\]
acting on $\mC^{p|q}$ such that
\begin{eqnarray*}
L\circ\left(\overline{\pi_0}\otimes \pi_0\right)(S) =L\quad\mbox{on}\quad \mC^{p|q}\otimes\mC^{p|q} & & \forall \,S\in \cG_0,\\
L\circ(\overline{\rho^\pi}\otimes \rho^\pi)(X)=0\quad\mbox{on}\quad \mC^{p|q}\otimes\mC^{p|q} & &\forall\, X\in \mathfrak{g},
\end{eqnarray*}
with $L$ defined in Lemma \ref{metru}.

The Lie group $U(p)$ is a real group and the smooth functions $\cC^\infty(U(p))$ are real functions $U(p)\to\mR$. The matrix elements $x_{ij}$ of the fundamental representation of $U(p)$ are however complex-valued functions, $x_{ij}\in \cC^\infty(U(p))\oplus \,i\,\cC^\infty(U(p))$. The algebra generated by $x_{ij}$ and $\overline{x}_{ij}$ (subject to the relations $\sum_{i}\overline{x}_{ij}x_{ik}=\delta_{jk}$) can still be seen as a real algebra, although it is embedded in $\mC\otimes\cC^\infty(U(p))$. Nevertheless, the invariant integral on $U(p)$,
\begin{eqnarray*}
\int_{U(p)}&:&\cC^\infty(U(p))\to\mR
\end{eqnarray*}
still satisfies the property that $\int_{U(p)}$ evaluated on elements of the algebra generated by $x_{ij}$ and $\overline{x}_{ij}$ gives real values.

Similar to the case $OSp(m|2n)$, we can again construct the appropriate real super bialgebra $\cA$ for $U(p|q)$. However, as for the Hopf algebra of functions on $U(p)$, the algebra generated by the matrix elements for $U(p|q)$ corresponds to the product of the algebra of functions and the complex numbers. The algebra $\mC\otimes \cA$ is therefore generated by $X_{jk}=X_{jk}^1+iX_{jk}^2$ and their complex conjugates $\overline{X_{jk}}=X_{jk}^1-iX_{jk}^2$ subject to some relations. So $\cA$ is generated by $\{X_{jk}^1,X_{jk}^2\}$ subject to those relations. The comultiplication on $\cA$ is defined by equation \eqref{defcomult} and the condition $\Delta(\overline{f})=\overline{\Delta(f)}$ for $f\in\mC\otimes \cA$.

Again we write $X$ as a block matrix
 $X=\left( \begin{array}{cc} X_{I,I}&X_{I,II}\\  \vspace{-3.5mm} \\X_{II,I}&X_{II,II} \end{array}\right)$
according to the gradation. Similarly to Section \ref{Definition1}, the natural representation of the Harish-Chandra pair $(U(p)\times U(q),\mathfrak{u}(p|q))$ leads to 3 independent relations for the submatrices:
\begin{eqnarray}
\nonumber
X_{I,I}^\dagger X_{I,I}+iX_{II,I}^\dagger X_{II,I}&=&I_p,\\
\label{defrelations2}
-X_{I,I}^\dagger X_{I,II}+iX_{II,I}^\dagger X_{II,II}&=&0,\\
\nonumber
-X_{I,II}^\dagger X_{I,II}+iX_{II,II}^\dagger X_{II,II}&=&i I_q.
\end{eqnarray}

According to Section \ref{overzicht}, we need to calculate an embedding for the generators $X_{ij}$ into the algebra of functions on the supergroup $U(p|q)$. Equation \eqref{sheafsplit} implies that the supermanifold of $U(p|q)$ is given by
\begin{eqnarray*}
\underline{U(p|q)}=(U(p)\times U(q),\cC^\infty_{U(p)\times U(q)}\otimes \Lambda_{2pq})
\end{eqnarray*}
since $\mbox{dim}_\mR(\mathfrak{u}(p|q)_1)=2pq$. Therefore we must look for an embedding
\begin{eqnarray*}
X_{ij}&\in&\mC\otimes\cC^\infty(U(p)\times U(q))\otimes \Lambda_{2pq}.
\end{eqnarray*}

We introduce $2pq$ independent real Grassmann variables, $\{\psi_{jk}^1|1\le j\le q,\, 1\le k\le p\}$ and $\{\psi_{jk}^2|1\le j\le q,\, 1\le k\le p\}$. We add them up into $pq$ complex Grassmann variables labeled as 
\[\{\psi_{jk}=\psi_{jk}^1+i\psi_{jk}^2;j=1,\cdots,q;k=1,\cdots,p\}\]
which leads to the $q\times p$ matrix $\psi$. The real Grassmann algebra generated by $\{\psi_{jk}^a\}$ is denoted by $\Lambda_{2pq}$. Define the $p\times p$ matrix $A$ and the $q\times q$ matrix $B$ with entries in the even part of $\mC\otimes\Lambda_{2pq}$, $\mC\otimes\Lambda_{2pq}^{(ev)}$ as Taylor expansions,
\begin{eqnarray}
\label{defAB2}
A=\sqrt{I_p-i\psi^\dagger \psi}&\mbox{and}&B=\sqrt{I_{q}-i\psi \psi^\dagger}.
\end{eqnarray}
Note that we do not use complex Grassmann variables, just complexified real Grassmann variables, therefore $\overline{\overline{\psi_{jk}}}=\psi_{jk}$ and $\overline{\psi_{jk}\psi_{il}}=\overline{\psi_{jk}}\,\,\overline{\psi_{il}}$. In particular this means that $i\overline{\psi_{jk}}\psi_{jk}=-2\psi^1_{jk}\psi^2_{jk}$ is real.

Similarly to Theorem \ref{Xifo} the following embedding of the bialgebra $\cA$ in the algebra of functions on $\underline{U(p|q)}$ can be calculated, which will be needed to calculate equation \eqref{GenexprInt} for $U(p|q)$.
.
\begin{theorem}
\label{Xifo2}
Consider the matrices $x\in \mC\otimes\left[\cC^\infty(U(p))\right]^{p\times p}$ and $y \in \mC\otimes\left[\cC^\infty(U(q))\right]^{q\times q}$ of matrix elements of the fundamental representation of $U(p)$ and $U(q)$. The matrix
\begin{eqnarray*}
X&=&\left( \begin{array}{cc} X_{I,I}&X_{I,II}\\  \vspace{-3.5mm} \\X_{II,I}&X_{II,II}
\end{array}
 \right)\in\mC\otimes\left[\cC^\infty\left( U(p) \right)\otimes \cC^\infty(U(q))\otimes \Lambda_{2pq}\right]^{(p+q)\times (p+q)}
\end{eqnarray*}
defined by
\begin{eqnarray*}
X_{I,I}=xA & &X_{I,II}=ix\psi^\dagger y\\
X_{II,I}=\psi & &X_{II,II}=By
\end{eqnarray*}
satisfies the relations of $U(p|q)$ in equation \eqref{defrelations2}.
\end{theorem}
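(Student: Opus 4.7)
The plan is to mimic the proof of Theorem \ref{Xifo} by verifying each of the three relations in \eqref{defrelations2} via direct substitution, using two auxiliary facts about $A$ and $B$ that serve as the unitary analogues of \eqref{eigB1}--\eqref{eigB2}.

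First I would establish that $A$ and $B$ are Hermitian. Under the conjugation convention $\overline{\psi_{jk}\psi_{il}} = \overline{\psi_{jk}}\,\overline{\psi_{il}}$ declared just before the theorem, a short entrywise computation shows that $i\psi^\dagger\psi$ and $i\psi\psi^\dagger$ are Hermitian (the sign picked up from reversing two odd factors is absorbed by the factor $i$). Since the Taylor series for $\sqrt{1-t}$ has real coefficients, $A = \sqrt{I_p - i\psi^\dagger\psi}$ and $B = \sqrt{I_q - i\psi\psi^\dagger}$ inherit Hermiticity, i.e.\ $A^\dagger = A$ and $B^\dagger = B$.

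Next I would derive the intertwining relation $A\psi^\dagger = \psi^\dagger B$, the replacement for \eqref{eigB2}. The trivial associativity identity $\psi^\dagger\psi\cdot\psi^\dagger = \psi^\dagger\cdot\psi\psi^\dagger$ gives $A^2\psi^\dagger = \psi^\dagger B^2$, and iterating yields $f(A^2)\psi^\dagger = \psi^\dagger f(B^2)$ for any formal power series $f$; taking $f(t) = \sqrt{t}$ produces the relation. With these two facts, each of the three relations of \eqref{defrelations2} reduces to routine algebra: the first becomes $A x^\dagger x A + i\psi^\dagger\psi = A^2 + i\psi^\dagger\psi = I_p$; the second becomes $-iA\psi^\dagger y + i\psi^\dagger B y = i(\psi^\dagger B - A\psi^\dagger)y = 0$; and the third becomes $-y^\dagger\psi\psi^\dagger y + iy^\dagger B^2 y = y^\dagger\bigl(-\psi\psi^\dagger + i(I_q - i\psi\psi^\dagger)\bigr) y = iI_q$.

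The only real obstacle is bookkeeping: I must confirm that Hermitian conjugation of each specific product of Grassmann-valued matrices behaves as $(MN)^\dagger = N^\dagger M^\dagger$. Under the paper's non-reversing conjugation convention, this identity holds whenever the complex conjugates of the entries of $M$ and $N$ commute past each other, which is the case for every product appearing here, since $x,y,A,B$ all have even entries and the odd matrix $\psi^\dagger$ is always flanked by even matrices in the expressions $xA$, $x\psi^\dagger y$, $By$. Consequently the supertranspose subtleties encountered in the orthosymplectic case do not intrude, and the computation proceeds as above.
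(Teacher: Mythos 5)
Your proposal is correct and follows exactly the route the paper intends: the published proof simply states that the techniques of Theorem \ref{Xifo} carry over, and your verification of the three relations in \eqref{defrelations2} via the Hermiticity of $A$ and $B$ and the intertwining identity $A\psi^\dagger=\psi^\dagger B$ (the unitary analogues of \eqref{eigB1}--\eqref{eigB2}), together with the check that $(MN)^\dagger=N^\dagger M^\dagger$ is legitimate for the products occurring, is precisely that argument spelled out.
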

\begin{proof}
The exact same techniques as in the proof of Theorem \ref{Xifo} can be used.
\end{proof}

Again the multiplication introduced from this embedding will turn the supermanifold $\underline{U(p|q)}$ into a Lie supergroup with Harish-Chandra pair $(U(p|q),\mathfrak{u}(p|q))$.
\begin{theorem}
\label{defsgr2}
The Lie supergroup with supermanifold $\underline{U(p|q)}$ equipped with the multiplication $\mu:\underline{U(p|q)}\otimes \underline{U(p|q)} \to \underline{U(p|q)}$ and the involutive superdiffeomorphism $\nu:\underline{U(p|q)} \to \underline{U(p|q)}$ given below is the unitary supergroup $U(p|q)$. The multiplication $\mu=(\mu_0,\mu^\sharp)$ is given by
\begin{eqnarray*}
\mu^\sharp(X_{ij})&=&\sum_{k=1}^{m+2n}(-1)^{([i]+[k])([k]+[j])}X_{ik}\otimes X_{kj},
\end{eqnarray*}
for $X_{ij}$ defined in Theorem \ref{Xifo2} with the property 
\[
\mu^\sharp(\overline{f})=\overline{\mu^\sharp(f)} \quad \text{for any $f\in \mC\otimes\cC^\infty\left( U(p) \times U(q)\right)\otimes \Lambda_{2pq}$}.
\]
The involutive superdiffeormorphism $\nu=(\nu_0,\nu^\sharp)$ is defined by
\begin{eqnarray*}
\nu^\sharp(X_{I,I})= X_{I,I}^\dagger & & \nu^\sharp (X_{I,II})=-iX_{II,I}^\dagger\\
\nu^\sharp(X_{II,I})=-iX^\dagger_{I,II} & &\nu^\sharp (X_{II,II})=X_{II,II}^\dagger.
\end{eqnarray*}
\end{theorem}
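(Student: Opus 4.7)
The plan is to mirror the strategy used for Theorem \ref{defsgr} in the orthosymplectic case, exploiting the equivalence of categories between Lie supergroups and super Harish-Chandra pairs. First I would check that the prescription for $\mu^\sharp$ on the matrix elements $X_{ij}$ of Theorem \ref{Xifo2}, together with the reality condition $\mu^\sharp(\overline{f})=\overline{\mu^\sharp(f)}$, actually extends to a well-defined superalgebra morphism on all of $\mC\otimes\cC^\infty(U(p)\times U(q))\otimes\Lambda_{2pq}$. Since the matrix elements $X_{ij}$ (together with their conjugates) generate the subalgebra containing $x_{ij}, \overline{x_{ij}}, y_{ij}, \overline{y_{ij}}, \psi_{jk}^{a}$, and since the blocks $A=\sqrt{I_p-i\psi^\dagger\psi}$ and $B=\sqrt{I_q-i\psi\psi^\dagger}$ are invertible even elements, one can solve for the comultiplication on each of $x$, $y$, $\psi$ by formulas analogous to \eqref{coacGrass}--\eqref{coacy}. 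Compatibility with the defining relations \eqref{defrelations2} must be verified, but this is exactly what the coassociativity of matrix multiplication guarantees: applying $\mu^\sharp$ to both sides of, say, $X_{I,I}^\dagger X_{I,I}+iX_{II,I}^\dagger X_{II,I}=I_p$ gives the same identity after a bookkeeping of the sign factors $(-1)^{([i]+[k])([k]+[j])}$.

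Next I would check that $\mu$ is a proper supergroup multiplication: coassociativity reduces to the associativity of the $(p+q)\times(p+q)$ matrix product and follows from a direct computation on generators, just as in the orthosymplectic case. For the antipode, I would verify the identities $\mu\circ(id_\cG\otimes\nu)\circ\rho=\delta_{e_\cG}\circ C=\mu\circ(\nu\otimes id_\cG)\circ\rho$ of Definition \ref{defLiegr} by evaluating them on the generators $X_{ij}$. Using the formulas for $\nu^\sharp$ on the four blocks, these reduce to the matrix identities $X^\ast X=I_{p+q}=X X^\ast$ in the sense of the super adjoint, which in block form are precisely the relations \eqref{defrelations2} (and their right-handed versions, obtained by complex conjugation and transposition). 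Also, $\nu^\sharp$ commutes with complex conjugation so it descends to the real sheaf, and its involutivity follows at once from $(\cdot)^{\dagger\dagger}=id$.

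Having established that $(\underline{U(p|q)},\mu,\nu)$ is a Lie supergroup, the final step is to identify its Harish-Chandra pair as $(U(p)\times U(q),\mathfrak{u}(p|q))$. The underlying Lie group is $U(p)\times U(q)$ by construction of the supermanifold. The associated Lie superalgebra is obtained as in Lemma \ref{defLiealg2}, by taking the invariant derivations built from left translations and Grassmann derivatives in $\psi_{jk}^a$. A calculation strictly parallel to the one behind Theorems \ref{liealg1} and \ref{exposp} (repeated here with $g$ replaced by $h$, $J$ replaced by $iI_q$, and $\hat\theta$ replaced by $i\psi^\dagger$) shows that these invariant derivations, when evaluated on the $X_{ij}$, satisfy the bracket relations of $\mathfrak{u}(p|q)$ coming from Lemma \ref{metru}. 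The super dimension $p^2+q^2|2pq$ matches, so the invariant derivations exhaust $\mathfrak{u}(p|q)$. The equivalence of categories between Lie supergroups and Harish-Chandra pairs \cite{MR0580292} then identifies the constructed object with $U(p|q)$.

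The main obstacle, as in the $OSp(m|2n)$ case, is the explicit computation showing that the odd invariant derivations generate $\mathfrak{u}(p|q)$; this requires carefully tracking the $i$'s introduced by the super adjoint $X^\ast=i^{|X|}X^\dagger$ and the factors $\overline{X_{ij}}$ coming from the reality constraint $\mu^\sharp(\overline f)=\overline{\mu^\sharp(f)}$. The ordinary routine verifications (coassociativity, the counit identity, and the antipode axioms on generators) are straightforward given the block identities \eqref{defrelations2} and the analogues of \eqref{eigB1}--\eqref{eigB2}, namely $B^\dagger=B$, $A^\dagger=A$, and $\psi^\dagger B=A\psi^\dagger$.
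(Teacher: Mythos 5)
Your proposal follows essentially the same route as the paper: the paper's proof of this theorem consists precisely in observing that the invariant derivations constructed from $\mu^\sharp$ generate $\mathfrak{u}(p|q)$ (its Theorem \ref{liealgmor}, proved by the explicit computations you outline as analogues of Theorems \ref{liealg1} and \ref{exposp}) and then invoking the equivalence of categories between Lie supergroups and super Harish-Chandra pairs. The additional verifications you list (extension of $\mu^\sharp$ to the full function algebra via the invertibility of $A$ and $B$, compatibility with \eqref{defrelations2}, coassociativity and the antipode axioms) are the routine steps the paper leaves implicit, so your argument is a correct, slightly more detailed version of the same proof.
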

\begin{proof}
This theorem follows essentially from the fact that the invariant derivations generate $\mathfrak{u}(p|q)$ (see Theorem \ref{liealgmor} below) and the equivalence of categories between Harish-Chandra pairs and Lie supergroups.
\end{proof}

\subsection{Action of the Harish-Chandra pair $(U(p)\times U(q),\mathfrak{u}(p|q))$}

Similarly to Lemma \ref{Liegract} we can calculate the action of $U(p)\times U(q)$ on $\cC^\infty(U(p)\times U(q))\otimes \Lambda_{2pq}$.
\begin{lemma}
\label{Liegract2}
The left co-action of $U(p)\times U(q)$ on $\cO_{U(p|q)}$, $\varphi^\sharp=(\delta^\sharp\otimes id^\sharp)\circ\mu^\sharp$, satisfies
\begin{eqnarray*}
\varphi^\sharp(\psi_{ij})=\sum_{k=1}^{q} y_{ik}\otimes \psi_{kj},&&\varphi^\sharp(\overline\psi_{ij})=\sum_{k=1}^{q} \overline{y}_{ik}\otimes \overline\psi_{kj}\\
\varphi^\sharp(x_{ij})=\sum_{k=1}^px_{ik}\otimes x_{kj}\quad&\mbox{and}& \varphi^\sharp(y_{ij})=\sum_{k=1}^{q} y_{ik}\otimes y_{kj}.
\end{eqnarray*}
\end{lemma}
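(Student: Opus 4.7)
The plan is to follow the exact pattern of the proof of Lemma \ref{Liegract}, replacing $x,y,\theta$ by $x,y,\psi$ and the metric $g$ by the one given in Lemma \ref{metru}, and using the definition of $\mu^\sharp$ on the matrix elements $X_{ij}$ supplied by Theorem \ref{defsgr2} together with the explicit embedding in Theorem \ref{Xifo2}. The whole lemma has four formulas, and I would treat them in the order $\psi,\overline\psi,x,y$.

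First I would deal with $\psi_{ij}$. Since $X_{II,I}=\psi$, writing out $\mu^\sharp(X_{i+p,j})$ via Theorem \ref{defsgr2} splits into a $k\le p$ sum (whose left tensor factor is $\psi_{ik}$, hence killed by $\delta^\sharp$) and a $k>p$ sum (whose left tensor factor is $(By)_{i,k-p}$, on which $\delta^\sharp$ returns $y_{i,k-p}$ because $B\to I_q$ when the Grassmann variables vanish). This immediately yields $\varphi^\sharp(\psi_{ij})=\sum_{k=1}^q y_{ik}\otimes\psi_{kj}$. The formula for $\overline\psi_{ij}$ then follows at once by applying the reality condition $\mu^\sharp(\overline f)=\overline{\mu^\sharp(f)}$ from Theorem \ref{defsgr2}.

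Next I would handle $x_{ij}$, which requires the key intermediate step of showing that $\psi^\dagger\psi$, and hence $A=\sqrt{I_p-i\psi^\dagger\psi}$ and $A^{-1}$, are $U(p)\times U(q)$-invariant under $\varphi^\sharp$. Using the formulas just proved for $\psi$ and $\overline\psi$ and the unitarity relation $y^\dagger y=I_q$, one gets
\[
\varphi^\sharp((\psi^\dagger\psi)_{ij})=\sum_{k,s,t}\overline{y_{ks}}\,y_{kt}\otimes\overline{\psi_{si}}\,\psi_{tj}=1\otimes(\psi^\dagger\psi)_{ij},
\]
so $\varphi^\sharp(A^{-1})_{jl}=1\otimes(A^{-1})_{jl}$ by the functional calculus. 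Combining this with the identity $\mu^\sharp(x_{il})=\sum_j\mu^\sharp(X_{ij})\mu^\sharp((A^{-1})_{jl})$ and observing that the nilpotent block $ix\psi^\dagger y$ is annihilated by $\delta^\sharp$, one recovers $\varphi^\sharp(x_{ij})=\sum_k x_{ik}\otimes x_{kj}$.

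The $y_{ij}$ formula is the step I expect to require the most care, because $\psi\psi^\dagger$ is \emph{not} invariant under $\varphi^\sharp$; instead one finds
\[
\varphi^\sharp((\psi\psi^\dagger)_{ij})=\sum_{s,t}y_{is}\,\overline{y_{jt}}\otimes(\psi\psi^\dagger)_{st},
\]
and I would then verify by squaring and using $y^\dagger y=I_q$ that the corresponding formula
\[
\varphi^\sharp(B^{\pm 1})_{ij}=\sum_{s,t}y_{is}\,\overline{y_{jt}}\otimes(B^{\pm 1})_{st}
\]
holds. Writing $y=B^{-1}(By)$ and applying $\varphi^\sharp$, the factors $\overline{y_{kt}}\,y_{kr}$ collapse by unitarity to $\delta_{tr}$, producing $\sum_s y_{is}\otimes(B^{-1}By)_{sj}=\sum_s y_{is}\otimes y_{sj}$, which is the desired formula. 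The only subtlety is this unitarity-driven cancellation inside the square-root expansion of $B^{-1}$; once it is established the rest is bookkeeping identical to the $OSp$ computation.
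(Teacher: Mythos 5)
Your proof is correct and is exactly the argument the paper intends: the paper states Lemma \ref{Liegract2} with only the remark ``similarly to Lemma \ref{Liegract}'', and your four steps (direct computation for $\psi$, conjugation for $\overline\psi$, invariance of $\psi^\dagger\psi$ hence of $A^{-1}$ for $x$, and the conjugation-type transformation of $\psi\psi^\dagger$ hence of $B^{-1}$ collapsing by $y^\dagger y=I_q$ for $y$) are precisely the unitary analogue of that OSp proof. No gaps; the signs in the graded tensor multiplication all work out since the odd factors only ever meet even factors in your computations.
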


Each element $P$ of the Lie algebra $\mathfrak{u}(p)$ can be identified with a matrix $P\in\mC^{p\times p}$ satisfying $P^\dagger=-P$. With slight abuse of notation we will also use $P$ for the corresponding invariant real derivation on $U(p)$, which satisfies
\begin{eqnarray*}
P(x_{ab})&=&\sum_{k=1}^pP_{ka}x_{kb}.
\end{eqnarray*}
This is a real derivation in the sense that $\overline{P(f)}=P(\overline{f})$ for $f\in\mC\otimes \cC^\infty(U(p))$. It is straightforward to derive that
\begin{eqnarray*}
P_1(P_2(x_{ab}))-P_2(P_1(x_{ab}))&=&[P_1,P_2](x_{ab}),
\end{eqnarray*}
so the Lie algebra structure is preserved by this assigning of the elements of $\mathfrak{u}(p)$ to real invariant derivations on $U(p)$. We make the same identification between the realization of $\mathfrak{u}(q)$ as anti-hermitian matrices and as invariant derivations on $U(q)$. The embedding of $\mathfrak{u}(p)$ and $\mathfrak{u}(q)$ into $\mathfrak{u}(p)\oplus\mathfrak{u}(q)$ is denoted by $\iota_1$ and $\iota_2$ respectively.

\begin{definition}
\label{liealgmordef}
Let $\tilde{\cdot}:\mathfrak{u}(p|q)\to$Der$\left(\cC^\infty(U(p)\times U(q))\otimes\Lambda_{2pq}\right)$ be a super vector space morphism defined by
\begin{eqnarray*}
\widetilde{D}&=&\left(\delta^\sharp_{0}\circ \iota_1(P)\otimes id^\sharp\right)\circ\mu^\sharp+\left(\delta^\sharp_{0}\circ \iota_2(Q)\otimes id^\sharp\right)\circ\mu^\sharp+\sum_{j=1}^p\sum_{k=1}^q(C_{jk}Y_{kj}+\overline{C}_{jk}\overline{Y}_{kj}) \end{eqnarray*}
for $D=\left( \begin{array}{cc} P&C\\  \vspace{-3.5mm} \\-iC^\dagger&Q\end{array} \right)$. The odd derivations $Y_{kj}$ and $\overline{Y}_{kj}$ are defined as
\begin{eqnarray}
\label{defY}
Y_{kj}=\left(\delta^\sharp_0\circ\partial_{\psi_{kj}}\otimes id^\sharp \right)\circ\mu^\sharp&\mbox{and}&\overline{Y}_{kj}=\left(\delta^\sharp_0\circ\partial_{\overline{\psi}_{kj}}\otimes id^\sharp \right)\circ\mu^\sharp.
\end{eqnarray}
\end{definition}

The derivations $Y_{kj}$ and $\overline{Y}_{kj}$ are not real ($Y_{kj}\in\mC\otimes$Der$\left(\cC^\infty(U(p)\times U(q))\otimes\Lambda_{2pq}\right)$), nevertheless the combinations taken in the morphism $\tilde{\cdot}$ correspond to real derivations.

 This morphism is clearly a bijection between $\mathfrak{u}(p|q)$ and the real invariant derivations on $U(p|q)$. We can prove that it corresponds to a Lie superalgebra morphism.

\begin{theorem}
\label{liealgmor}
The isomorphism of $\mR$-super vector spaces between $\mathfrak{u}(p|q)$ and the real invariant derivations on $U(p|q)$ given in Definition \ref{liealgmordef} is a Lie superalgebra morphism.
\end{theorem}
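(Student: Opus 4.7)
The plan is to reduce bracket preservation to a verification on the matrix elements $X_{\alpha\beta}$ from Theorem \ref{Xifo2}. Because these generators together with their complex conjugates generate the superalgebra $\mC\otimes\cC^\infty(U(p)\times U(q))\otimes\Lambda_{2pq}$, and any superderivation is determined by its action on a generating set, it suffices to show that
\[
\widetilde{[D_1,D_2]}(X_{\alpha\beta})\;=\;[\widetilde{D_1},\widetilde{D_2}](X_{\alpha\beta})
\]
for all $D_1,D_2\in\mathfrak{u}(p|q)$ and all $\alpha,\beta$. The corresponding equality on the $\overline{X_{\alpha\beta}}$ is then obtained by complex conjugation, using that each $\widetilde D$ is real in the sense $\overline{\widetilde D(f)}=\widetilde D(\overline f)$; this reality is itself a consequence of Definition \ref{liealgmordef}, since $\iota_1(P)$ and $\iota_2(Q)$ are real and the odd combination $C_{jk}Y_{kj}+\overline C_{jk}\overline Y_{kj}$ is manifestly invariant under conjugation.

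The central step, precisely analogous to equation \eqref{KonX} in the orthosymplectic case, is the following \emph{matrix-multiplication lemma}: for every homogeneous $D\in\mathfrak{u}(p|q)$, the invariant derivation $\widetilde D$ acts on $X_{\alpha\beta}$ by graded left multiplication by the matrix $D$, so that $\widetilde D(X_{\alpha\beta})$ is expressible purely in terms of the entries $D_{\alpha\gamma}$ and other $X_{\gamma\beta}$ with appropriate Grassmann signs. For the even part $D=\iota_1(P)\oplus\iota_2(Q)$ this reduces directly to Lemma \ref{Liegract2} after propagating through the block structure $X_{I,I}=xA$, $X_{I,II}=ix\psi^\dagger y$, $X_{II,I}=\psi$, $X_{II,II}=By$. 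For the odd part, one evaluates the sum $\sum_{j,k}(C_{jk}Y_{kj}+\overline C_{jk}\overline Y_{kj})$ on each block and simplifies using the commutation identities $\psi A=B\psi$ and $\psi^\dagger B=A\psi^\dagger$, which follow from $\psi A^2=B^2\psi$ via \eqref{defAB2} and are the direct $U(p|q)$-analogues of \eqref{eigB1}--\eqref{eigB2}.

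Granted the matrix-multiplication lemma, the theorem follows at once: the graded commutator $[\widetilde{D_1},\widetilde{D_2}]$ acts on $X_{\alpha\beta}$ by the matrix commutator $[D_1,D_2]$ applied in the fundamental representation, which by the same lemma is exactly how $\widetilde{[D_1,D_2]}$ acts. Thus $\widetilde{[D_1,D_2]}$ and $[\widetilde{D_1},\widetilde{D_2}]$ agree on every $X_{\alpha\beta}$, and hence on all of $\mC\otimes\cC^\infty(U(p)\times U(q))\otimes\Lambda_{2pq}$. The main technical obstacle is the odd-odd case of the matrix-multiplication lemma: when $D$ is odd and $X_{\alpha\beta}$ lies in one of the mixed blocks $X_{I,II}$ or $X_{II,I}$, the computation of $Y_{kj}$ on $ix\psi^\dagger y$ or $By$ requires a careful unwinding of the power series defining $A$ and $B$ together with the same delicate bookkeeping that appears in Theorem \ref{exposp}. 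No new conceptual ingredient is expected beyond the $OSp(m|2n)$ case, only patient sign-tracking.
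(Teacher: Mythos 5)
Your proposal follows essentially the same route as the paper: the paper's proof consists precisely in establishing your matrix-multiplication lemma, namely $\widetilde{D}(X_{\alpha\beta})=\sum_{\gamma=1}^{p+q}(-1)^{[\beta]([\alpha]+[\gamma])}D_{\gamma\alpha}X_{\gamma\beta}$ (via equation \eqref{eigcomplder} for the odd derivations $Y_{ij}$, $\overline{Y}_{ij}$ and the analogous formulas for the even ones), and then concluding, exactly as you do, that bracket preservation on the generators follows from this matrix form of the action. The only difference is tactical: the power-series unwinding of $A$ and $B$ you anticipate is unnecessary, since in $Y_{kj}=\left(\delta^\sharp_0\circ\partial_{\psi_{kj}}\otimes id^\sharp\right)\circ\mu^\sharp$ the evaluation $\delta^\sharp_0$ kills everything beyond the terms linear in $\psi$ in the left tensor factor, so \eqref{eigcomplder} is an almost immediate computation rather than a Theorem \ref{exposp}-style calculation.
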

\begin{proof}
A straightforward calculation shows that the relations
\begin{eqnarray}
\label{eigcomplder}
Y_{ij}(X_{\alpha\beta})=(-1)^{[\beta]}\delta_{i+p,\alpha}X_{j\beta}&\mbox{ and }\overline{Y}_{ij}(X_{\alpha\beta})=-i(-1)^{[\beta]}\delta_{j\alpha}X_{i+p,\beta}.
\end{eqnarray}
hold for the odd invariant derivations. For each matrix $C\in\mC^{p\times q}$ we assign the matrix $\hat{C}\in\mC^{(p+q)\times (p+q)}$, given by\[\hat{C}=\left( \begin{array}{cc} 0&C\\  \vspace{-3.5mm} \\-iC^\dagger&0
\end{array}
 \right).\]
Using this, the result \eqref{eigcomplder} can be rewritten as
\begin{eqnarray*}
\sum_{i=1}^q\sum_{j=1}^p\left(C_{ji}Y_{ij}+\overline{C}_{ji}\overline{Y}_{ij}\right)(X_{\alpha\beta})&=&(-1)^{[\beta]}\sum_{\gamma=1}^{p+q}\hat{C}_{\gamma\alpha}X_{\gamma\beta}.
\end{eqnarray*}
The even derivations satisfy
\begin{eqnarray*}
\widetilde{\iota_1(P)}(X_{\alpha\beta})=\sum_{k=1}^p\left(\iota_1(P)\right)_{k\alpha}X_{k\beta}&\mbox{and}&\widetilde{\iota_2(Q)}(X_{\alpha\beta})=\sum_{l=1}^q\left(\iota_2(Q)\right)_{l+p,\alpha}X_{l+p,\beta}.
\end{eqnarray*}
Putting these results together we obtain for $D\in\mathfrak{u}(p|q)\subset\mC^{(p+q)\times (p+q)}$,
\begin{eqnarray*}
\widetilde{D}(X_{\alpha\beta})&=&\sum_{\gamma=1}^{p+q}(-1)^{[\beta]([\alpha]+[\gamma])}D_{\gamma\alpha}X_{\gamma\beta}
\end{eqnarray*}
The theorem follows from this result.
\end{proof}

All derivations can be expressed in terms of the elements of $\mathfrak{u}(p)\oplus \mathfrak{u}(q)$ and the Grassmann derivatives. The even derivations satisfy
\begin{eqnarray*}
\widetilde{\iota_1(P)}=\iota_1(P)&\mbox{and}&\widetilde{\iota_2(Q)}=\iota_2(Q)+\sum_{ij=1}^q\sum_{k=1}^pQ_{ij}\left(\psi_{ik}\partial_{\psi_{jk}}-\overline{\psi}_{jk}\partial_{\overline{\psi}_{ik}}\right).
\end{eqnarray*}
In order to calculate the expression for the odd derivations we introduce the complex valued invariant derivations $S_{ij}$ on $U(p)$, $1\le i,j\le p$, given by
\begin{eqnarray*}
S_{ij}(x_{ab})&=&\delta_{ja}x_{ib}.
\end{eqnarray*}
These derivations satisfy $\overline{S_{ij}}=-S_{ji}$. The derivations $P\in\mathfrak{u}(p)$ correspond to $P=\sum_{ij=1}^pP_{ij}S_{ij}$, which again shows that $\overline{P}=P$. The corresponding (complex) invariant derivations on $U(q)$ are denoted by $T_{ij}$. The derivations $S_{ij}$ and $T_{ij}$ are non-real linear combinations of elements of $\mathfrak{u}(p)$ and $\mathfrak{u}(q)$ and form a basis for the $\mC$-vector space of complex invariant derivations on $U(p)$ and $U(q)$. In fact these spaces correspond to $\mathfrak{gl}(p;\mC)$ and $\mathfrak{gl}(q;\mC)$.
\begin{lemma}
\label{expressionY1}
For $1\le i\le q$ and $1\le j\le p$, the invariant derivations $Y_{ij}$ satisfy the relation
\begin{eqnarray*}
Y_{ij}&=&\sum_{k=1}^p(xA)_{jk}\partial_{\psi_{ik}}+\sum_{k,l=1}^{p}f_{ij}^{kl}S_{kl}+\sum_{s,t=1}^qg^{st}_{ij}T_{st}
\end{eqnarray*}
with
\begin{eqnarray*}
f_{ij}^{kl}&=&-\sum_{a,b=1}^p(xA)_{ja}\left(\partial_{\psi_{ia}}(xA)_{kb}\right) (A^{-1}x^{-1})_{bl},\\
g_{ij}^{st}&=&-\sum_{a=1}^p\sum_{r=1}^q(xA)_{ja}\left(\partial_{\psi_{ia}}(B_{rt})\right)B^{-1}_{sr}-iB^{-1}_{si}(x\psi^\dagger)_{jt},
\end{eqnarray*}
with $A$ and $B$ defined in equation \ref{defAB2}.
\end{lemma}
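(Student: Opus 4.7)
The strategy parallels the proof of Theorem \ref{exposp}. Both sides of the claimed identity are $\mC$-linear derivations of the superalgebra $\mC\otimes\cC^\infty(U(p)\times U(q))\otimes\Lambda_{2pq}$; since $A$ and $B$ are invertible, it therefore suffices to verify that the two sides agree on the generators $\psi_{mn}$, $(xA)_{mn}$, $(By)_{mn}$ (and their conjugates, which are handled by a parallel but separate short computation relying on $\mu^\sharp(\overline f)=\overline{\mu^\sharp f}$ from Theorem \ref{defsgr2}). From \eqref{eigcomplder} the left-hand side acts on these generators as
\[
Y_{ij}(\psi_{mn})=\delta_{im}(xA)_{jn},\qquad Y_{ij}((xA)_{mn})=0,\qquad Y_{ij}((By)_{mn})=-i\delta_{im}(x\psi^\dagger y)_{jn}.
\]

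I would verify the proposed right-hand side case by case. On $\psi_{mn}$ the derivations $S_{kl}$ and $T_{st}$ both annihilate $\psi$, and the Grassmann-derivative term immediately produces $\delta_{im}(xA)_{jn}$. On $(xA)_{mn}$ the $T_{st}$-term vanishes since $xA$ is independent of $y$; combined with $S_{kl}((xA)_{mn})=\delta_{lm}(xA)_{kn}$, the vanishing condition becomes the linear equation $\sum_k f_{ij}^{km}(xA)_{kn}=-\sum_k (xA)_{jk}\partial_{\psi_{ik}}((xA)_{mn})$, which is solved by right-multiplication with $(xA)^{-1}=A^{-1}x^{-1}$ and yields the stated formula for $f_{ij}^{kl}$. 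On $(By)_{mn}$ the $S_{kl}$-term drops out and $T_{st}((By)_{mn})=B_{mt}y_{sn}$; matching against the required value $-i\delta_{im}(x\psi^\dagger y)_{jn}$ gives a linear equation for $g_{ij}^{st}$ whose solution, obtained by right-multiplying with $B^{-1}$ and $y^\dagger$ and simplifying the contribution $\sum_k(xA)_{jk}\partial_{\psi_{ik}}((By)_{mn})$ via the unitary analogue of \eqref{eigB2}, namely $\psi^\dagger B=A\psi^\dagger$ (an immediate consequence of the Taylor expansions of $A=\sqrt{I_p-i\psi^\dagger\psi}$ and $B=\sqrt{I_q-i\psi\psi^\dagger}$), is precisely the formula claimed.

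The main technical obstacle is the $(By)_{mn}$ case: differentiating the Taylor series of $B$ with $\partial_{\psi_{ik}}$ generates $\overline\psi$-factors via the graded Leibniz rule, and the resulting terms must be regrouped into a multiple of $B_{mt}y_{sn}$ to extract $g_{ij}^{st}$. The identity $\psi^\dagger B=A\psi^\dagger$ plays the organising role that $\hat\theta B=A\hat\theta$ played in the orthosymplectic calculation, making the bookkeeping tractable. A secondary issue — verifying agreement on the conjugate generators $\overline\psi,\overline{xA},\overline{By}$ — reduces to showing that the lowest-order nontrivial terms of $\overline A$ and $\overline B$ begin at quadratic order in $\psi,\overline\psi$, so that $\delta^\sharp_0\partial_{\psi_{ij}}\overline A$ and $\delta^\sharp_0\partial_{\psi_{ij}}\overline B$ contribute compatibly.
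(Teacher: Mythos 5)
Your argument is correct and is essentially the paper's own proof read in reverse: the paper writes $Y_{ij}$ as a general element of the derivation module spanned by $\partial_{\psi_{sk}},\partial_{\overline{\psi}_{sk}},S_{kl},T_{st}$ and fixes the coefficients by evaluating on $\overline{\psi}$, $\psi$, $xA$ and $By$ via \eqref{eigcomplder} and $\mu^\sharp(\overline{f})=\overline{\mu^\sharp(f)}$, which produces exactly the two linear equations you solve for $f_{ij}^{kl}$ and $g_{ij}^{st}$ by multiplying with $(xA)^{-1}$, $B^{-1}$ and $y^{-1}$. The only cosmetic difference is that your extra checks on $\overline{xA}$, $\overline{By}$ (and the identity $\psi^\dagger B=A\psi^\dagger$) are unnecessary: since both sides lie in the module generated by $S_{kl}$, $T_{st}$, $\partial_{\psi}$, $\partial_{\overline{\psi}}$ and $x$, $y$, $A$, $B$ are invertible, agreement on $\psi$, $\overline{\psi}$, $xA$, $By$ already forces equality, which is precisely how the paper argues.
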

\begin{proof}
Since $Y_{ij}$ is a derivation on $U(p|q)$ it has to be of the form
\begin{eqnarray*}
\sum_{k=1}^p\sum_{s=1}^qh_{ij}^{sk}\partial_{\psi_{sk}}+\sum_{k,l=1}^{p}f_{ij}^{kl}S_{kl}+\sum_{s,t=1}^qg^{st}_{ij}T_{st}+\sum_{k=1}^p\sum_{s=1}^qk_{ij}^{sk}\partial_{\overline{\psi}_{sk}}
\end{eqnarray*}
with $h_{ij}^{sk}$, $f_{ij}^{kl}$, $g^{st}_{ij}$ and $k_{ij}^{sk}$ elements of $\mC\otimes\cC^\infty(U(p)\times U(q))\otimes \Lambda_{2pq}$. Since
\begin{eqnarray*}
\mu^\sharp\left(\overline{\psi}_{sk}\right)=\overline{\mu^\sharp\left(\psi_{sk}\right)}&=&\sum_{j=1}^p\overline{\psi}_{sj}\otimes \overline{(xA)_{jk}}+\sum_{t=1}^q\overline{(By)_{st}}\otimes\overline{\psi}_{tk},
\end{eqnarray*}
we find that $Y_{ij}(\overline{\psi}_{sk})=0$, which implies $k_{ij}^{sk}=0$. Equation \eqref{eigcomplder} implies that $Y_{ij}(\psi_{sk})=\delta_{is}(xA)_{jk}$, therefore $h_{ij}^{sk}=\delta_{is}(xA)_{jk}$ holds. Equation \eqref{eigcomplder} also implies $Y_{ij}((xA)_{ab})=0$ holds, which leads to the relation
\begin{eqnarray*}
\sum_{k=1}^p(xA)_{jk}\left(\partial_{\psi_{ik}}(xA)_{ab}\right)+\sum_{l=1}^{p}f_{ij}^{al}(xA)_{lb}&=&0.
\end{eqnarray*}
This defines $f^{kl}_{ij}$. Finally equation \eqref{eigcomplder} implies $Y_{ij}\left((By)_{ab}\right)=-i\delta_{ia}(x\psi^\dagger y)_{jb}$, therefore to the equation
\begin{eqnarray*}
\sum_{k=1}^p(xA)_{jk}\partial_{\psi_{ik}}\left((By)_{ab}\right)+\sum_{s,t=1}^qg^{st}_{ij}B_{as}y_{tb}&=&-i\delta_{ia}(x\psi^\dagger y)_{jb}
\end{eqnarray*}
must hold, which defines $g^{st}_{ij}$.
\end{proof}

This lemma also implies the expression for the derivation $\overline{Y}_{ij}$ by complex conjugation. The corresponding real invariant derivations are given by $Y_{ij}+\overline{Y}_{ij}$ and $i(Y_{ij}-\overline{Y}_{ij})$.

\subsection{Invariant integration on $U(p|q)$}

The following corollary is essential to calculate the invariant integral on $U(p|q)$.
\begin{corollary}
\label{expressionY2}
The invariant derivation $Y_{ij}$ from equation \eqref{defY} and Lemma \ref{expressionY1} satisfies the relation
\begin{eqnarray*}
Y_{ij}&=&\sum_{k=1}^p\partial_{\psi_{it}}(xA)_{jt}+\sum_{k,l=1}^{p}S_{kl}f_{ij}^{kl}+\sum_{s,t=1}^qT_{st}g^{st}_{ij}
\end{eqnarray*}
with the functions $f_{ij}^{kl}$ and $g^{st}_{ij}$ as defined in Lemma \ref{expressionY1}.
\end{corollary}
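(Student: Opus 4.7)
The Corollary merely rewrites the formula of Lemma \ref{expressionY1} with the coefficient functions moved to the right of the derivations; the two expressions represent the same super-derivation precisely when a certain correction term vanishes. My plan is to reduce the claim to this scalar identity and then verify it algebraically.

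Let $L_v$ denote left multiplication by a function $v$ on $\cC^\infty(U(p)\times U(q)) \otimes \Lambda_{2pq}$. The graded Leibniz rule yields
\[
\partial_{\psi_{ik}}\circ L_{(xA)_{jk}} - L_{(xA)_{jk}}\circ \partial_{\psi_{ik}} = L_{\partial_{\psi_{ik}}((xA)_{jk})}
\]
(using that $\partial_{\psi_{ik}}$ is odd and $(xA)_{jk}$ is even), and analogously $[S_{kl},L_{f_{ij}^{kl}}] = L_{S_{kl}(f_{ij}^{kl})}$ and $[T_{st},L_{g_{ij}^{st}}] = L_{T_{st}(g_{ij}^{st})}$ (both $S_{kl}$ and $T_{st}$ being even derivations). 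Summing, the Corollary expression minus the Lemma expression equals $L_\Delta$, where
\[
\Delta \;=\; \sum_{k=1}^p \partial_{\psi_{ik}}((xA)_{jk}) \;+\; \sum_{k,l=1}^p S_{kl}(f_{ij}^{kl}) \;+\; \sum_{s,t=1}^q T_{st}(g_{ij}^{st}).
\]
Hence the Corollary is equivalent to the identity $\Delta=0$.

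To establish $\Delta=0$ I would substitute the explicit formulas for $f_{ij}^{kl}$ and $g_{ij}^{st}$ from Lemma \ref{expressionY1} and evaluate using $S_{kl}(x_{ab})=\delta_{la}x_{kb}$, the derived rule $S_{kl}((x^{-1})_{ab}) = -(x^{-1})_{al}\delta_{kb}$ (obtained by differentiating $x\cdot x^{-1}=I_p$), the analogues for $T_{st}$ on $y$ and $y^{-1}$, and the fact that $S_{kl}$ and $T_{st}$ annihilate all Grassmann variables. The $\psi$-derivatives of $A$ are controlled by $A^2 = I_p - i\psi^\dagger\psi$, which gives $(\partial_{\psi_{ik}}A)A + A(\partial_{\psi_{ik}}A) = -i\partial_{\psi_{ik}}(\psi^\dagger\psi)$, and similarly for $B$ via $B^2 = I_q - i\psi\psi^\dagger$. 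After systematic index bookkeeping, the $S_{kl}(f_{ij}^{kl})$ contribution cancels the $\partial A$-dependent part of the first summand, and the $T_{st}(g_{ij}^{st})$ contribution cancels the remainder stemming from the $B$-piece of $g_{ij}^{st}$, leaving $\Delta=0$.

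The main obstacle is the careful sign- and index-tracking in this final cancellation; there is no deeper conceptual subtlety, since $\Delta$ can be interpreted as the super-divergence of the left-invariant super vector field $Y_{ij}$ with respect to the product of the Haar measures on the compact (hence unimodular) Lie groups $U(p)$ and $U(q)$ and the translation-invariant Berezin integral on $\Lambda_{2pq}$, which one expects to vanish on general grounds. At this stage of the paper the $U(p|q)$-invariant integral is not yet established, however, so the verification must proceed by the direct algebraic route sketched above.
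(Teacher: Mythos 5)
Your reduction is the right one and is the same first step as the paper's: since $(xA)_{jt}$, $f_{ij}^{kl}$ and $g_{ij}^{st}$ are even, moving the coefficients across the derivations costs exactly the operator of multiplication by
\[
\Delta=\sum_{t=1}^p\partial_{\psi_{it}}\bigl((xA)_{jt}\bigr)+\sum_{k,l=1}^pS_{kl}\bigl(f_{ij}^{kl}\bigr)+\sum_{s,t=1}^qT_{st}\bigl(g_{ij}^{st}\bigr),
\]
so the corollary is equivalent to $\Delta=0$. However, the verification of $\Delta=0$ is the entire content of the corollary, and it is not carried out; moreover the cancellation pattern you sketch is wrong. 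The term $T_{st}(g_{ij}^{st})$ vanishes identically: by Lemma \ref{expressionY1}, $g_{ij}^{st}$ is built only from $x$, $\psi$, $\overline{\psi}$ (through $A$, $B$, $B^{-1}$ and $x\psi^\dagger$) and contains no $y$ or $\overline{y}$, while $T_{st}$ differentiates only in the $y$-directions; this is precisely what the paper means by ``$T_{st}$ and $g_{ij}^{st}$ commute''. Hence the $T$-contribution cannot cancel any ``remainder stemming from the $B$-piece''. Likewise there is no splitting of the first summand into a ``$\partial A$-dependent part'' and a remainder: $x$ is $\psi$-independent, so $\sum_t\partial_{\psi_{it}}((xA)_{jt})=\sum_{t,c}x_{jc}\,\partial_{\psi_{it}}(A_{ct})$ is entirely of that form.

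What is actually needed, and what the paper's proof consists of, is the single identity
\[
\sum_{k,l=1}^pS_{kl}\bigl(f_{ij}^{kl}\bigr)=-\sum_{t=1}^p\partial_{\psi_{it}}\bigl((xA)_{jt}\bigr),
\]
i.e.\ the $S$-term must absorb the whole first summand by itself. Establishing it requires the explicit form of $f_{ij}^{kl}$, the rules $S_{kl}(x_{ab})=\delta_{la}x_{kb}$ and $S_{kl}((x^{-1})_{ab})=-(x^{-1})_{al}\delta_{kb}$, and the structure of $A$ coming from $A^2=I_p-i\psi^\dagger\psi$; none of this appears in your proposal beyond the statement that one ``would substitute and bookkeep''. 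Finally, the appeal to vanishing of the super-divergence ``on general grounds'' is, as you yourself note, circular here: the invariance of $\int_{U(p)\times U(q)}\int_{B_{2pq}}$ under $Y_{ij}$ is exactly what Theorem \ref{InvInt2} later deduces from this corollary. So the proposal sets up the correct equivalence but leaves the decisive computation unproved and misidentifies where the cancellation comes from.
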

\begin{proof}
This corollary follows from the calculation
\begin{eqnarray*}
\sum_{k,l=1}^pS_{kl}(f_{ij}^{kl})&=&-\sum_{k=1}^p\partial_{\psi_{ik}}\left((xA)_{jk}\right)
\end{eqnarray*}
and the fact that $T_{st}$ and $g_{ij}^{st}$ commute.
\end{proof}

We define the Berezin integral on $\Lambda_{2pq}$ as
\begin{eqnarray*}
\int_{B_{2pq}}&=&\prod_{i=1}^q\prod_{j=1}^p\left(\partial_{\psi_{ij}^1}\partial_{\psi_{ij}^2}\right).
\end{eqnarray*}

The invariant integral on the supergroup $U(p|q)$ can now be calculated.
\begin{theorem}\label{InvInt2}
The unique invariant integral on $U(p|q)$,
\begin{eqnarray*}
\int_{U(p|q)}:\cO_{U(p|q)}(U(p)\times U(q))=\cC^\infty(U(p)\times U(q))\otimes \Lambda_{2pq}\to\mR
\end{eqnarray*}
is given by
\begin{eqnarray*}
\int_{U(p|q)}\cdot&=&\int_{U(p)\times U(q)}\int_{B_{2pq}}\cdot.
\end{eqnarray*}
\end{theorem}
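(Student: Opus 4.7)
The plan is to follow exactly the strategy used in the proof of Theorem \ref{InvInt}: invoke the uniqueness result (Theorem 1 in \cite{MR1845224}) to reduce the statement to verifying that the proposed linear functional $\int_{U(p|q)} = \int_{U(p) \times U(q)} \int_{B_{2pq}}$ is left-invariant, and then appeal to Lemma \ref{invHopf} to split this invariance into (i) $U(p) \times U(q)$-invariance, $(\delta_g^\sharp \otimes \int_{U(p|q)}) \circ \mu^\sharp = \int_{U(p|q)}$ for all $g \in U(p) \times U(q)$, and (ii) $\int_{U(p|q)} \circ \widetilde{D} = 0$ for every $D \in \mathfrak{u}(p|q)$, with $\widetilde{D}$ as in Definition \ref{liealgmordef}.

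For (i), the Haar measure $\int_{U(p) \times U(q)}$ takes care of the action of $g = (g_1,g_2)$ on $x$ and $y$ described in Lemma \ref{Liegract2}. For the Grassmann coordinates, the co-action is the substitution $\psi \mapsto g_2\psi$, which, written in the real coordinates $(\psi_{jk}^1,\psi_{jk}^2)$, is given column by column by the $2q \times 2q$ real block matrix $\bigl(\begin{smallmatrix}\operatorname{Re} g_2 & -\operatorname{Im} g_2 \\ \operatorname{Im} g_2 & \operatorname{Re} g_2\end{smallmatrix}\bigr)$ of determinant $|\det g_2|^2 = 1$. The top monomial in $\Lambda_{2pq}$ is therefore preserved, so $\int_{B_{2pq}}$ is invariant under this co-action, and (i) follows.

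For (ii) I would treat the even and odd cases separately. For $D$ even, $D = \iota_1(P) + \iota_2(Q)$, the explicit expressions for $\widetilde{\iota_1(P)}$ and $\widetilde{\iota_2(Q)}$ given just after Theorem \ref{liealgmor} show that the bosonic pieces $\iota_1(P)$, $\iota_2(Q)$ integrate to zero by left-invariance of the Haar measure, while the Grassmann piece $\sum Q_{ij}\bigl(\psi_{ik}\partial_{\psi_{jk}} - \overline{\psi}_{jk}\partial_{\overline{\psi}_{ik}}\bigr)$ is handled by one application of the graded Leibniz rule and the vanishing of $\int_{B_{2pq}}$ on any Grassmann total derivative: each of the two summands contributes $p\,\operatorname{tr}(Q)\int_{B_{2pq}} h$, and these cancel.

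For $D$ odd it suffices to prove $\int_{U(p|q)} Y_{ij}(f) = 0$ for all $i,j$ (the identity for $\overline{Y}_{ij}$ follows by complex conjugation). This is where Corollary \ref{expressionY2} plays the decisive role: it rewrites $Y_{ij}$ in the form $\sum_t \partial_{\psi_{it}}(xA)_{jt} + \sum_{k,l} S_{kl}\, f_{ij}^{kl} + \sum_{s,t} T_{st}\, g_{ij}^{st}$ with all derivations placed on the left. Consequently $Y_{ij}(f)$ is a sum of Grassmann total derivatives $\partial_{\psi_{it}}(\cdot) = \tfrac12(\partial_{\psi_{it}^1} - i\partial_{\psi_{it}^2})(\cdot)$, killed by $\int_{B_{2pq}}$, plus total derivatives along $S_{kl}$ and $T_{st}$, which are $\mC$-linear combinations of left-invariant vector fields on $U(p)$ and $U(q)$ and are therefore killed by the Haar integrals. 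The one non-routine step in the whole argument is this last one; the point to emphasise is that Corollary \ref{expressionY2} already absorbs the difficulty, so no compensating weight analogous to the $(\det A)^{-1}$ factor of Theorem \ref{InvInt} is needed, and $\int_{U(p|q)}$ really does take the clean product form stated.
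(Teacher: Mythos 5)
Your proposal is correct and follows essentially the same route as the paper: uniqueness from \cite{MR1845224}, reduction via Lemma \ref{invHopf} to $U(p)\times U(q)$-invariance plus annihilation by the odd generators, and the decisive use of Corollary \ref{expressionY2} to write each $Y_{ij}$ as a sum of total derivatives (Grassmann derivatives killed by $\int_{B_{2pq}}$, invariant vector fields $S_{kl}$, $T_{st}$ killed by the Haar measure), with no compensating density needed. The paper states these two steps very tersely; your write-up merely fills in the details (the unimodularity of $\psi\mapsto g_2\psi$ and the cancellation of the $p\,\mathrm{tr}(Q)$ terms for the even part), all of which check out.
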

\begin{proof}
The expression is clearly $U(p)\times U(q)$-invariant. The fact that
\begin{eqnarray*}
\int_{U(p|q)}\circ X&=&0
\end{eqnarray*}
for every $X\in\mathfrak{u}(p|q)_1$ follows immediately from Corollary \ref{expressionY2}.
\end{proof}

This expression immediately yields $\int_{U(p|q)}1=0$ if $pq>0$, a manifestation of the fact that there are non-semisimple finite dimensional $\mathfrak{u}(p|q)$-modules. Using the same arguments given in the proof of Corollary \ref{Int-nondegeneracy}, we can prove the following result.
\begin{corollary}\label{Int-nondegeneracy2}
The invariant integral on $U(p|q)$ is non-degenerate in the sense that the supersymmetric bilinear form
\begin{eqnarray*}
\left(f,g\right)&=&\int_{U(p|q)}fg
\end{eqnarray*}
on $\cO_{U(p|q)}(U(p)\times U(q))$ is non-degenerate.
\end{corollary}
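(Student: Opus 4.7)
The plan is to follow the argument of Corollary \ref{Int-nondegeneracy} essentially verbatim, with the simplification that no $\det A$ prefactor appears in the integral on $U(p|q)$. Suppose for contradiction that there exists a non-zero $f\in \cO_{U(p|q)}(U(p)\times U(q))$ such that $\int_{U(p|q)} fg=0$ for every $g$. Expand $f$ in a monomial basis of $\Lambda_{2pq}$:
\[
f=\sum_A f_A \psi_A, \qquad f_A\in \cC^\infty(U(p)\times U(q)),
\]
and choose some index $A$ for which $f_A\neq 0$ and such that $\psi_A$ has the \emph{lowest} possible degree among the non-vanishing components.

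Next I would let $\psi_B$ be the unique monomial (up to sign) such that $\psi_A\psi_B$ is a non-zero scalar multiple of the top-degree monomial in $\Lambda_{2pq}$; equivalently, $\psi_B$ is indexed by the complementary set of Grassmann variables. Choose $g=h\,\psi_B$ for an arbitrary $h\in \cC^\infty(U(p)\times U(q))$. Applying the integral formula of Theorem \ref{InvInt2} gives
\[
\int_{U(p|q)} fg=\int_{U(p)\times U(q)} \int_{B_{2pq}} \Big(\sum_{A'} f_{A'} h \, \psi_{A'}\psi_B\Big).
\]
The Berezin integral of $\psi_{A'}\psi_B$ is non-zero only when $\psi_{A'}\psi_B$ is (up to sign) the top monomial, which forces $\psi_{A'}$ to be supported on exactly the same Grassmann variables as $\psi_A$. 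By the minimality of the degree of $\psi_A$, any such $A'$ must in fact coincide with $A$. Hence, up to an overall non-zero sign,
\[
\int_{U(p|q)} fg = \pm \int_{U(p)\times U(q)} f_A\, h.
\]

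From the assumption that this vanishes for every $h\in \cC^\infty(U(p)\times U(q))$, and from the non-degeneracy of the Haar integral on the compact group $U(p)\times U(q)$ when paired with smooth functions, I would conclude $f_A=0$, contradicting the choice of $A$. The only point requiring any care is the minimality step that isolates a single contribution in the Berezin integral; after that, non-degeneracy reduces to that of the classical Haar measure, which is standard. I do not anticipate a genuine obstacle beyond bookkeeping.
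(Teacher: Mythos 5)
Your proposal is correct and is essentially the paper's own argument: the paper proves this corollary by invoking verbatim the proof of Corollary \ref{Int-nondegeneracy}, i.e.\ expanding $f$ over a monomial basis of $\Lambda_{2pq}$, pairing with $g=h\,\psi_B$ for the complementary monomial (here without any $\det A$ factor, exactly as you note), and reducing to the non-degeneracy of the Haar integral on the compact group $U(p)\times U(q)$. The minimality-of-degree step you (and the paper) include is harmless but not really needed, since a non-vanishing Berezin integral of $\psi_{A'}\psi_B$ already forces $\psi_{A'}$ to be the monomial on the complementary variable set, hence $A'=A$.
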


According to the super adjoint in Section \ref{defalg}, we define
\begin{eqnarray*}
X^\ast=\left( \begin{array}{cc} X_{I,I}^\ast&X_{II,I}^\ast\\  \vspace{-3.5mm} \\X_{I,II}^\ast&X_{II,II}^\ast\end{array} \right)&=&\left( \begin{array}{cc} X_{I,I}^\dagger&iX_{II,I}^\dagger\\  \vspace{-3.5mm} \\iX_{I,II}^\dagger&X_{II,II}^\dagger\end{array} \right).
\end{eqnarray*}
Using this definition, the relations on the matrix $X$ \eqref{defrelations2} can be rewritten as
\begin{eqnarray*}
X_{I,I}^\ast X_{I,I}+X_{II,I}^\ast X_{II,I}&=&I_p,\\
-X_{I,I}^\ast X_{I,II}+X_{II,I}^\ast X_{II,II}&=&0,\\
X_{I,II}^\ast X_{I,II}+X_{II,II}^\ast X_{II,II}&=& I_q.
\end{eqnarray*}
This means that the super algebra generated by $X_{ij}$ and $(X^\ast)_{kl}$, $\cB=$Alg$(\{X_{ij}\},\{(X^\ast)_{kl}\})$ is a real algebra. The expression in Theorem \ref{InvInt2}, shows that the integral evaluated on $\cB$ will give real values. As in equation \eqref{GenexprInt}, the formula in Theorem \ref{InvInt2} should be interpreted as
\begin{eqnarray*}
\int_{U(p|q)}f(X,X^\ast)&=&\int_{U(p)\times U(q),x\times y}\int_{B_{2pq},\psi}f(X(x,y,\psi),X^\ast(x,y,\psi)).
\end{eqnarray*}

As an illustration we calculate the case $U(1|1)$. The trivial classical result is given by $\int_{U(1),x}x^k\overline{x}^l=\delta_{kl}$. Therefore the integral
\[\int_{U(1|1)}X_{11}^{\alpha_{11}}X_{12}^{\alpha_{12}}X_{21}^{\alpha_{21}}X_{22}^{\alpha_{22}}({X}^\ast)_{11}^{\beta_{11}}({X}^\ast)_{12}^{\beta_{12}}({X}^\ast)_{21}^{\beta_{21}}({X}^\ast)_{22}^{\beta_{22}}\]
will be zero unless $\alpha_{11}+\alpha_{12}=\beta_{11}+\beta_{21}$ and $\alpha_{22}+\alpha_{12}=\beta_{22}+\beta_{21}$ hold. Because of the Berezin integral, $\psi$ and $\overline{\psi}$ must appear an equal amount of times, so $\alpha_{21}+\beta_{21}=\alpha_{12}+\beta_{12}$ needs to hold in order for the integral to be different from zero.

If we assume first $\alpha_{21}+\beta_{21}=\alpha_{12}+\beta_{12}=0$ (therefore $\alpha_{11}=\beta_{11}$ and $\alpha_{22}=\beta_{22}$), the integral reduces to
\begin{eqnarray*}
\int_{B_2}\sqrt{1-i\overline{\psi}\psi}^{\alpha_{11}+\beta_{11}}\sqrt{1+i\overline{\psi}\psi}^{\alpha_{22}+\beta_{22}}&=&\int_{B_2}(1-\frac{1}{2}i\overline{\psi}\psi)^{2\alpha_{11}}(1+\frac{1}{2}i\overline{\psi}\psi)^{2\alpha_{22}}\\
&=&\int_{B_2}(1+2\alpha_{11}\psi^2\psi^1)(1-2\alpha_{22}\psi^2\psi^1)\\
&=&2(\alpha_{11}-\alpha_{22}).
\end{eqnarray*}

The other case which gives a nonzero result is $\alpha_{21}+\beta_{21}=\alpha_{12}+\beta_{12}=1$ (with $\alpha_{11}+\alpha_{12}=\beta_{11}+\beta_{21}$ and $\alpha_{22}+\alpha_{12}=\beta_{22}+\beta_{21}$). The integral then reduces to
\begin{eqnarray*}
&&\int_{U(1|1)}x^{\alpha_{11}}(ix\overline{\psi}y)^{\alpha_{12}}(\psi)^{\alpha_{21}}y^{\alpha_{22}}\overline{x}^{\beta_{11}}(i\overline{\psi})^{\beta_{12}}(\overline{x}\psi\overline{y})^{\beta_{21}}\overline{y}^{\beta_{22}}\\
&=&i^{\alpha_{12}+\beta_{21}}\int_{B_2}(\overline{\psi})^{\alpha_{12}}(\psi)^{\alpha_{21}}(\overline{\psi})^{\beta_{12}}(\psi)^{\beta_{21}}=2(-1)^{\alpha_{21}\beta_{12}}.
\end{eqnarray*}

Summarizing, we obtained the result
\begin{eqnarray*}
&&\int_{U(1|1)}X_{11}^{\alpha_{11}}X_{12}^{\alpha_{12}}X_{21}^{\alpha_{21}}X_{22}^{\alpha_{22}}({X}^\ast)_{11}^{\beta_{11}}({X}^\ast)_{12}^{\beta_{12}}({X}^\ast)_{21}^{\beta_{21}}({X}^\ast)_{22}^{\beta_{22}}\\
&=&\delta_{\alpha_{12}+\alpha_{21}+\beta_{12}+\beta_{21},0}\,\delta_{\alpha_{11},\beta_{11}}\delta_{\alpha_{22},\beta_{22}}\,2(\alpha_{11}-\alpha_{22})\\
&+&\delta_{\alpha_{12}+\beta_{12},1}\, \delta_{\alpha_{21}+\beta_{21},1}\, \delta_{\alpha_{11}+\alpha_{12},\beta_{11}+\beta_{21}}\delta_{\alpha_{12}+\alpha_{22},\beta_{21}+\beta_{22}}\,2(-1)^{\alpha_{21}\beta_{12}}.
\end{eqnarray*}

\section{The superspace $UOSp(m|2n)$}\label{sect:UOSp}

In this section we extend our results to the more physical approach to supermanifolds. We will be very brief since this is just a reformulation of the results into another language. We use complex Grassmann variables, with the second kind of complex conjugation. This implies that for two complex Grassmann variables $\alpha$ and $\beta$, the following conjugation rules hold:
\begin{eqnarray*}
\overline{\overline{\alpha}}=-\alpha&\mbox{and}&\overline{\alpha\beta}=\overline{\alpha}\,\overline{\beta}.
\end{eqnarray*}
This implies that $\overline{(\alpha\overline{\alpha})}=\alpha\overline{\alpha}$, thus $\alpha\overline{\alpha}$ is real, and $\overline{\overline{\alpha\beta}}=\alpha\beta$. We assume a Grassmann algebra $\Lambda_{2Q}$ generated by $Q$ such complex Grassmann variables and their complex conjugates with $Q$ very large or infinity. Another way to define supergroups (see also \cite{MR2081650, MR1124825, MR0725288}) is as follows. The unitary group $U(p|q)$ consists of the matrices $X\in(\Lambda_{2Q})^{(p+q)\times (p+q)}$ (where the $p\times p$ and $q\times q$ block have entries from the even part of $\Lambda_Q$ and the $p\times q$ and $q\times p$ block from the odd part) that satisfy
\begin{eqnarray}
\label{defrelations3}
\left( \begin{array}{cc} X_{I,I}^\dagger&X_{II,I}^\dagger\\  \vspace{-3.5mm} \\X_{I,II}^\dagger&X_{II,II}^\dagger\end{array} \right)\left( \begin{array}{cc} X_{I,I}&-X_{I,II}\\  \vspace{-3.5mm} \\X_{II,I}&X_{II,II}\end{array} \right)&=&I_{p+q}.
\end{eqnarray}
Usually the minus sign appears with $X_{I,II}^\dagger$ rather than with $X_{I,II}$ but that can be obtained from a simple relabeling. Then we can introduce coordinates as follows. We introduce $q\times p$ independent complex Grassmann variables in the matrix $\psi$. These can be considered as $q\times p$ arbitrary odd elements of $\Lambda_{2Q}$. Then we find that
\begin{eqnarray*}
X_{I,I}=x\widetilde{A} & &X_{I,II}=x\psi^\dagger y\\
X_{II,I}=\psi & &X_{II,II}=\widetilde{B}y
\end{eqnarray*}
satisfy the relations \eqref{defrelations3}, with again $x$, $y$ the matrix elements of $U(p)$, $U(q)$, but now $\widetilde{A}=\sqrt{I_p-\psi^\dagger\psi}$ and $\widetilde{B}=\sqrt{I_q-\psi\psi^\dagger}$. The invariance of the integral is then expressed as
\begin{eqnarray*}
\int f(AX)=\int f(X)
\end{eqnarray*}
for all $A\in U(p|q)$. This can be rewritten in a Hopf-algebraic way. Very similarly as in our main approach to supergroups it is then proved that the integration of the form
\begin{eqnarray*}
\int_{U(p|q)}f(X,X^\dagger)&=&\int_{U(p)\times U(q),x\times y}\int_{B_{2pq},\psi}f(X(x,y,\psi), X^\dagger(x,y,\psi))
\end{eqnarray*}
is the invariant integration.

The superspace $UOSp(m|2n)$ (which is not actually a supergroup, see e.g. discussion in \cite{MR2081650} and \cite{MR2276521}) is then defined as $OSp(m|2n;\mC)\cap U(m|2n)$, so the complex matrix elements $X$ satisfy both equations \eqref{defrelations} and \eqref{defrelations3}. To obtain coordinates, we introduce $2mn$ complex Grassmann variables $\theta$ which satisfy $\overline{\theta}=-J\theta$. This leads to $mn$ independent complex Grassmann variables. The identification
\begin{eqnarray*}
X_{I,I}=xA & &X_{I,II}=x\theta^T J z\\
X_{II,I}=\theta & &X_{II,II}=Bz
\end{eqnarray*}
with $x$ the matrix elements of $O(m)$, $z$ the matrix elements of $USp(2n)$, $A=\sqrt{I_m-\theta^TJ\theta}=\sqrt{I_m-\theta^\dagger\theta}$ and $B=\sqrt{I_m-\theta\theta^TJ}=\sqrt{I_m-\theta\theta^\dagger}$, satisfies equations \eqref{defrelations} and \eqref{defrelations3}. The invariant integral on $UOSp(m|2n)$ then takes the form
\begin{eqnarray*}
\int_{UOSp(m|2n)}f(X)&=&\int_{O(m)\times USp(2n),x\times z}\int_{B_{2mn},\theta}\det(I_m-\theta^\dagger\theta)^{-1/2}\, f(X(x,z,\theta)).
\end{eqnarray*}
This is an explicit construction of the invariant measure on $UOSp(m|2n)$ implicitly used in \cite{MR2081650}.

\section{Conclusion}
We conclude the paper with a brief summary of the main results. By treating a Lie supergroup as a Harish-Chandra pair consisting of a Lie superalgebra and an ordinary Lie group, we have developed explicit formulae for the invariant integrals on the orthosymplectic Lie supergroup $OSp(m|2n)$ and unitary Lie supergroup $U(p|q)$ for all $m, n$ and $p, q$. The results are presented in Theorem \ref{InvInt} and Theorem \ref{InvInt2}.  The formulae are elegant and simple to use. For example, by using the formulae, we have proved the non-degeneracy of the integrals in Corollary \ref{Int-nondegeneracy} and Corollary \ref{Int-nondegeneracy2} with very little effort. This is a problem which eluded solution before.

In Section \ref{sect:UOSp}, we have obtained from the invariant integral on $OSp(m|2n)$ an explicit construction of the invariant measure on $UOSp(m|2n)$ implicitly used in \cite{MR2081650}. This already demonstrates the relevance of our work to the theory of random super matrices, and we expect further applications of results in the present paper to lead to significant progress in the area.

\end{document}